\numberwithin{equation}{section}
\newtheorem{theorem}{Theorem}[section]
\newtheorem{lemma}[theorem]{Lemma}
\newtheorem{corollary}[theorem]{Corollary}
\newtheorem{remark}[theorem]{Remark}
\newtheorem{proposition}[theorem]{Proposition}
\newtheorem{prob}{RH problem}[section]
\newtheorem{prob2}{$\bar{\partial}$-RH problem}[section]
\newtheorem{prob3}{$\bar{\partial}$-problem}[section]
\DeclareMathOperator*{\im}{Im}
\DeclareMathOperator*{\re}{Re}
\begin{document}

\title[Painlev\'{e}  asymptotics for  defocusing NLS equation]{ The defocusing Schr\"odinger equation with a nonzero background:  Painlev\'{e}  asymptotics  in two transition regions}
\author{Zhaoyu Wang$^1$ \  and  \  Engui FAN$^{1}$  }
\footnotetext[1]{ \  School of Mathematical Sciences  and Key Laboratory   for Nonlinear Science, Fudan   University, Shanghai 200433, P.R. China.  Corresponding E-mal: faneg@fudan.edu.cn }
\date{  }

\maketitle
\begin{abstract}

 In this paper, we address the  Painlev\'e asymptotics  in the  transition region $|\xi|:=\big|\frac{x}{2t}\big| \approx 1$ to the Cauchy problem of  the defocusing
   Schr$\ddot{\text{o}}$dinger (NLS) equation with a nonzero background.
With the  $\bar\partial$-generation   of the nonlinear steepest descent approach and double scaling limit  to
 compute the long-time
 asymptotics of the solution in two transition regions defined as
$$ \mathcal{P}_{\pm 1}(x,t):=\{ (x,t) \in \mathbb{R}\times\mathbb{R}^+, \ \ 0<|\xi-(\pm 1)|t^{2/3}\leq C\}, $$
    we find that the long-time asymptotics  of the  NLS equation  in both  transition  regions $ \mathcal{P}_{\pm 1}(x,t)$
can be expressed in terms of the   Painlev\'{e} II equation. We are also able to express the
leading term explicitly in terms of the Airy function.\\
{\bf Keywords:}    defocusing NLS equation, Riemann-Hilbert problems, steepest descent method, Painlev\'{e}  transcendents, long-time asymptotics.\\
{\bf MSC:}  35Q55; 35P25;   35Q15; 35C20; 35G25.

\end{abstract}

\tableofcontents

\quad

  \section{Introduction}
  \hspace*{\parindent}

 The present paper is concerned with  the  Painlev\'e asymptotics  in a transition region $|\xi|:=\big|\frac{x}{2t}\big| \approx 1$
for the Cauchy problem of  the defocusing   Schr$\ddot{\text{o}}$dinger  (NLS) equation with a nonzero background
	\begin{align}
&iq_t+q_{xx}-2(|q|^2-1)q=0,\label{q}\\
&	q(x,0)=q_0(x) \sim  \pm 1, \ \ x\to \pm \infty.\label{inq}
	\end{align}
It is an important model that has been discussed extensively in classical texts like the book by Faddeev and Takhtajan \cite{ft1987}.  Zakharov and Shabat first derived  the Lax pair  of the NLS equation \cite{ZS1}.  The  well-posedness of the NLS equation   with  the initial data in Sobolev spaces   was proved   \cite{Tsutsumi,Bourgain}.
Zakharov and Shabat developed  the inverse scattering transform (IST)   for the focusing NLS equation  with zero boundary conditions \cite{ZS2}.
  We mention the following works  on the long-time asymptotics of the defocusing NLS equation.
    For the initial data in  the Schwartz space $\mathcal{S}(\mathbb{R})$ and   using the IST method,  Zakharov and Manakov
  obtained the long-time asymptotics  of the  NLS equation  \cite{ZS3}.
   In 1981, Its presented a stationary phase method
   to analyze the long-time asymptotic behavior   for the NLS equation \cite{its1}.
    In 1993,  Deift and Zhou  developed a  nonlinear steepest descent method to rigorously obtain the long-time asymptotic behavior   for the mKdV equation
 \cite{sdmRHp}.   Later this method was extended  to get the leading  and high-order   asymptotic behavior for the solution of the  NLS equation (\ref{q})
 with the initial data  $q_0  \in \mathcal{S}(\mathbb{R})$  \cite{PX2,PX21}.
  Vartanian  obtained  the leading and first correlation terms in the asymptotic behavior of
 the NLS equation with finite density initial data \cite{VAH1,VAH2,VAH3}.
Under  much weaker  weighted Sobolev initial data $q_0 \in H^{1,1} (\mathbb{R})$,  Deift and Zhou gave the leading  asymptotics, where
the error is  $\mathcal{O}(t^{-1/2-\kappa}),\ \  0<\kappa<1/4$  \cite{PX3}.
 In 2008,  for the same space  $q_0\in H^{1,1}(\mathbb{R})$, Dieng and McLaughlin applied the  $\bar{\partial}$-steepest descent method to
 obtain  a sharp estimate, where
the error is  $\mathcal{O}(t^{-3/4}) $   \cite{MK}.
Jenkins  investigated the long-time/zero-dispersion limit of
the solutions to   the defocusing NLS equation  associated with the step-like  initial data   \cite{Jenkins}.
Fromm, Lenells, and Quirchmayr  studied the long-time asymptotics  for the defocusing NLS equation with  the step-like boundary condition \cite{Le}.

In the study of Painlev\'e equations, one of the important topics is the asymptotic behavior
of their solutions  \cite{Flaschka1,Jimbo1,Fokas2,Its1,Deift1}.
Especially some  Painlev\'e equations, for example, the homogeneous Painlev\'e II equation,
can be solved via a certain Riemann-Hilbert (RH) problem \cite{Fokas1}.
Although the study of the Painlev\'e equations originates from  pure mathematics,
they have found plenty of applications in many fields of mathematical physics, such
as random matrix theory, statistical physics, and integrable   systems.

The  transition asymptotic regions for  the  Korteweg-de Vries equation   were first  described in Painlev\'e transcendents
by Segur and Ablowitz in \cite{AblwzP1}.  The connection between different regions was first understood in the case of the
  modified Korteweg-de Vries equation  by Deift and Zhou \cite{DeiftP2}.
Boutet de Monvel et al. discussed the Painlev\'{e}-type asymptotics for the Camassa-Holm equation by the nonlinear steepest descent method \cite{Monvel}.
The connection  between    the  tau-function of the Sine-Gordon  reduction and the Painlev\'e III equation was given   by the RH  approach \cite{Its3}.
 Charlier and   Lenells carefully  considered the Airy and higher order Painlev\'{e} asymptotics for the mKdV equation  \cite{Charlier}.
  Recently,   Huang and  Zhang  obtained
  Painlev\'{e} asymptotics   for the whole  mKdV hierarchy \cite{Huanglin}.

In the  study of the fundamental rogue wave solutions in the limit of a large order of the focusing NLS
equation with nonzero background
   \begin{align}
&iq_t+  \frac{1}{2} q_{xx}+2q(|q|^2-1)=0,   \nonumber\\
&q(x,t) \sim   1, \ \ x\rightarrow\pm \infty,\nonumber
\end{align}
 Bilman, Ling, and Miller found that the limiting profile of
the rogue wave of  infinite order  satisfies ordinary differential equations concerning
space and time. The spatial differential equations were identified with certain members
of the Painlev\'e-III hierarchy \cite{Miller1}. Recently,
  considering  the focusing NLS equation with step-like oscillating background,
   Boutet de Monvel,   Lenells,  and Shepelsky  found
  the  asymptotics in a transition zone  between two genus-3 sectors.
   Especially   a local parametrix was constructed  by solving an RH  model problem  associated with the Painlev\'e IV equation
   \cite{Monvel5}.
In this paper, we are concerned with  the Painlev\'e transcendent to the defocusing NLS equation,
which  seems  still  unknown  to the best of our knowledge.

       \begin{figure}
	\begin{center}
		\begin{tikzpicture}[scale=1]
            \draw[blue!15, fill=blue!15] (-4.5,0)--(4.5,0)--(4.5,2.1)--(-4.5, 2.1)--(-4.5,0);
            \draw[yellow!20, fill=yellow!20] (0,0 )--(4.5,0)--(4.5,1.8)--(0,0);
           \draw[yellow!20, fill=yellow!20] (0,0 )--(-4.5,0)--(-4.5,1.8)--(0,0);
            \draw[green!20, fill=green!20] (0,0 )--(-4.2,2.1)--(4.2,2.1)--(0,0);
		\draw [-> ](-5,0)--(5,0);
		\draw [-> ](0,0)--(0,2.8);
		\node    at (0.1,-0.3)  {$0$};
		\node    at (5.26,0)  { $x$};
		\node    at (0,3.2)  { $t$};
		 \node  [below]  at (4.4,2.7) {\scriptsize $\xi=1$};
		 \node  [below]  at (-4.4,2.7) {\scriptsize $\xi=-1$};
  		\draw [red](0,0)--(-4.5,2);
		\draw [red](0,0)--(4.5,2);
	   \node  []  at (0.7,1.5) {\scriptsize $I$};
     \node  []  at (-0.7,1.5) {\scriptsize $I$};
     \node [] at (0,0.85) {\scriptsize Solitonic region};
		\node  []  at (-3.2,0.9) {\scriptsize $II$};
        \node [] at (-2.8,0.3) {\scriptsize Solitonless region};
	\node  []  at (3.2,0.9) {\scriptsize $II$};
    \node [] at (2.8,0.3) {\scriptsize Solitonless region};
      \node  []  at (4.9,2) {\tiny $III$ };
        \node  []  at (-4.9,2) {\tiny $III$};
              \node  []  at (5.1,1.7) {\tiny   Painlev\'e};
        \node  []  at (-5.1,1.7) {\tiny  Painlev\'e};
		\end{tikzpicture}
	\end{center}
	\caption{\footnotesize The space-time region of $x$ and $t$, where the green region I: $|\xi|<1$  which is solitonic region;
the yellow regions II: $|\xi|>1, $ which is solitonless region; the blue  regions III: $|\xi|\approx 1$ which is   transition  region.  }
	\label{spacetime}
\end{figure}
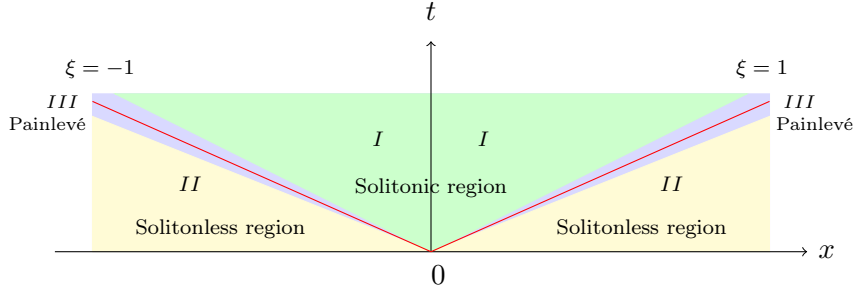

For this purpose,  we  consider  the Cauchy problem  (\ref{q})-(\ref{inq})  with finite density initial data $q_0  \in \tanh x +H^{4,4} (\mathbb{R})$
   that was  studied  by  Cuccagna and Jenkins \cite{CJ}.
They   derived the leading order approximation to  the Cauchy problem  (\ref{q})-(\ref{inq}) in the solitonic
  space-time region I:  $|\xi|<1$  (See  Figure \ref{spacetime})  by using the $\bar{\partial}$-steepest descent method  \cite{CJ}
\begin{align}
& q(x,t)=   T(\infty)^{-2} q^{sol,N}(x,t) + \mathcal{O}(t^{-1 }). \nonumber
\end{align}
  This  result  also  proves  the asymptotic stability of $N$-dark soliton solutions.  A series of  important  works on various  stability of solitons of
  the NLS equation can be found in     \cite{GS0,GS1,GS2}.
In recent years,    the $\bar{\partial}$-steepest descent method \cite{McL1,McL2}  has been successfully used to obtain  the long-time asymptotics and
 the soliton resolution  for some integrable systems  \cite{fNLS,Liu3,LJQ,YF1,YYLmch}.
For  the  solitonless  region $\mathrm{II}$:   $ |\xi| >1 $,
we  further    obtain the large-time asymptotic behavior to the problem   (\ref{q})-(\ref{inq}) in the form \cite{WF}
\begin{equation}
  q(x,t)= e^{-i\alpha(\infty)} \left( 1 +t^{-1/2} h(x,t) \right)+\mathcal{O}\left(t^{-3/4}\right). \nonumber
 \end{equation}

However,  how to   describe    the transition region $\mathrm{III}$:  $  |\xi|\approx 1  $,   proposed by  Cuccagna and Jenkins  (see page 923 in  \cite{CJ}),
seems to remain unknown to the best of our knowledge.    In order to
 solve this   problem,   in this paper,  we apply $\bar\partial$-techniques and double scaling limit method  to find the asymptotics of solution  $q(x,t)$ in two transition regions defined as
\begin{align}
\mathcal{P}_{\pm 1}(x,t):=\{ (x,t) \in \mathbb{R}\times\mathbb{R}^+, \ \ 0<|\xi -(\pm 1)|t^{2/3}\leq C\}. \label{regpm}
\end{align}

\subsection{Main results}
\hspace*{\parindent}

To state  the theorem precisely introduce the   normed spaces:
$L^{p,s}(\mathbb{R})$ defined with
$$\|q\|_{L^{p,s}(\mathbb{R})} := \|\langle x \rangle^s q\|_{L^{p}(\mathbb{R}) },  $$
where  $\langle x \rangle=\sqrt{1+x^2}$ and
$ H^{k,s} (\mathbb{R}) : = H^k(\mathbb{R}) \cap L^{2,s}(\mathbb{R}).$
The following theorem is  our  main results of this paper.

\begin{theorem} \label{th}
     Let $\{ r(z), R(z)\}$ and  $\{z_j\}_{j=0}^{N-1} $  be, respectively, the
 reflection coefficient and the discrete spectrum associated with the weighted Sobolev  initial data $q_0 \in \tanh (x)+H^{4,4}(\mathbb{R})$.
 Then the long-time asymptotics  of the   solution to the Cauchy problem (\ref{q})-(\ref{inq})  for the
 defocusing NLS  equation in two transition regions are given by the following formulas.
 \begin{itemize}
  \item[{\rm   Case I.}]  \label{th1} For $(x,t) \in \mathcal{P}_{-1}(x,t)$,
 \begin{equation}\label{q1}
  q(x,t)= e^{i\alpha(\infty)} \left(1 +\tau^{-1/3} \beta(-1) \right)+\mathcal{O}\left(t^{-1/2}\right),
 \end{equation}
where
\begin{align}
 &\alpha(\infty) = -2i \sum_{j=0}^{N-1} \log \bar{z}_j + \int_{0}^{\infty} \frac{\nu(\zeta)}{\zeta} \, \mathrm{d}\zeta, \ \ \nu(\zeta)= -\frac{1}{2\pi} \log(1-|r(\zeta)|^2), \label{ejs}\\
 &\beta(-1)= -\frac{i}{2}  \left(  u(s ) e^{i\varphi_0} + \int_s^\infty u^2(\zeta)  \mathrm{d}\zeta  \right),   \ \ s  = \frac{8}{3} (\xi +1) \tau^{\frac{2}{3}},\ \tau =\frac{3}{4}t.\nonumber
\end{align}
Here $\varphi_0=    \arg (r(-1)) $ and $u(s ) $ is a solution of the Painlev\'{e} \uppercase\expandafter{\romannumeral2} equation
\begin{equation}\label{pain2equ}
u''(s)  = 2u^3(s) + s u(s)
\end{equation}
with the asymptotics
\begin{align}\label{pain2equa}
 u(s ) \sim - |r(-1)| \rm{Ai}(s) \sim - |r(-1)| \frac{1}{2\sqrt{\pi}}s^{-\frac{1}{4}}  e^{-\frac{2}{3} s^{3/2} }, \ s \to +\infty,
\end{align}
 where $\rm{Ai}(s)$ is the classical Airy function.
  \item[{\rm Case II.}]  \label{th2}
For $(x,t) \in \mathcal{P}_{+1}(x,t)$,
  \begin{equation}\label{q2}
 q(x,t)= e^{i\alpha(\infty)} \left(1 + \tau^{-1/3} \beta(1) \right)+\mathcal{O}\left(t^{-1/2}\right),
\end{equation}
where
\begin{align}
&\alpha(\infty) = \int_{0}^{\infty} \frac{\nu(\zeta)}{\zeta} \, \mathrm{d}\zeta,\ \
 \beta(1)=  \frac{i}{2}\left( u(s )e^{i\varphi_0}+\int_s^\infty u^2(\zeta)  \mathrm{d}\zeta \right),\label{weeie}\\
& \varphi_0 =\arg ( R(1)), \ \ s  =-\frac{8}{3} (\xi -1) \tau^{\frac{2}{3}},  \ \tau =\frac{3}{4}t.\nonumber
\end{align}
Here, the function  $R(z)$  is defined by  (\ref{5oedk}) and $u(s ) $  is a solution of the  Painlev\'e equation  (\ref{pain2equ})  with the asymptotics
\begin{equation}
	u(s) \sim |R(1)|\frac{1}{2\sqrt{\pi}}s^{-\frac{1}{4}}  e^{-\frac{2}{3} s^{3/2} }, \ s \to +\infty.
\end{equation}

\end{itemize}

\end{theorem}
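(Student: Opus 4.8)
The plan is to follow the $\bar\partial$-generalization of the Deift--Zhou nonlinear steepest descent method, combined with a double scaling limit that resolves the coalescence of the real saddle points as $\xi \to \pm 1$. I describe the argument for Case I (region $\mathcal{P}_{-1}$); Case II follows by symmetry. First I would write the $2\times 2$ matrix RH problem $m(z;x,t)$ attached via the inverse scattering transform to the finite-density data $q_0 \in \tanh x + H^{4,4}(\mathbb{R})$, normalized so that $q(x,t)$ is recovered from the $1/z$ coefficient of $m$ as $z\to\infty$. The jump on $\mathbb{R}$ is governed by $r(z)$ together with the oscillatory factor $e^{2it\theta(z)}$, where $\theta(z;\xi)$ is the controlling phase coming from the Lax pair and its real saddle points solve $\theta'(z)=0$. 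The crucial observation is that as $\xi \to -1$ the two relevant saddles merge into a single degenerate saddle at which $\theta''$ vanishes and the local behaviour is cubic. This cubic degeneracy is precisely what forces the parabolic-cylinder parametrix of the generic solitonless region to be replaced by a Painlev\'e II model, and it fixes the natural scaling $s = \tfrac83(\xi+1)\tau^{2/3}$, $\tau=\tfrac34 t$.

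Next I would carry out the standard chain of transformations. (i) Conjugate by a scalar function $T(z)$, a Blaschke-type product over the discrete spectrum $\{z_j\}$ times a Cauchy integral of $\nu(\zeta)=-\tfrac{1}{2\pi}\log(1-|r|^2)$, which removes the soliton poles and triangularizes the jump; the prefactor $e^{i\alpha(\infty)}$ in the statement equals $T(\infty)^{-2}$, with $\alpha(\infty)$ read off from (\ref{ejs}). (ii) Open lenses along the steepest-descent contours emanating from the degenerate saddle, replacing $r$ by a smooth $\bar\partial$-extension off $\mathbb{R}$; this converts the problem into a mixed RH--$\bar\partial$ problem $m^{(2)}$ whose pure jump is supported on the lens boundaries and which carries a $\bar\partial$-source $\bar\partial R^{(2)}$ in the lens sectors. (iii) Factor $m^{(2)} = m^{(3)} m^{(loc)} m^{\rm err}$, isolating a local parametrix $m^{(loc)}$ in a fixed-size disc about the merging saddle from the outer model (trivial here, contributing only the background $1$) and a small-norm remainder.

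The hard part will be the construction of $m^{(loc)}$ and the proof that, in the double scaling limit $t\to\infty$ with $s$ fixed, it converges to the Painlev\'e II RH model. Here I would rescale the disc to size $t^{-1/3}$ via $z\mapsto\zeta$, so that $2t\theta$ degenerates to the PII phase $\tfrac43\zeta^3+s\zeta$ and the frozen reflection data collapse to the constant $r(-1)$, with $|r(-1)|$ playing the role of the monodromy parameter. Matching the jump rays and constant jumps to the canonical PII RH problem identifies $m^{(loc)}$ with the PII parametrix, whose large-$\zeta$ expansion produces both the off-diagonal quantity $u(s)$ and the diagonal integral $\int_s^\infty u^2(\zeta)\,\mathrm{d}\zeta$ that enter $\beta(-1)$. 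Since $|r|<1$ in the defocusing case, the relevant solution lies in the Ablowitz--Segur family, and the stated Airy asymptotics (\ref{pain2equa}) follow from the classical connection formula: as $s\to+\infty$ the PII solution linearizes to $-|r(-1)|\,\mathrm{Ai}(s)$, which is the source of the explicit Airy expression.

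Finally I would control the remainder. The pure-jump corrections on the lens boundaries are exponentially small away from the saddle, so the genuinely limiting error is the $\bar\partial$-contribution. Writing the $\bar\partial$-problem as an integral equation $(\mathbf{1}-\mathbf{K})m^{\rm err}=\mathbf{1}$, I would bound the solid Cauchy operator $\mathbf{K}$ using the $H^{4,4}$-regularity of $r$ together with the $t^{2/3}$ scaling of the support, obtaining $\|m^{\rm err}-\mathbf{1}\|=\mathcal{O}(t^{-1/2})$. Collecting the $1/z$ coefficients of $T$, $m^{(loc)}$ and $m^{\rm err}$ then yields $q(x,t)=e^{i\alpha(\infty)}\bigl(1+\tau^{-1/3}\beta(-1)\bigr)+\mathcal{O}(t^{-1/2})$ with $\beta(-1)$ as claimed. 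Case II is obtained from the reflection symmetry $\xi\mapsto-\xi$ (equivalently $z\mapsto-\bar z$) of the scattering problem, which interchanges the saddles at $\pm 1$ and replaces $r$ by the modified coefficient $R$, giving (\ref{q2}).
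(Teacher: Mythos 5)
Your outline reproduces the paper's own strategy (conjugation by $T$, pole removal, $\bar\partial$-lenses, a shrinking disc of radius $\sim t^{-1/3}$ about the degenerate saddle, a Painlev\'e II parametrix with frozen datum, Airy asymptotics via the Ablowitz--Segur connection formula, and an $\mathcal{O}(t^{-1/2})$ bound on the $\bar\partial$-remainder), but it has a genuine gap: you never treat the spectral singularity at $z=0$ that the nonzero background forces on the RH problem ($zM(z)\to\sigma_1$ as $z\to 0$). The paper removes it by the transformation \eqref{trans3}, $M^{(2)}(z)=\bigl(I+z^{-1}\sigma_1 M^{(3)}(0)^{-1}\bigr)M^{(3)}(z)$, and the factor $M^{(3)}(0)^{-1}$ reappears in the reconstruction formula. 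This is not a routine technicality here: by \eqref{m30} and \eqref{e00}, $M^{(3)}(0)=I-\tau^{-1/3}M_1^{\infty}(s)+\mathcal{O}(t^{-1/2})$, so the \emph{diagonal} entry of the PII expansion \eqref{eegs} --- the term $\int_s^{\infty}u^2(\zeta)\,\mathrm{d}\zeta$ --- enters $\beta(\mp 1)$ at the leading order $\tau^{-1/3}$ precisely through $(\sigma_1 M^{(3)}(0)^{-1})_{21}$. Your recipe of ``collecting the $1/z$ coefficients of $T$, $m^{(loc)}$ and $m^{\mathrm{err}}$'' picks up only the off-diagonal entry $(M_1^{\infty})_{21}\propto e^{i\varphi_0}u(s)$ via $E_1$, so as written it would deliver $\beta(-1)=-\tfrac{i}{2}u(s)e^{i\varphi_0}$ and miss half of the stated coefficient. (A minor related slip: with the paper's $T$ in \eqref{tfcs} one gets $e^{i\alpha(\infty)}=T(\infty)^{2}$, not $T(\infty)^{-2}$.)

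A second gap is your claim that Case II ``follows by symmetry'' under $\xi\mapsto-\xi$. No such reduction is available for fixed initial data: the reflection changes the scattering data, and the two transition regions require genuinely different preparations, which is exactly why the theorem's formulas are asymmetric. Near $\xi=-1$ the exponentials at the poles $z_j$ grow, so the interpolant must invert the residues and $T$ must carry the Blaschke product over the discrete spectrum \eqref{tfcs}, producing the term $-2i\sum_j\log\bar z_j$ in \eqref{ejs}; near $\xi=+1$ no Blaschke factors appear \eqref{resed}, but the relevant factorization of $V$ on $(0,\infty)$ involves $r/(1-|r|^2)$, which by \eqref{weie} is \emph{generically singular} at $z=1$ since $|r(\pm1)|=1$ there. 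The paper must regularize this \`a la Cuccagna--Jenkins, and that regularization is what produces the modified coefficient $R(z)$ of \eqref{5oedk} and the phase $\varphi_0=\arg R(1)\neq\arg r(1)$ in \eqref{weeie}. For the same reason your appeal to ``$|r|<1$ in the defocusing case'' to place the PII solution strictly inside the Ablowitz--Segur family fails exactly at the merging saddle: generically $r(-1)=1$, so the monodromy parameter can sit on the boundary $|r_0|=1$ (Hastings--McLeod-type), and the theorem is asserted to cover that case; a complete proof must justify the parametrix and the matching estimates \eqref{e1}--\eqref{e2} without assuming $|r(\mp1)|<1$, which is where the boundedness of $s_{11}/T$ (Lemma \ref{prop1}) and the construction in Proposition \ref{prop3} do real work.
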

\begin{remark}
Comparing  Case I and Case II in Theorem \ref{th},
we find that  the formulae   (\ref{q1}) and (\ref{q2})  have a similar asymptotic form.
Their  difference   is that  the discrete spectrum contributes to the phase  $\alpha(\infty)$    in (\ref{ejs}),
while the reflection coefficient contributes to the phase $\alpha(\infty)$   in (\ref{weeie}).
\end{remark}

\begin{remark}  \label{rem3}  In the case of the  KdV and Camassa-Holm equation \cite{DVZ,Monvel, Monvel3},   a given RH problem can be reduced into  a  local RH problem
  at  a phase point  $z_0$ with  the  nonlinear steepest method.  For the non-generic case $|r(0)|\not= 1$,  the  local RH problem
can then be analyzed and controlled by the  norm
$ (1- |r(z_0)|^2)^{-1}$.  While  for the generic case $|r(0)|=1$,
 it  turns out that  the  norm  $ (1- |r(z_0)|^2)^{-1}$  blow up as $z_0 \to   0$. This indicates the   presence of a  new   collisionless shock region  for $|r(0)|=1$.

\end{remark}

\begin{remark}  By contrast to the results in  Remark \ref{rem3},
in the case of   the focusing NLS equation, Vartanian in  \cite{VAH1, VAH2,VAH3} obtained the long-time asymptotics for (\ref{q})-(\ref{inq})
 under the hypothesis  $\|r\|_{L^{\infty}(\mathbb{R})}<1$ including $|r(\pm 1)|<1$,  which is the non-generic case.
However, for the generic case $r(\pm 1)=1$,
Cuccagna and Jerkins proposed a new way in \cite{CJ}  to  remove the non-generic condition $\|r\|_{L^{\infty}(\mathbb{R})}<1$
 by specially  handling the singularity caused from   $ |r(\pm 1) | = 1$.
 This method  allows to proceed as in the similar  region, whose asymptotics matches the Painlev\'e asymptotics in self-similar region
 and thus  no new shock wave asymptotic  forms  appear.
 Because of this,  the  Painlev\'e asymptotics given   in Theorem \ref{th} are still effective     for the case    $ |r(\pm 1) | = 1$.

\end{remark}

\subsection{Plan of the proof}
 \hspace*{\parindent}

 We prove Theorem \ref{th} by applying $\bar\partial$-steepest descent approach and double scaling limit
   to the defocusing NLS equation (\ref{q})-(\ref{inq}).
   The organization of our paper is as follows.

     In Section \ref{sec1}, we quickly review some basic  results, especially
 the construction of  a basic   RH   formalism  $M(z)$  related to the Cauchy problem (\ref{q})-(\ref{inq})  for the
	defocusing NLS  equation.  For more details, see  \cite{CJ}.
 Following  a brief review of   Section \ref{sec1},  two main results of Theorem \ref{th}
are presented   in Section  \ref{sec3} and Section  \ref{sec4}, respectively.

  In Section \ref{sec3},  we focus on the long-time asymptotic  analysis for the defocusing NLS  equation in the Painlev\'e sector $\mathcal{P}_{-1}(x,t)$  with the following steps.
  We first   obtain a standard RH problem $ M^{(3)}(z)$   by  removing the poles   and  singularity of  the RH problem $M(z)$   in Subsection \ref{modi1}.
 Then in Subsection \ref{modi2},  after  a continuous extension of
   the  jump matrix with  the $\bar\partial$-steepest descent method,    the   RH problem $ M^{(3)}(z)$ is deformed into a  hybrid $\bar\partial$-RH problem $ M^{(4)}(z)$,
   which can be solved   by decomposing it into a pure RH problem $ M^{rhp}(z)$ and a pure $\bar\partial$-problem $ M^{(5)}(z)$.
The  RH problem $ M^{rhp}(z)$ can be  approximated by a solvable Painlev\'e model  via the local paramatrix near the critical point $z=-1$ in Subsection \ref{modi3}.
The residual error   comes  from a small RH problem $E(z)$ and the pure $\bar\partial$-problem $ M^{(5)}(z)$ in Subsection \ref{modi4}.
      Finally, summing up the estimates above yields  the asymptotic behavior of  the solutions of the defocusing NLS equation  in terms of  the real-valued solutions of the Painlev\'{e} $\mathrm{II}$ equation, and
      the proof of  Theorem \ref{th}---Case I is in Subsection \ref{thm-result1}.

      In Section \ref{sec4}, we investigate the asymptotics of the solution  in the Painlev\'{e} sector $\mathcal{P}_{+1}(x,t)$  using a similar way as Section \ref{sec3},
      and the proof of  Theorem \ref{th}---Case II is  in Subsection \ref{thm-result2}.

To clarify   a series of  deformations and approximations   to  the  Painlev\'e model described  above, we make the following chains as
\begin{align}
&M(z) \xrightarrow [\text{Poles, Contour}]{ \text{Conjugating }  }   M^{(1)}(z)
 \xrightarrow [\text{  Poles}]
{ \text{Removing} }    M^{(2)}(z)   \xrightarrow [\text{Singularity} ]
{ \text{Removing}  }    M^{(3)}(z)   \nonumber\\[6pt]
&\xrightarrow [\text{Contour}]
{\text{Opening} }    M^{(4)}(z)  \longrightarrow
\left\{\begin{matrix}
  \ M^{rhp}(z)
   \longrightarrow \left\{\begin{matrix}M^{loc}(z) \xrightarrow [\text{Painlev\'e } ]
{\text{Scaling} }   M^\infty(k)    \vspace{2mm}\cr
E(z)\qquad\qquad\qquad\qquad \
\end{matrix}\right. \\
  \ M^{(5)}(z):=M^{(4)} (z) ( M^{rhp}(z) )^{-1}.\qquad \qquad
\end{matrix}\right. \nonumber
\end{align}
The main approximation in matching the local model with  the  Painlev\'e RH  model $  M^\infty(k)$ is shown in  Proposition \ref{locpain}.
Two error functions $E(z)$ and $M^{( 5)}(z)$ are given by  Proposition \ref{error1} and  Proposition \ref{m51infty}, respectively.

\section{Inverse   scattering transform} \label{sec1}
\hspace*{\parindent}

In this section, we recall briefly    main results about inverse scattering transform
for the defocusing NLS equation that will be   used in this paper. The details can be found in
\cite{CJ}.

\subsection{Jost functions }
\hspace*{\parindent}

The NLS equation (\ref{q}) admits a Lax pair
  \begin{align}
  	  	&\psi_x=L \psi, \quad L=L(z;x,t)=i \sigma_3 \left(Q-\lambda\left(z\right)\right),\label{laxx}\\
  	  	&\psi_t=T \psi,\quad T=T(z;x,t)=-2\lambda(z) L+i(Q^2-I)\sigma_3+Q_x, \label{laxt}
  \end{align}
	where
	\begin{equation*}
		Q=Q(x,t)=\left(\begin{array}{cc}
			0 & \bar{q}(x,t)\\
			q(x,t) & 0
		\end{array}\right),\quad \lambda(z) = \frac{1}{2}\left(z+z^{-1}\right).
	\end{equation*}

We define the Jost solutions of Lax pairs (\ref{laxx})-(\ref{laxt}) with  asymptotics
	\begin{equation*}
		\psi^\pm(z) \sim Y_\pm e^{-it\theta(z)\sigma_3},  \quad x \to \pm \infty,
	\end{equation*}
	where we denote $\psi^\pm(z):=\psi^\pm(z;x,t)$  and
    \begin{align}
       & Y_\pm=I\pm \sigma_1z^{-1},\quad \text{det}Y_\pm=1-z^{-2},\\
       & \theta(z)=\zeta(z) \big(x t^{-1}- 2 \lambda(z) \big),\quad\zeta(z)=\frac{1}{2}(z-z^{-1}).\label{theta0}
     \end{align}

 Making  a transformation
	\begin{equation}
		m^\pm(z )=\psi^\pm(z )e^{it\theta(z)\sigma_3},
	\end{equation}
then $m^\pm(z )$ satisfy  the Volterra integral equations
\begin{align}
&m^\pm(z)=Y_\pm + \int_{\pm \infty}^{x} \left( Y_\pm e^{-i \zeta(z) (x-y)\hat\sigma_3}  Y^{-1}_\pm    \right) \left( \Delta L_\pm m^\pm(z;y)      \right)  \mathrm{d}y, \ z\not= \pm 1,\nonumber\\
&m^\pm(z)=Y_\pm + \int_{\pm \infty}^{x} \left( I + (x-y)L_\pm      \right) \Delta L_\pm (\pm 1;y) m^\pm(\pm 1;y)  \mathrm{d}y, \ z = \pm 1,\nonumber
\end{align}
where $\Delta L_\pm(z;y) = i \sigma_3 (Q \mp \sigma_1)$.

The existence, analyticity  and differentiability  of $m^\pm(z )$ can be proven directly. Here we list their  properties \cite{CJ}.

   \begin{lemma} \label{lemma3.1}
    	Denote $m^{\pm}(z) = \left( m^{\pm}_1(z), m^{\pm}_2(z) \right)$ and let $q_0  \in \tanh (x)+ L^{1,2} (\mathbb{R})$ and $q'_0 \in W^{1,1}(\mathbb{R})$,
    	then we have

\begin{itemize}

 \item   { Analyticity:} $m_1^+(z )$ and $m_2^-(z )$  can be analytically extended to $z \in \mathbb{C}^-$, while
  $m_1^-(z )$ and $m_2^+(z )$  can be analytically extended to $z \in \mathbb{C}^+$.

    \item {Symmetry:} $m^\pm_1(z )$ and $m^\pm_2(z )$ admit the symmetries
         	\begin{equation}
    	\psi_\pm(z )=\sigma_1 \overline{\psi_\pm(\bar{z} )}\sigma_1=\pm z^{-1} \psi^\pm(z^{-1} )\sigma_1.
    	\end{equation}

  \item    { Asymptotics:} $m^\pm_1(z )$ and $m^\pm_2(z )$ have the asymptotic properties
    	\begin{align*}
    	m^\pm_1(z ) &= e_1 +  \mathcal{O} \left(   z^{-1} \right); \    \  m^\pm_2(z ) = e_2 +   \mathcal{O} \left(   z^{-2} \right),  z \to \infty,\nonumber\\
    	m^\pm_1(z ) &= \pm \frac{1}{z}e_2 +\mathcal{O}(1);  \ \  m^\pm_2(z )  = \pm \frac{1}{z}e_1 +\mathcal{O}(1),  \  \; z\to 0.
    	\end{align*}

    \item For any $\delta >0$ sufficiently small, the maps
    $$ q \to \det[\psi_1^-, \psi_2^+], \  \ q \to \det[\psi_1^+, \psi_1^-],$$
    are locally Lipschiz maps
  $$ \{ q: q   \in \tanh (x)+ L^{1,2}, q' \in W^{1,\infty } \} \rightarrow  W^{1,\infty}( \mathbb{R}  \setminus  (-\delta, \delta)).$$
  \end{itemize}

    \end{lemma}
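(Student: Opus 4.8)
The plan is to read off all four assertions from the Volterra integral equations for $m^\pm(z)$ together with the explicit boundary data $Y_\pm=I\pm\sigma_1 z^{-1}$, treating the bullet points in turn. For existence and analyticity I would set up the Neumann series obtained by iterating the integral equation, writing $m^\pm=\sum_{n\ge 0}m_n^\pm$ with $m_0^\pm=Y_\pm$ and $m_{n+1}^\pm$ the image of $m_n^\pm$ under the Volterra operator. The kernel carries the factor $Y_\pm e^{-i\zeta(z)(x-y)\hat\sigma_3}Y_\pm^{-1}$ and $\Delta L_\pm=i\sigma_3(Q\mp\sigma_1)$; since $q_0\in\tanh x+L^{1,2}(\mathbb{R})$ the entries of $\Delta L_\pm$ are integrable, so the usual factorial Volterra bound $\|m_n^\pm\|\le C^n\big(\int|\Delta L_\pm|\big)^n/n!$ gives absolute and locally uniform convergence in $z$. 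For analyticity, note that $|e^{-i\zeta(z)(x-y)}|=e^{\im\zeta(z)(x-y)}$ and that $\im\zeta(z)=\tfrac12\sin(\arg z)(|z|+|z|^{-1})$ has the sign of $\im z$; integrating from $+\infty$ (resp. $-\infty$), so that $x-y\le 0$ (resp. $\ge 0$), exactly one off-diagonal exponential stays bounded and holomorphic in each half-plane. Each column is then a locally uniform limit of holomorphic functions, and Morera's theorem yields the analyticity of $m_1^+,m_2^-$ in $\mathbb{C}^-$ and of $m_1^-,m_2^+$ in $\mathbb{C}^+$.

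The symmetries I would obtain by uniqueness rather than by computation on the series. The key structural facts are $Q=\sigma_1\overline{Q}\sigma_1$, together with $\lambda(z)=\lambda(z^{-1})$, $\zeta(z)=-\zeta(z^{-1})$, and the corresponding identities $Y_\pm=\sigma_1\overline{Y_\pm(\bar z)}\sigma_1=\pm z^{-1}Y_\pm(z^{-1})\sigma_1$ for the boundary data. Consequently $\sigma_1\overline{\psi^\pm(\bar z)}\sigma_1$ and $\pm z^{-1}\psi^\pm(z^{-1})\sigma_1$ both solve the Lax equation with the same prescribed asymptotics at $\pm\infty$, and uniqueness of the Volterra solution forces the stated symmetries.

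For the asymptotics at $z=\infty$ I would expand the integral equation in inverse powers of $z$: since $\zeta(z)\sim z/2$ and $Y_\pm=I\pm\sigma_1 z^{-1}$, a single integration by parts transfers one power of $z^{-1}$ onto the potential, and the diagonal/off-diagonal parity explains why the first column carries an $\mathcal{O}(z^{-1})$ error while the second carries $\mathcal{O}(z^{-2})$. The behaviour at $z=0$ I would not recompute; instead I would feed the $z\to\infty$ expansion through the inversion symmetry $\psi^\pm(z)=\pm z^{-1}\psi^\pm(z^{-1})\sigma_1$ just proved, which maps $z\to\infty$ to $z\to 0$ and produces the singular leading terms $\pm z^{-1}e_2$ and $\pm z^{-1}e_1$ inherited from $\pm\sigma_1 z^{-1}$.

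Finally, for the local Lipschitz statement I would compare the Neumann series for two potentials $q,\tilde q$ term by term: their difference obeys the same Volterra estimate, so $\|m^\pm[q]-m^\pm[\tilde q]\|\lesssim\|q-\tilde q\|$, and since $\det[\psi_1^-,\psi_2^+]$ and $\det[\psi_1^+,\psi_1^-]$ are bilinear in the Jost columns they inherit Lipschitz dependence; the $W^{1,\infty}$ regularity in $z$ follows by differentiating the integral equation once in $z$, where the extra factor $(x-y)\partial_z\zeta$ is absorbed by the weight of $L^{1,2}$ and by $q_0'\in W^{1,\infty}$. The main obstacle, and where I expect the real work, is uniform control near the degenerate points $z=\pm1$ and near $z=0$: there $\det Y_\pm=1-z^{-2}$ vanishes at $\pm1$ and $Y_\pm$ blows up at $0$, while $\zeta(\pm1)=0$ collapses the exponential and the integral equation must be replaced by its degenerate $(x-y)L_\pm$ form. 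Excising the neighborhood $(-\delta,\delta)$ of the origin removes the $z^{-1}$ singularity of $Y_\pm$, but retaining $z=\pm1$ forces me to run the second (degenerate) integral equation and to show that the resulting Jost functions, and the scattering determinants built from them, stay bounded with bounded $z$-derivative \emph{uniformly in} $q$; this uniform-in-$q$, uniform-near-$\pm1$ estimate is the crux of both the $W^{1,\infty}$ and the Lipschitz claims.
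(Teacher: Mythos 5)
The paper offers no proof of this lemma at all: it is quoted from Cuccagna--Jenkins \cite{CJ} (the text says only that the properties ``can be proven directly'' and refers the reader there), and your plan reconstructs essentially the argument of that reference --- Neumann-series solution of the Volterra equations with the factorial bound, column-wise analyticity from the sign of $\im \zeta(z) = \tfrac{1}{2}\sin(\arg z)\,(|z|+|z|^{-1})$ combined with the sign of $x-y$ and Morera, symmetries by uniqueness using $Q=\sigma_1\overline{Q}\sigma_1$, $\lambda(z^{-1})=\lambda(z)$, $\zeta(z^{-1})=-\zeta(z)$ and $Y_\pm=\pm z^{-1}Y_\pm(z^{-1})\sigma_1$, the $z\to 0$ expansion obtained by pushing the $z\to\infty$ expansion through the inversion symmetry, and weighted $L^{1,2}$ estimates to handle the degenerate points. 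So your approach is the same as the cited one; the single step you flag but do not carry out --- the uniform-in-$q$ bound near $z=\pm 1$, where $\zeta(\pm 1)=0$ forces the $\bigl(I+(x-y)L_\pm\bigr)$ form of the integral equation and the $(x-y)$ growth must be absorbed by the $L^{1,2}$ weight --- is precisely where \cite{CJ} invests its technical work, and your identification of it as the crux of the fourth bullet is accurate.
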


\subsection{Scattering data }
\hspace*{\parindent}

The Jost functions  $\psi_\pm(z)$ admit the scattering relation
 \begin{equation}
    \psi_-(z )=\psi_+(z )S(z),
    \end{equation}
where $S(z)$ is  the   spectral  matrix given by
    $$S(z)= \left(\begin{array}{cc}
    s_{11}(z)& 	\overline{s_{21}(\bar z)}\\
    s_{21}(z)& 	\overline{s_{22}(\bar z)}
    \end{array} \right),$$
   and $s_{11}(z) $ and $s_{21}(z) $ are the scattering data, by which   we define a reflection coefficient
	    \begin{equation}
	    	r(z) := \frac{s_{21}(z)}{s_{11}(z)}. \label{reflec}
	    \end{equation}
	  It is shown that the scattering data  and the reflection coefficient have  the following  properties \cite{CJ}.

 \begin{lemma}  \label{lemm3.2}
 Let   $q_0\in \tanh x +H^{2,2} (\mathbb{R})$, then
\begin{itemize}

\item     $s_{11}(z)$ can be analytically extended to $z \in \mathbb{C}^+$ while $s_{21}(z)$ and $r(z)$ are defined only for $z\in \mathbb{R}\setminus \{0, \pm1\}$.
 Zeros of $s_{11}(z)$ in  $\mathbb{C}^+ $ are simple, finite and distribute on the unitary circle.

   \item The scattering data can be described by the Jost functions
       	    \begin{equation}
       	    	s_{11}(z)=\frac{\det  \left[\psi_1^-(z ), \psi_2^+(z )\right]  }{1-z^{-2}},\quad    	s_{21}(z)=\frac{\det  \left[\psi_1^+(z ), \psi_1^-(z )\right]  }{1-z^{-2}}.\label{scatter}
       	    \end{equation}

\item For $z\in \mathbb{R}\setminus \{0, \pm1\}$, we have
\begin{align}
& |s_{11}(z) |^2=1+|s_{21}(z) |^2 \Longleftrightarrow  |r(z)|^2=1-|s_{11}(z)|^{-2}<1. \label{scatter2}
\end{align}

\item  $r(z)\in H^{1 }(\mathbb{R})$ and
   $ \parallel\log(1-|r|^2) \parallel_{L^p(\mathbb{R})}<\infty, \ \ p\geq 1. $

 \item {Symmetry:} $ {S ( {z})}  = \sigma_1\overline{S (\bar z) }\sigma_1 =  \pm z^{-1}  {S ( z^{-1} )}\sigma_1,$

 \item {Asymptotics:}
\begin{align*}
&|s_{21}(z)|=\mathcal{O}(|z|^{-2}),\ \ |z|\rightarrow \infty, \ \ |s_{21}(z)|=\mathcal{O}(|z|^2),\ \ |z|\rightarrow 0.\\\
&r(z)\sim z^{-2},\ \ z\rightarrow \infty, \ \ r(z)\sim 0,\ \ z\rightarrow 0.
\end{align*}

\item In the generic case, $s_{11}(z)$ and $s_{21} (z)$ have the same order singularities at $z=\pm 1$,
\begin{align*}
& s_{11}(z) =\frac{s_\pm}{z\mp 1} +\mathcal{O}(1), \ s_{21}(z) = \mp \frac{s_\pm}{z\mp 1} +\mathcal{O}(1),
\end{align*}
where $\ s_\pm=\det  \left[\psi_1^-(\pm 1 ), \psi_2^+(\pm1 )\right]$ and
\begin{align}
& \lim_{z\to \pm 1} r(z)=\mp 1. \label{weie}
\end{align}

\item In the non-generic case,   $s_{11}(z)$ and $s_{21} (z)$  are continuous  at $z=\pm 1$ and $|r(\pm 1)|<1 $.

\end{itemize}

\end{lemma}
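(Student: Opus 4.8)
The plan is to derive each listed property from the scattering relation $\psi_-(z) = \psi_+(z)S(z)$ together with the analyticity, symmetry and asymptotic data for the Jost solutions recorded in Lemma~\ref{lemma3.1}. Since $L$ is traceless, Liouville's formula freezes $\det\psi_\pm$ at its boundary value $\det Y_\pm = 1 - z^{-2}$, so in particular $\det S = 1$. Writing the scattering relation columnwise as $\psi_1^- = s_{11}\psi_1^+ + s_{21}\psi_2^+$ and applying Cramer's rule (replacing one column of $[\psi_1^+,\psi_2^+]$ by $\psi_1^-$ and dividing by $\det[\psi_1^+,\psi_2^+] = 1 - z^{-2}$) produces exactly the Wronskian representations~\eqref{scatter}. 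These formulas are the common source for everything that follows.

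Next I would read off the analyticity: $\psi_1^-$ and $\psi_2^+$ both continue into $\mathbb{C}^+$ by Lemma~\ref{lemma3.1}, so their Wronskian, and hence $s_{11}$, is analytic in $\mathbb{C}^+$, whereas $s_{21}$ couples $\psi_1^+$ (analytic in $\mathbb{C}^-$) with $\psi_1^-$ (analytic in $\mathbb{C}^+$) and therefore persists only on $\mathbb{R}$. Simplicity and finiteness of the zeros of $s_{11}$, and their location on the unit circle, follow from the $z\mapsto z^{-1}$ and Schwarz symmetries inherited from Lemma~\ref{lemma3.1} together with the $H^{2,2}$ decay that rules out accumulation. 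Substituting the two expansions of $m_j^\pm$ from Lemma~\ref{lemma3.1} into~\eqref{scatter} and tracking leading powers yields the stated behaviour of $s_{21}$ and $r$ at $z=\infty$ and $z=0$, the extra decay $|s_{21}| = \mathcal{O}(|z|^{-2})$ reflecting the $\mathcal{O}(z^{-2})$ behaviour of $m_2^\pm$. The symmetry $S(z)=\sigma_1\overline{S(\bar z)}\sigma_1=\pm z^{-1}S(z^{-1})\sigma_1$ is inherited by inserting the corresponding Jost symmetries into $\psi_-=\psi_+S$ and invoking uniqueness of the scattering matrix.

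The defocusing relation $|s_{11}|^2 = 1 + |s_{21}|^2$ is where the sign of the nonlinearity enters. The first symmetry forces $s_{22}=s_{11}$, so on the real axis the $(2,2)$ entry of $S$ equals $\overline{s_{11}(z)}$; expanding $\det S = |s_{11}|^2 - |s_{21}|^2 = 1$ then gives the identity, and dividing by $|s_{11}|^2$ yields $|r|^2 = 1 - |s_{11}|^{-2} < 1$. For the regularity, the direct transform maps $q_0 \in \tanh x + H^{2,2}$ to $r \in H^{1}$ — one derivative and two weights of $q_0$ translate into $H^{1}$ control of $r$ — while $|r|<1$ on compact subsets of $\mathbb{R}\setminus\{\pm1\}$ keeps $\log(1-|r|^2)$ in every $L^p$, the endpoints being absorbed by the local analysis below.

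The delicate point, and the main obstacle, is the behaviour at the branch points $z=\pm1$, where $\det Y_\pm = 1-z^{-2}$ has simple zeros and $Y_\pm = I\pm\sigma_1 z^{-1}$ degenerates to rank one; this is exactly why Lemma~\ref{lemma3.1} records a separate Volterra equation there. Dividing the finite Wronskian numerators by $1-z^{-2}$ therefore threatens simple poles. In the generic case the numerator $s_\pm = \det[\psi_1^-(\pm1),\psi_2^+(\pm1)]$ is nonzero, so both $s_{11}$ and $s_{21}$ acquire a simple pole; the rank-one collapse makes $\psi_1^+(\pm1)$ and $\psi_2^+(\pm1)$ proportional, which forces the two numerator Wronskians to agree up to the opposite sign recorded in the stated residues, so that in the ratio $r = s_{21}/s_{11}$ the poles cancel and $r(z)\to\mp1$ as in~\eqref{weie}. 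In the non-generic case $s_\pm=0$ removes the poles, leaving $s_{11},s_{21}$ continuous at $z=\pm1$ with $|r(\pm1)|<1$. Proving that these are genuinely simple poles with finite $s_\pm$, and controlling $r$ uniformly near the branch points, is the technically hard step; it rests on the local Lipschitz mapping property on $\mathbb{R}\setminus(-\delta,\delta)$ from the last item of Lemma~\ref{lemma3.1}, which is precisely why the $H^{2,2}$ regularity is imposed.
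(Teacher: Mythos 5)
The paper itself offers no proof of Lemma \ref{lemm3.2}: every item is quoted from Cuccagna--Jenkins \cite{CJ}, so your proposal has to be measured against the standard argument given there rather than against anything in this text. Your skeleton is that standard argument and most of it checks out: Liouville's formula (with $L$ traceless) freezes $\det\psi^{\pm}=1-z^{-2}$, Cramer's rule gives the Wronskian representations (\ref{scatter}), $\det S=1$ combined with the Schwarz symmetry yields $|s_{11}|^2-|s_{21}|^2=1$, i.e.\ (\ref{scatter2}), and the rank-one degeneration $\psi_1^+(\pm 1)=\pm\,\psi_2^+(\pm 1)$ of $Y_+$ is exactly what drives the residue bookkeeping at the branch points; your sign computation giving $r(\pm 1)=\mp 1$ in (\ref{weie}) is correct (modulo the constant coming from $1-z^{-2}\approx \pm 2(z\mp 1)$, which you gloss over but which cancels in the ratio $r=s_{21}/s_{11}$ in any case).

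There are, however, two genuine gaps where you assert precisely the hard steps. First, the discrete spectrum: the symmetries $z\mapsto\bar z$ and $z\mapsto z^{-1}$ only make the zero set of $s_{11}$ invariant under $z\mapsto \bar z^{-1}$; they cannot force zeros onto the unit circle, and no amount of $H^{2,2}$ decay helps. The actual mechanism is spectral: at a zero $z_j\in\mathbb{C}^+$ one has $\psi_1^-(z_j)=b_j\psi_2^+(z_j)$, producing an $L^2$ eigenfunction of the Zakharov--Shabat operator, which in the defocusing case is self-adjoint in the spectral variable $\lambda$; hence $\lambda(z_j)=\frac{1}{2}\left(z_j+z_j^{-1}\right)$ must be real with $|\lambda(z_j)|<1$, and for $z_j\in\mathbb{C}^+$ this forces $|z_j|=1$. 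Simplicity is the derivative identity underlying (\ref{conect}), namely $s_{11}'(z_j)$ proportional to $\int_{\mathbb{R}}|\psi_2^+(z_j,x)|^2\,\mathrm{d}x\neq 0$, and finiteness requires ruling out accumulation of zeros at $\pm 1$, which in the generic case uses the simple pole of $s_{11}$ there. None of this appears in your sketch. Second, $r\in H^1(\mathbb{R})$ is a substantive mapping theorem about the direct scattering transform --- in \cite{CJ} it is proved via resolvent estimates on the Volterra equations with separate local analysis at $z=0,\pm 1$ --- and your one-line ``one derivative and two weights translate into $H^1$ control'' is a heuristic, not an argument. Relatedly, the bound $\|\log(1-|r|^2)\|_{L^p(\mathbb{R})}<\infty$ near $\pm 1$ needs the quantitative rate $1-|r(z)|^2=|s_{11}(z)|^{-2}\asymp |z\mp 1|^2$, which actually follows from the simple-pole statement you did establish; you defer this to ``the local analysis below'' instead of closing the loop explicitly.
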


\subsection{A basic Riemann-Hilbert problem}
\hspace*{\parindent}

Denote $H=\{ 0, 1,\cdots, N-1\}$ and
\begin{align*}
&\mathcal{Z}^+=\{ z_j| s_{11}(z_j)=0, \ z_j \in \mathbb{C}^+, \ |z_j|=1,  \  j\in H\}, \\
& \mathcal{Z}^- =\{ \bar z_j |  s_{22}(\bar z_j) =0, \ \bar z_j \in \mathbb{C}^-,\ |\bar z_j|=1, \  j\in H\}.
\end{align*}
Moreover, $s_{11}(z)$  satisfies  the trace formula
	  \begin{align}
	  	s_{11}(z)= \prod_{j=1}^{N} \frac{z-z_j}{z-\bar{z}_j} \exp\left(-i \int_{\mathbb{R}} \frac{\nu(\zeta)}{\zeta-z} \, \mathrm{d}\zeta\right),\label{trance}
	  \end{align}
	where $z_j \in \mathcal{Z}^+ $ and
	  \begin{align}
	  	\nu(\zeta)=-\frac{1}{2 \pi} \log \left( 1-|r(\zeta)|^2\right). \label{nu}
	  \end{align}

Based on the properties of the scattering data $s_{11}(z)$, we define
\begin{align}
&M(z)=M(z;x, t)=\left\{\begin{matrix} (m_1^{-}(z )/{s_{11}(z) },  m_2^{+}(z )), \  \ &z\in \mathbb{C}_+,\vspace{2mm} \cr
(m_1^{+}(z ),  m_2^{-}(z )/{\overline{s_{11}(\bar z)}} ), \  \ &z\in \mathbb{C}_-. \end{matrix}\right. \nonumber
 \end{align}
 It can be verified that $M(z)$ satisfies the following RH problem.

\begin{prob} \label{RHP0}
  Find  a matrix-valued function $M(z)$ which satisfies
\begin{itemize}
\item  Analyticity: $M(z)$ is meromorphic in $\mathbb{C}\backslash \mathbb{R}$.

\item  Symmetry: $M(z)=\sigma_1 \overline{M(\bar{z})} \sigma_1=z^{-1} M(z^{-1}) \sigma_1$.

\item Singularity:  $M (z)$ has the singularity point $z=0$ with $zM(z)\longrightarrow \sigma_1, \quad  z\rightarrow 0$.

\item  Asymptotic behavior: $M(z)\longrightarrow I, \quad  z\rightarrow \infty.  \ \    \qquad\qquad \label{fRHP3}$

\item  Jump condition: $M(z)$ satisfies the jump condition
	$$M_+(z)=M_-(z)V(z), \; z \in \mathbb{R},$$
      	where
      	\begin{equation}\label{V0}
      		V(z)=\left(\begin{array}{cc}
      			1-|r(z)|^2 & -\overline{r(z)}e^{-2it \theta(z)}\\
      		r(z)	e^{2it \theta(z)} & 1
      		\end{array}\right),
      	\end{equation}
      with
      $$ \theta(z) =\zeta(z)\frac{x}{t}-2\zeta(z)\lambda(z) = \frac{x}{2t}
      (z-z^{-1})-\frac{1}{2}  (z^2-z^{-2}).$$

\item  Residue conditions: $M(z)$  has simple poles at each points $z_j$ in $\mathcal{Z}^+ \cup \mathcal{Z}^-$ with the following residue conditions
\begin{align}
& \mathop{\mathrm{Res}}\limits_{z=z_j} M (z) =\lim_{z\rightarrow z_j}M(z)\left(\begin{array}{cc} 0&0\\ c_je^{2i t\theta(z_j)} &0\end{array}  \right), \label{fresm1}\\
&\mathop{\mathrm{Res}}\limits_{z=\bar z_j}M(z) =\lim_{z\rightarrow \bar z_j}M(z)\left(\begin{array}{cc} 0& -\bar c_je^{-2i t\theta(\bar z_j)} \\0&0\end{array}\right),\label{fresm2}
\end{align}
where
\begin{align}
& c_j=\frac{s_{21}(z_j)}{s_{11}'(z_j)}=\frac{4iz_j}{\int_\mathbb{R}|\psi_2^+(z_j,x)|^2dx}=iz_j|c_k|. \label{conect}
 \end{align}
\end{itemize}
This is an RH problem with jumps on the real axis and poles distributed on the unit circle. See Figure \ref{fmjump2}. The solution of the NLS equation can be given by the reconstruction formula
\begin{align}
 q(x, t)=\lim_{z\rightarrow \infty} (zM (z ))_{21}. \label{sol}
 \end{align}

\end{prob}

\subsection{Classification of asymptotic regions}
\hspace*{\parindent}

\begin{figure}
\begin{center}
\begin{tikzpicture}[scale=0.8]
 \draw [dotted] (0, 0) circle [radius=2];
\draw[   -latex ](-4.5, 0)--(4.5, 0);
\node    at (5.4, 0)  {\footnotesize $Re z$};
 \node    at (-0.9,  1.4)  {\footnotesize $z_j$};
 \node    at (-0.9, -1.4)  {\footnotesize $\bar z_j$};
\node    at (0, -0.3)  {\footnotesize $0$};
\node    at (2.3, -0.3)  {\footnotesize $1$};
\node    at (-2.4, -0.3)  {\footnotesize $-1$};
       \coordinate (A) at (1.73,  1);
		\coordinate (B) at (1.73,  -1);
		\coordinate (C) at (1,  1.73);
		\coordinate (D) at (1,  -1.73);
		\coordinate (E) at (-1.73,  1);
		\coordinate (F) at (-1.73,  -1);
		\coordinate (G) at (-1,  1.73);
		\coordinate (H) at (-1,  -1.73);
		\coordinate (I) at (0,  2);
		\coordinate (J) at (0,  -2);
		\coordinate (K) at (0,  0);
		\coordinate (L) at (2,  0);
		\coordinate (M) at (-2,  0);
\fill[red] (A) circle (1.5pt);
\fill[red] (B) circle (1.5pt);
\fill[red] (C) circle (1.5pt);
\fill[red] (D) circle (1.5pt);
\fill[red] (E) circle (1.5pt);
\fill[red] (F) circle (1.5pt);
\fill[red] (G) circle (1.5pt);
\fill[red] (H) circle (1.5pt);
\fill[red] (I) circle (1.5pt);
\fill[red] (J) circle (1.5pt);
\fill[blue] (K)  circle (1.5pt);
\fill[blue] (L) circle (1.5pt);
\fill[blue] (M) circle (1.5pt);
\end{tikzpicture}
\end{center}
\caption{\footnotesize The poles $z_j\in \mathcal{Z}^+, \bar z_j\in \mathcal{Z}^-$  and jump contour $\mathbb{R}$ for the  RH problem  $M(z)$.}
\label{fmjump2}
\end{figure}
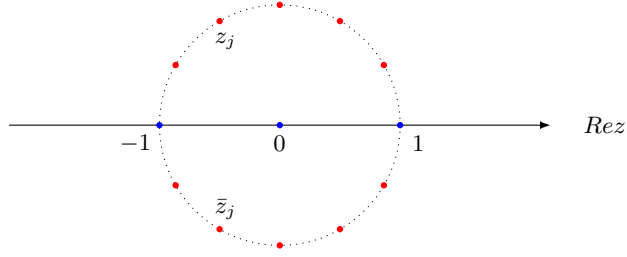

 The jump matrix $ 	V(z)$ admits  the following    factorizations
	   \begin{equation}\label{v}
	   	V(z)=        \begin{cases}
	   		\left(\begin{array}{cc} 1&-\bar{r}e^{-2it\theta(z) }\\ 0&1\end{array}  \right)
	   		\left(\begin{array}{cc} 1&0\\   re^{2it\theta (z)}&1\end{array}  \right),\\
	   		\left(\begin{array}{cc} 1 & 0 \\ \frac{r}{1-|r|^2}e^{2it\theta(z) } & 1\end{array}  \right)
	   		\left(\begin{array}{cc} 1-|r|^2&0\\ 0&\frac{1}{1-|r|^2} \end{array}  \right) \left(\begin{array}{cc} 1 &
	   			\frac{-\bar{r}}{1-|r|^2}e^{-2it\theta(z) } \\ 0 & 1\end{array}  \right).
	   	\end{cases}
	   \end{equation}
The long-time asymptotics of RH problem \ref{RHP0} is affected by the growth and decay of the oscillatory terms  $e^{\pm 2it\theta(z)}$ in the jump matrix $V(z)$.
Direct calculations show that
        \begin{align}
	     \re\left(2i\theta(z)\right) = 2\re z \im z \left( 1 + \frac{1}{\left(\re^2 z+ \im^2 z\right)^2} \right) -2\xi \im z \left( 1+\frac{1}{\re^2 z+ \im^2 z} \right),\label{theta01}
        \end{align}
where $\xi:=\frac{x}{2t}$.   The signature table   of $\re( 2 i\theta(z) )$ and distribution of phase points are   shown   in Figure \ref{proptheta}.
The signal of $\re (2i\theta(z))$ determines  the decay  regions  of the oscillating factor $e^{\pm 2it\theta(z)}$,
which  inspires  us to open the jump contour $\mathbb{R}$ with  different  factorizations  of  the jump matrix $V(z)$.

The stationary phase points are determined  by the equation
        \begin{align}
            & \theta'(z) =- z^{-1} ( \eta^2-\xi \eta-2) =0,\label{2theta}
        \end{align}
        where $ \eta=z+z^{-1}$.  The equation (\ref{2theta})  admits two solutions
  \begin{align}
& \eta_1=\frac{1}{2} (\xi -\sqrt{\xi^2+8}), \ \ \xi<-1, \label{2323}\\
& \eta_2=\frac{1}{2} (\xi +\sqrt{\xi^2+8}), \ \ \xi>1.\label{23232}
\end{align}

            \begin{figure}[htbp]
    	\centering
    	    	\subfigure[$ \xi<-1$]{\label{figurea}
    		\begin{minipage}[t]{0.32\linewidth}
    			\centering
    			\includegraphics[width=1.5in]{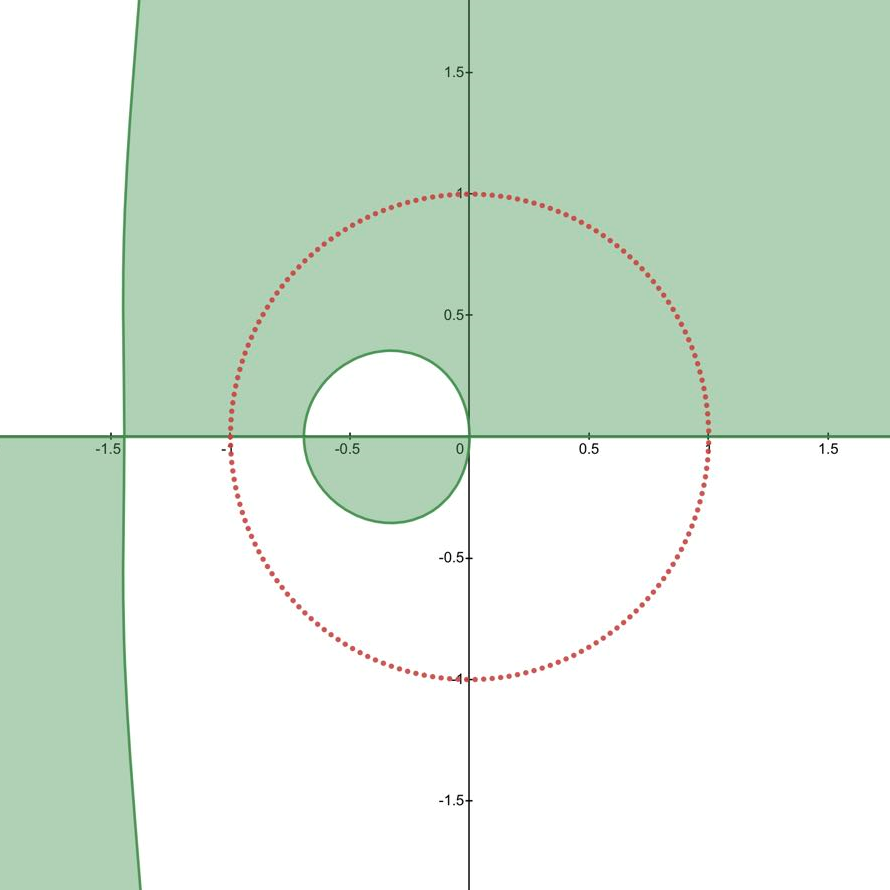}
    		\end{minipage}
    	}%
    	\subfigure[$\xi=-1$]{\label{figureb}
    		\begin{minipage}[t]{0.32\linewidth}
    			\centering
    			\includegraphics[width=1.5in]{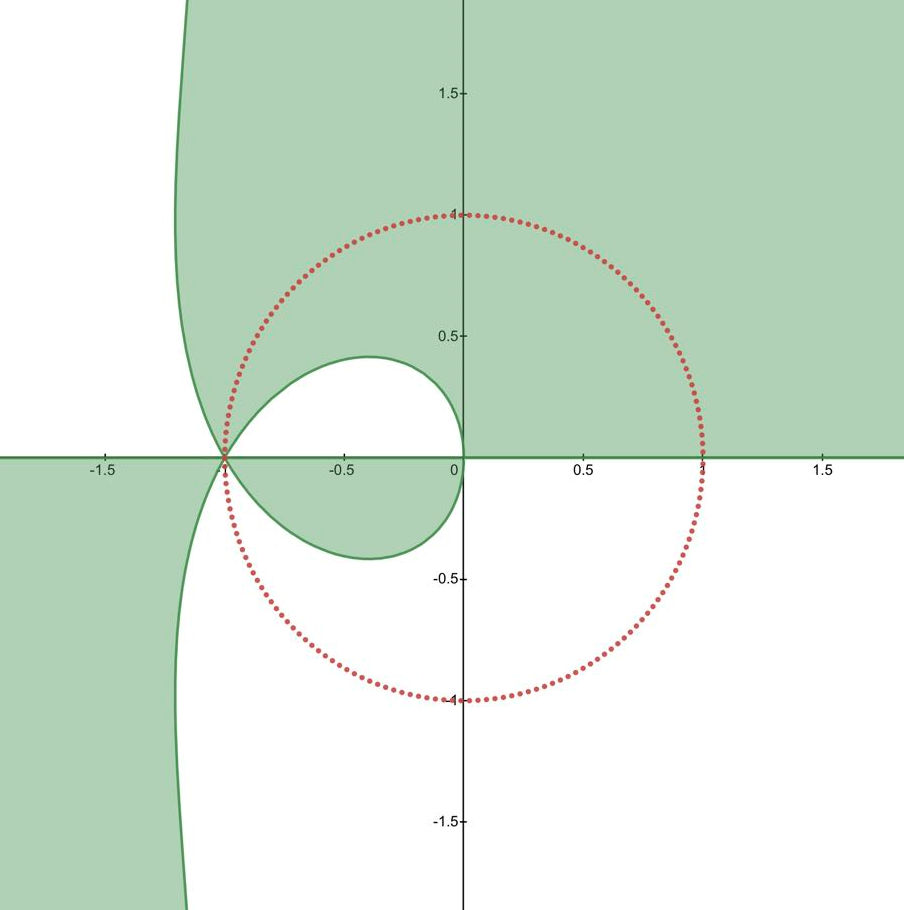}
    		\end{minipage}%
    	}%
    	\subfigure[$-1<\xi<0$]{\label{figurec}
	\begin{minipage}[t]{0.32\linewidth}
		\centering
		\includegraphics[width=1.5in]{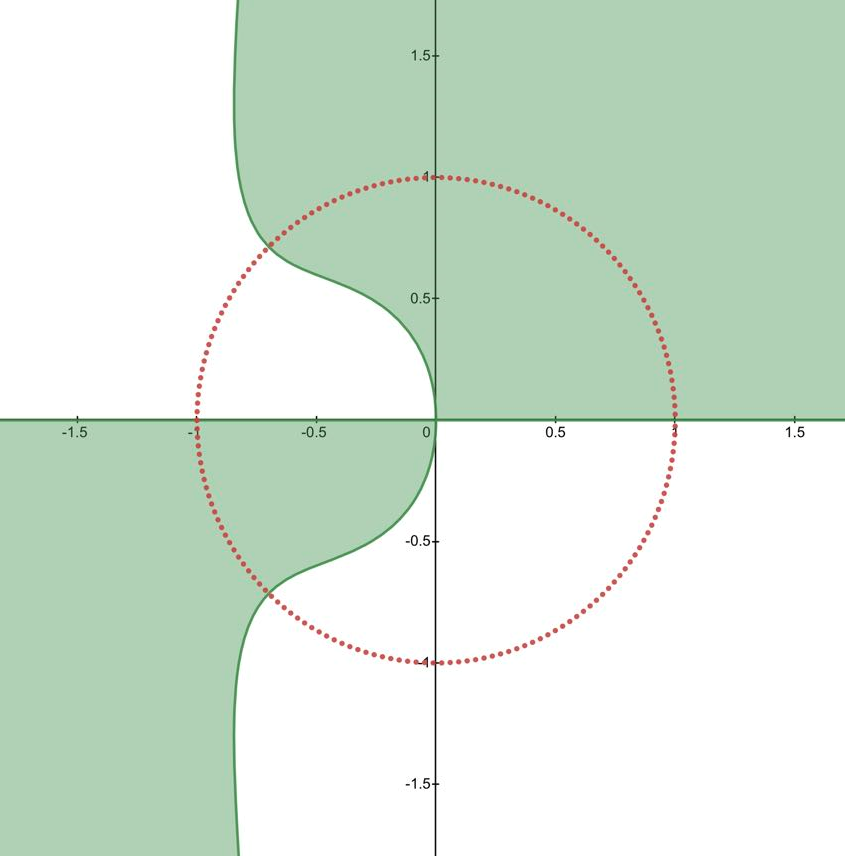}
	\end{minipage}%
}%

    \subfigure[$0<\xi<1$]{\label{figured}
    	\begin{minipage}[t]{0.32\linewidth}
    		\centering
    		\includegraphics[width=1.5in]{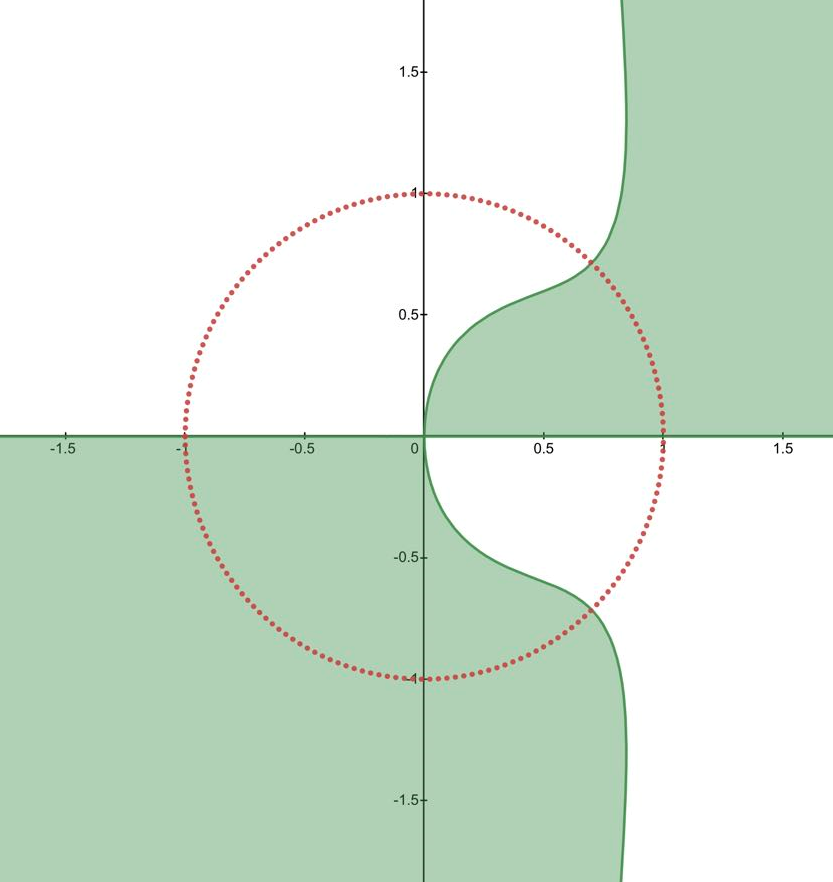}
    	\end{minipage}
    }%
    \subfigure[$\xi= 1$]{\label{figuree}
    	\begin{minipage}[t]{0.32\linewidth}
    		\centering
    		\includegraphics[width=1.5in]{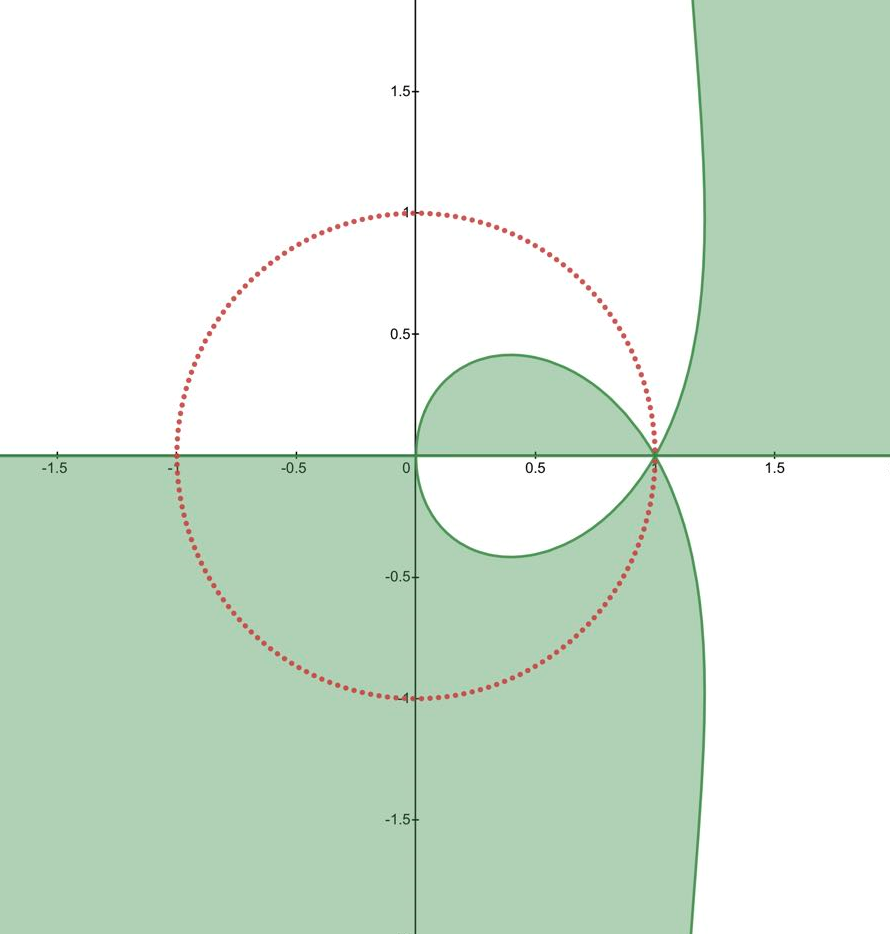}
    	\end{minipage}
    }%
    \subfigure[$\xi>1$]{\label{figuref}
	\begin{minipage}[t]{0.32\linewidth}
		\centering
		\includegraphics[width=1.5in]{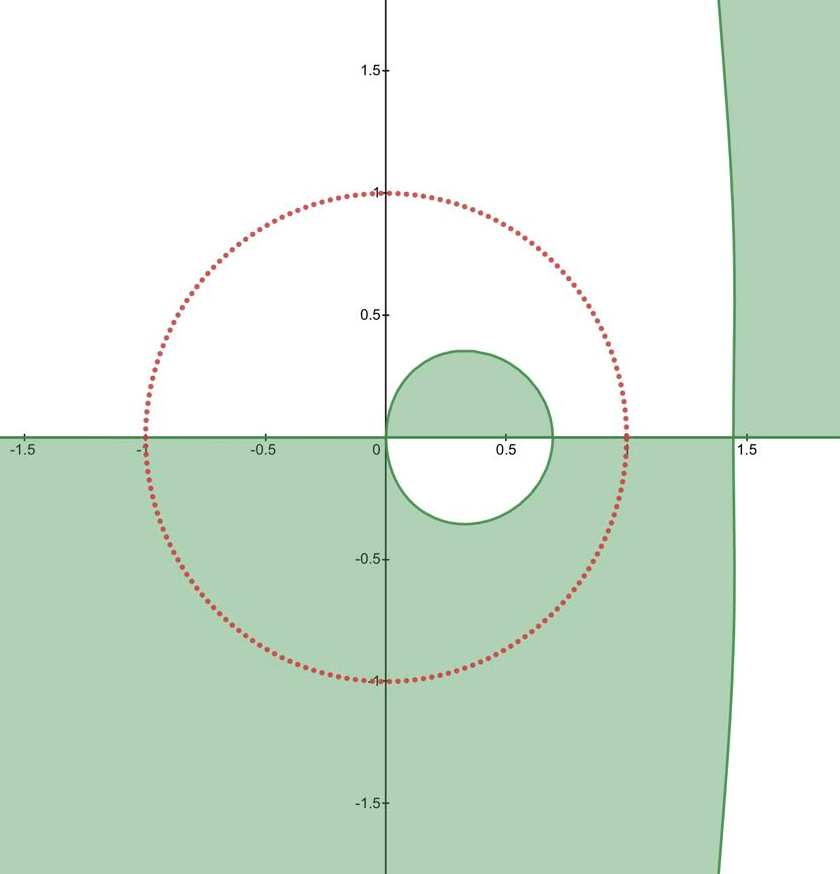}
	\end{minipage}
}%
    	\centering
    	\caption{\footnotesize The signature table   of $\re (2i\theta(z))$ and distribution of phase points.
    In the green region, we have $\re (2i\theta(z))>0$, which implies that $e^{-2it\theta(z)} \to 0$ as $t\to +\infty$; In the white region,  $\re (2i\theta(z))<0$,
    which implies that $e^{2it\theta(z)} \to 0$ as $t\to +\infty$. }
    	\label{proptheta}
    \end{figure}

For the case $\xi<-1$,  the two stationary phase points  satisfy the equation
  \begin{align}
&  z^2+1=\eta_1 z, \label{eq219}
\end{align}
which has two solutions
         \begin{align}
         &\xi_j = -\frac{1}{2}\left|  \eta_1 +( -1)^{j+1} \sqrt{ \eta_1^2  -4} \right|,\quad j=1, 2, \ \ {\rm for} \ \  \xi<-1,\label{xi2}
          \end{align}
with $\xi_2 <-1<\xi_1 <0$.

For the case $\xi>1$,  the two stationary phase points  satisfy the equation
  \begin{align}
&  z^2+1=\eta_2 z,
\end{align}
which has two solutions
         \begin{align}
         &\xi_j = \frac{1}{2}\left|  \eta_2 +( -1)^j \sqrt{ \eta_2^2  -4} \right|,\quad j=1, 2, \ \ {\rm for} \ \  \xi>1,\label{xi1}
          \end{align}
 with $0<\xi_1 <1<\xi_2 $.

 The number of phase points located on jump contour $\mathbb{R}$  allows  us to  divide the   half-plane  $(x,t)$   into
three  asymptotic regions.
\begin{itemize}
\item For  $|\xi|<1$,  there is no phase point on  $\mathbb{R}$. See Figure \ref{figurec}  and \ref{figured}.
This case is a   {solitonic  region}  studied by Cuccagana  and Jenkins \cite{CJ}.

\item    For  $ |\xi|>1$,  there are two phase points on  $\mathbb{R}$. See Figure  \ref{figurea}  and \ref{figuref}.
This case is a   solitonless  region  studied by us \cite{WF}.

\item   For   $|\xi|\approx 1$, there is one phase point on  $\mathbb{R}$. See Figure \ref{figureb}  and \ref{figuree}.
This case is a  transition region,  which is an  open question proposed by Cuccagana and Jenkins \cite{CJ},
and we will  solve it  in our present paper.

\end{itemize}

To describe the asymptotics  in  Figure  \ref{figureb}  and \ref{figuree},
 we aim to find the asymptotics of $q(x,t)$ in the transition region $\mathcal{P}_{\pm1}(x,t)$  defined
by (\ref{regpm})  in the next   Section \ref{sec3} and Section \ref{sec4}, respectively.

\section{Painlev\'e  asymptotics in transition   region   $\mathcal{P}_{-1}(x,t)$} \label{sec3}
\hspace*{\parindent}

In this section, we study  the Painlev\'e asymptotics    in the region  $(x,t)\in \mathcal{P}_{-1}(x,t) $.
Here we consider the region   $ -C< (\xi+1) t^{2/3}<0$ which corresponds to Figure \ref{figurea}. For brevity, we denote
$$   \mathcal{P}_{< -1}(x,t) =\{ (x,t): -C< (\xi+1) t^{2/3}<0\}.  $$
  In this case,
the two stationary points $\xi_1, \xi_2$ defined by (\ref{xi2}) are real and close to $z=-1$ at least the speed of $t^{-1/3}$ as $t\to +\infty$.

We make some modifications to the basic  RH problem \ref{RHP0}  to get a standard RH problem
  without poles and  singularities  by  performing   two essential operations.

\subsection{Modifications to the basic RH problem}\label{modi1}
\hspace*{\parindent}

Since the  poles $z_j \in \mathcal{Z}^+$ and $\bar z_j\in \mathcal{Z}^-$ are finite,
 distributed on the unitary circle and  far away from the jump contour $\mathbb{R}$ and critical line ${\rm Im}\theta(z)=0$,
they exponentially decay when we change their residues  into jumps on small circles.  This allows us to
first modify the basic  RH problem \ref{RHP0} by removing these poles.

\subsubsection{Removing poles }
\hspace*{\parindent}

    To remove poles $z_j \in \mathcal{Z}^+$ and $\bar z_j\in \mathcal{Z}^-$ and  open   the contour $(0,\infty)$  by  the  second  matrix   decomposition in  (\ref{v}),
    we  define  the  function
      \begin{equation}
        T(z)=
        \prod_{j=0}^{N-1} \Bigg(\frac{z-  z_j}{zz_j-1}\Bigg) \exp \left(  - i \int_0^{\infty}   \nu(\zeta) \left(\frac{1}{ \zeta-z}- \frac{1}{2\zeta} \right) \, \mathrm{d}\zeta \right),\label{tfcs}
        \end{equation}
where  $\nu(\zeta)$ is given by (\ref{nu}).  Then
  the following lemma holds.
\begin{lemma} \cite{CJ}
            The function $T(z)$ has the following properties
            \begin{itemize} \label{prop1}
                \item Analyticity: $T(z)$ is  meromorphic  in $\mathbb{C} \backslash   [0,\infty)   $ with  simple zeros at the points $z_j$ and simple poles at the points $\bar{z}_j$.
                \item Symmetry: $\overline{T(\bar{z})}=T(z)^{-1}=T(z^{-1})$.
                \item Jump condition:\begin{equation*}
                    T_+(z)=T_-(z)(1-|r(z)|^2), \quad z\in (0,\infty).
                \end{equation*}
                \item Asymptotic behavior: Let
                \begin{equation}
                    T(\infty) := \lim_{z \to \infty} T(z)= \Bigg(\prod_{j=0}^{N-1}\bar z_j\Bigg) \exp \left(i \int_{0}^{\infty} \frac{\nu(\zeta)}{2\zeta} \, \mathrm{d}\zeta \right).\label{Texpan1}
                \end{equation}
                  Then the asymptotic expansion at infinity is
                \begin{equation}
                   T(z)=T(\infty)\Bigg[ 1-\frac{i}{z}\Bigg(2\operatorname*{\sum}\limits_{j=0 }^{N-1} {\rm Im} z_{j}-\int_0^{\infty}\nu(\zeta)d\zeta\Bigg)+\mathcal{O}(z^{-2})\Bigg].\label{Texpan}
                \end{equation}
            \item Boundedness: The ratio $\frac{s_{11}(z)}{T(z)}$ is holomorphic in $\mathbb{C}^+$ and $\left|\frac{s_{11}(z)}{T(z)} \right|$ is bounded for $z\in \mathbb{C}^+$. Additionally,  $\frac{s_{11}(z)}{T(z)}$  extends as a continuous function on $\mathbb{R}^+$ with $\left|\frac{s_{11}(z)}{T(z)} \right|=1$ for $z
            \in (0,\infty)$.

            \end{itemize}
        \end{lemma}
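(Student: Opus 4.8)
The plan is to factor the function as $T(z)=B(z)\chi(z)$, where $B(z)=\prod_{j=0}^{N-1}\frac{z-z_j}{zz_j-1}$ is a finite Blaschke-type product and $\chi(z)=\exp\left(-i\int_0^\infty\nu(\zeta)\left(\frac{1}{\zeta-z}-\frac{1}{2\zeta}\right)\mathrm{d}\zeta\right)$ is a Cauchy-type factor, and then to verify each item factor by factor. The arithmetic fact that drives everything is $|z_j|=1$, so that $z_j^{-1}=\bar z_j$. For the analyticity statement, $B$ is rational with simple zeros exactly at the points $z_j\in\mathbb{C}^+$ and simple poles exactly at $z_j^{-1}=\bar z_j\in\mathbb{C}^-$, while $\chi$ is analytic and zero-free on $\mathbb{C}\setminus[0,\infty)$ because its exponent is a Cauchy integral against the integrable density $\nu$ of (\ref{nu}). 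Multiplying the two gives the claimed meromorphy on $\mathbb{C}\setminus[0,\infty)$ together with the stated simple zeros and poles.

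For the symmetry I would check $\overline{T(\bar z)}=T(z)^{-1}$ and $T(z^{-1})=T(z)^{-1}$ separately on each factor. Using $z_j^{-1}=\bar z_j$, a term-by-term manipulation of the linear-fractional factors gives $\overline{B(\bar z)}=B(z)^{-1}=B(z^{-1})$. For $\chi$, the relation $\overline{\chi(\bar z)}=\chi(z)^{-1}$ is immediate since $\nu$ is real, while the inversion symmetry $\chi(z^{-1})=\chi(z)^{-1}$ follows from the substitution $\zeta\mapsto\zeta^{-1}$ in the integral, together with the symmetry $\nu(\zeta^{-1})=\nu(\zeta)$ inherited from the scattering symmetry of $r$ in Lemma \ref{lemm3.2}. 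It is precisely the subtracted counterterm $\frac{1}{2\zeta}$ that makes the boundary terms produced by this change of variables cancel and reproduces $\chi(z)^{-1}$ exactly, so I would carry out that computation in full rather than treat it as routine.

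The jump and the asymptotic expansion are then short. Across $(0,\infty)$ the rational factor $B$ is continuous, so the jump of $T$ comes entirely from applying the Plemelj--Sokhotski formula to the Cauchy integral in the exponent of $\chi$; this produces the scalar factor $1-|r(z)|^2$ in the orientation/boundary-value convention fixing $T_\pm$, while the counterterm and $B$ contribute nothing. Letting $z\to\infty$, one has $B(z)\to\prod_{j}z_j^{-1}=\prod_{j}\bar z_j$ and $\chi(z)\to\exp\left(i\int_0^\infty\frac{\nu(\zeta)}{2\zeta}\mathrm{d}\zeta\right)$, which gives $T(\infty)$ as in (\ref{Texpan1}). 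Expanding $\frac{1}{\zeta-z}=-z^{-1}-\zeta z^{-2}-\cdots$ and $\log\frac{z-z_j}{zz_j-1}=-\log z_j+z^{-1}(z_j^{-1}-z_j)+\cdots$, and collecting the $z^{-1}$ coefficients with $z_j^{-1}-z_j=-2i\,\mathrm{Im}\,z_j$, yields the expansion (\ref{Texpan}).

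The delicate item, and the step I expect to be the main obstacle, is the boundedness of $s_{11}/T$. Holomorphy in $\mathbb{C}^+$ is structural: there $T$ has no poles and its only zeros are the simple zeros at the $z_j$, which coincide with the zeros of $s_{11}$, so the quotient has only removable singularities. For the unimodularity on $(0,\infty)$ I would combine, on the real line, the jump relation for $T$, the symmetry consequence $|T_+|\,|T_-|=1$, and the scattering identity $|s_{11}|^2(1-|r|^2)=1$ from (\ref{scatter2}) (noting also $|B|=1$ on $\mathbb{R}$); these pin down the boundary modulus of $T$ and force $|s_{11}/T|=1$. The real work lies near the exceptional endpoints $z=0$ and $z=1$: in the generic case $|r|\to 1$ there, so $\nu$ develops a logarithmic singularity and $s_{11}$ has a simple pole at $z=1$, and one must show that the logarithmic singularity of the Cauchy integral defining $T$ near these points exactly offsets the pole of $s_{11}$, so that $s_{11}/T$ stays bounded and extends continuously up to $\mathbb{R}^+$. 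This endpoint matching, not the bulk estimates, is where the care is needed.
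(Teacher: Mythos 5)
You should know at the outset that the paper itself contains no proof of this lemma: it is quoted verbatim with a citation to \cite{CJ}, so the benchmark is the argument there, and your architecture --- factoring $T(z)=B(z)\chi(z)$ into the Blaschke-type product and the Cauchy-exponential, using $|z_j|=1$ so that $z_j^{-1}=\bar z_j$, verifying the two symmetries factor by factor (your insistence on carrying out the $\zeta\mapsto\zeta^{-1}$ substitution in full, using $\nu(\zeta^{-1})=\nu(\zeta)$ and the exact cancellation produced by the counterterm $\frac{1}{2\zeta}$ via the partial fraction $\frac{z}{u(z-u)}=\frac1u+\frac{1}{z-u}$, is exactly right), and extracting (\ref{Texpan1})--(\ref{Texpan}) from $\log\frac{z-z_j}{zz_j-1}=-\log z_j+z^{-1}(\bar z_j-z_j)+\mathcal{O}(z^{-2})$ together with $\frac{1}{\zeta-z}=-z^{-1}+\mathcal{O}(z^{-2})$ --- is precisely the standard one. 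Your identification of the endpoint behavior at $z=1$ (where $s_{11}$ has a simple pole and $\nu$ a logarithmic singularity) as the only delicate step in the boundedness item is also correct, though note that at the other endpoint $z=0$ one has $r(0)=0$, hence $\nu(0)=0$, so the situation there is milder than your phrasing suggests.

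There is, however, one genuine flaw: your treatment of the jump and of the unimodularity of $s_{11}/T$ asserts two conclusions that are arithmetically incompatible, and the step "combine the jump relation, $|T_+||T_-|=1$, and (\ref{scatter2})" would fail as written. Carrying out Sokhotski--Plemelj honestly on the exponent of (\ref{tfcs}), with $\nu(\zeta)=-\frac{1}{2\pi}\log\left(1-|r(\zeta)|^2\right)>0$ from (\ref{nu}), gives boundary values $e^{\pm\pi\nu(x)}$ times a common unimodular factor, hence $T_+(x)=T_-(x)\,e^{2\pi\nu(x)}=T_-(x)\left(1-|r(x)|^2\right)^{-1}$ on $(0,\infty)$ --- the \emph{reciprocal} of the factor you claim Plemelj produces; no choice of "orientation/boundary-value convention" rescues the printed factor, since the contour orientation is fixed in the paper. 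The incompatibility is then forced by your own ingredients: the symmetry $\overline{T(\bar z)}=T(z)^{-1}$ gives $|T_-(x)|=|T_+(x)|^{-1}$, so the printed jump $T_+=T_-(1-|r|^2)$ would yield $|T_+|=(1-|r|^2)^{1/2}=|s_{11}|^{-1}$ by (\ref{scatter2}), whence $|s_{11}/T_+|=|s_{11}|^{2}=(1-|r|^2)^{-1}$, which is \emph{unbounded} at $z=1$ in the generic case --- the opposite of the boundedness item, failing exactly at the point where you promised care. With the reciprocal jump actually implied by (\ref{tfcs}) everything is consistent: $|T_+|=e^{\pi\nu}=|s_{11}|$ on $(0,\infty)$, so $|s_{11}/T_+|=1$ including the matching blow-ups at $z=1$; and this is also the relation the paper needs downstream, since conjugating the diagonal factor $(1-|r|^2)^{\sigma_3}$ of (\ref{v}) by $T^{\sigma_3}$ in the transformation (\ref{trans1}) produces $\left((1-|r|^2)\,T_+/T_-\right)^{\sigma_3}$, which reduces to $I$ only for $T_+/T_-=(1-|r|^2)^{-1}$. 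In short, the jump item as printed in the lemma is inconsistent with the definition (\ref{tfcs}) and with the boundedness item (an inherited sign typo); your proof must compute the Plemelj jump explicitly, commit to the consistent convention, and flag the discrepancy, rather than asserting both the printed jump factor and $|s_{11}/T|=1$ simultaneously.
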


For $z_j\in \mathcal{Z}^+$ on the circle  $|z|=1$, define
\begin{equation}\label{definerho}
 \rho <  \frac{1}{2} {\rm min}  \{   \operatorname*{min}\limits_{ z_j,  z_l\in  \mathcal{Z}^+ }  |z_j-z_l|,    \operatorname*{min}\limits_{z_j\in  \mathcal{Z}^+} |{\rm Im} z_j|,  \operatorname*{min}\limits_{z_j\in \mathcal{Z}^+,  { \rm Im} \theta(z) =0 }| z_j-z|  \},
 \end{equation}
and  make small circles  at  the center  $z_j $ and  $\bar z_j $   with the radius $\rho$, respectively.
The direction of each small circle in $\mathbb{C}^+$  is  counterclockwise, and  that of each small circle in $\mathbb{C}^-$ is clockwise. See Figure \ref{Djump65}.

  Notice that the exponential factors  in the residue conditions  (\ref{fresm1}) and (\ref{fresm2})
  increase with $t$.  To arrive at an RH problem with decreasing off-diagonal terms in the jump matrices,
we further  construct the interpolation function
            \begin{equation}
           G(z) = \begin{cases}
             \left(\begin{array}{cc} 1& - \displaystyle {\frac{z-z_j}{c_j e^{2it\theta(z_j)}} }\\ 0&1\end{array}  \right), \;   |z-z_j|<\rho,   \\[4pt]
           \left(\begin{array}{cc} 1&0 \\ -\displaystyle {\frac{z-\bar{z}_j}{\bar{c}_j e^{ -2it\theta(\bar{z}_j) }}}&1\end{array}  \right), \;   |z-\bar{z}_j|<\rho,\\
            \left(\begin{array}{cc} 1&0 \\ 0&1\end{array}  \right), \;  \;  \; \text{otherwise},
           \end{cases}
       \end{equation}
  where $z_j \in \mathcal{Z}^+$ and  $\bar{z}_j \in \mathcal{Z}^-$.
 Define a directed path
\begin{equation*}
	\Sigma^{(1)}=\mathbb{R} \cup\left[\bigcup_{j=0}^{N-1}  \{ z\in \mathbb{C}: |z-z_j|=\rho,  or \   |z-\bar z_j|=\rho\} \right],
\end{equation*}
where the direction on $\mathbb{R}$ goes from left to right, depicted in Figure \ref{Djump65}.

\begin{figure}
\begin{center}
\begin{tikzpicture}[scale=0.8]
 \draw [dotted ] (0, 0) circle [radius=2];
\draw[   -latex ](-5, 0)--(5, 0);
\node    at (5.5, 0)  {\footnotesize $Re z$};
  \node    at (-0.9,  1)  {\footnotesize $z_j$ };
 \node    at (-0.9, -1)  {\footnotesize $\bar z_j $};
\node    at (0, -0.3)  {\footnotesize $0$};
        \coordinate (A) at (1.73,  1);
		\coordinate (B) at (1.73,  -1);
		\coordinate (C) at (1,  1.73);
		\coordinate (D) at (1,  -1.73);
		\coordinate (E) at (-1.73,  1);
		\coordinate (F) at (-1.73,  -1);
		\coordinate (G) at (-1,  1.73);
		\coordinate (H) at (-1,  -1.73);
		\coordinate (I) at (0,  2);
		\coordinate (J) at (0,  -2);
		\coordinate (K) at (0,  0);
		\coordinate (L) at (2,  0);
		\coordinate (M) at (-2,  0);
\fill[red] (A) circle (1.5pt);
\fill[red] (B) circle (1.5pt);
\fill[red] (C) circle (1.5pt);
\fill[red] (D) circle (1.5pt);
\fill[red] (E) circle (1.5pt);
\fill[red] (F) circle (1.5pt);
\fill[red] (G) circle (1.5pt);
\fill[red] (H) circle (1.5pt);
\fill[red] (I) circle (1.5pt);
\fill[red] (J) circle (1.5pt);
\fill[blue] (K)  circle (1.5pt);
\fill[blue] (L) circle (1.5pt);
\fill[blue] (M) circle (1.5pt);
 \draw [] (A) circle [radius=0.3];
  \draw [  -> ]  (2.03,  1) to  [out=90,  in=0] (1.73,  1.3);	
  \draw [] (B) circle [radius=0.3];
  \draw [   -> ]  (1.43,  -1) to  [out=270,  in=180] (1.73,  -1.3);	
     \draw [] (C) circle [radius=0.3];
   \draw [   -> ]  (1.3,  1.73) to  [out=90,  in=0] (1,  2.03);	
        \draw [] (D) circle [radius=0.3];
   \draw [  -> ]  (0.7,  -1.73) to  [out=270,  in=180] (1,  -2.03);

   \draw [] (E) circle [radius=0.3];
   \draw [  -> ]  (-1.43,  1) to  [out=90,  in=0]  (-1.73,  1.3);	
    \draw [] (F) circle [radius=0.3];
     \draw [  -> ]  (-2.03, -1) to  [out=270,  in=180]  (-1.73,  -1.3);	
     \draw [] (G) circle [radius=0.3];
      \draw [-> ]  (-0.7,  1.73) to  [out=90,  in=0]   (-1,  2.03);
      \draw [] (H) circle [radius=0.3];
         \draw [ -> ]  (-1.3,  -1.73) to  [out=270,  in=180]   (-1,  -2.03);
       \draw [] (I) circle [radius=0.3];
          \draw [ -> ]  (0.3,  2) to  [out=90,  in=0]   (0,  2.3);
        \draw [] (J) circle [radius=0.3];
 \draw [  -> ]  (-0.3,  -2) to  [out=270,  in=180]   (0,  -2.3);
  \node    at (2.3, -0.3)  {\footnotesize $1$};
    \node    at (-2.4, -0.3)  {\footnotesize $-1$};
\end{tikzpicture}
\end{center}
\caption{\footnotesize The jump contour  $\Sigma^{(1)}$  of  $M^{(1)}(z)$.}
\label{Djump65}
\end{figure}
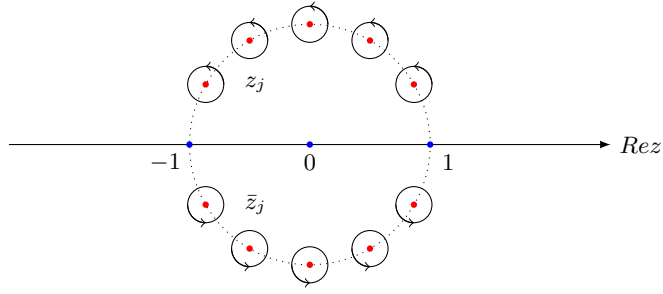

Denote  two  factorizations of  jump matrix  by
\begin{align}
& \left(\begin{array}{cc}
		1 &  \overline{r(z)} T(z)^{-2}e^{-2it\theta(z)} \\
		0 & 1
	\end{array}\right)
		\left(\begin{array}{cc}
			1 & 0\\
r(z)T(z)^2e^{ 2it\theta(z)} & 1
		\end{array}\right):=b_-^{-1} b_+,  \label{opep1}\\
&\left(\begin{array}{cc}
		1 & 0\\
		-\frac{ r(z) }{1-|r|^2}  T_-(z)^{2}  e^{ 2it\theta(z)} & 1
	\end{array}\right)\left(\begin{array}{cc}
	1 & -\frac{ \overline{r(z)}}{1-|r|^2}T_+(z)^{-2} e^{-2it\theta(z)}\\
	0 & 1
\end{array}\right):=B_-^{-1}B_+,  \label{opep2}
\end{align}
and  make the following transformation
       \begin{equation}
        M^{(1)}(z)=T(\infty)^{-\sigma_3} M(z) G(z)T(z)^{\sigma_3}, \label{trans1}
       \end{equation}
 then $M^{(1)}(z)$ satisfies the RH problem as follows.

 \begin{prob} \label{m1}
 Find  $M^{(1)}(z)=M^{(1)}(z;x,t)$ with properties
       \begin{itemize}
        \item   $M^{(1)}(z)$ is  analytical  in $ \mathbb{C} \setminus \Sigma^{(1)}$.
        \item   $M^{(1)}(z)=\sigma_1 \overline{M^{(1)}(\bar{z})}\sigma_1 =z^{-1}M^{(1)}(z^{-1})\sigma_1$.
        \item  $M^{(1)}(z)$ satisfies the jump condition
        \begin{equation*}
            M^{(1)}_+(z)=M^{(1)}_-(z)V^{(1)}(z),
        \end{equation*}
        where
\begin{equation*}
	V^{(1)}(z)=\left\{\begin{array}{ll}
B_-^{-1}B_+,  \  \ \  \;  z\in (0,  \infty),\\[8pt]
b_-^{-1}b_+,  \  \ \  \;  z\in (-\infty, \xi_2) \cup (\xi_1,0),\\[6pt]
T(z)^{-\sigma_3} V(z)T(z)^{\sigma_3}, \  \ \ \;  z\in (\xi_2,  \xi_1),\\[6pt]
		\left(\begin{array}{cc}
			1 & -\frac {  z-z_j} {c_j} T^{-2}(z) e^{ -2it\theta( z_j)} \\
			0 & 1
		\end{array}\right),  \ \  \ \ 	 |z-z_j|=\rho,\\[12pt]
		\left(\begin{array}{cc}
			1 & 0	\\
			 \frac {  z-\bar z_j} {\bar c_j} T^{ 2}(z) e^{  2it\theta(\bar z_j)} & 1
		\end{array}\right),   \ \  \ \	|z-\bar z_j|=\rho.
	\end{array}\right.
\end{equation*}
        \item $ M^{(1)}(z)$ admits the  asymptotic behaviors
        \begin{align*}
                &M^{(1)}(z)=I+\mathcal{O}(z^{-1}),	\quad  z \to  \infty,\\
                &zM^{(1)}(z)=\sigma_1+\mathcal{O}(z), \quad z \to 0.
        \end{align*}

    \end{itemize}
\end{prob}

  Since   the jump matrices on the circles $|z-z_j|=\rho$ or $|z-\bar z_j|=\rho$ exponentially decay to the
  identity matrix as $t \to  \infty$,  it can be shown that   the RH problem \ref{m1} is  asymptotically equivalent to the following  RH problem.

 \begin{prob} \label{m2}
 Find   $M^{(2)}(z)=M^{(2)}(z;x,t)$ with properties
       \begin{itemize}
        \item   $M^{(2)}(z)$ is  analytical  in $\mathbb{C}\backslash \mathbb{R}$.
        \item  $M^{(2)}(z)=\sigma_1 \overline{M^{(2)}(\bar{z})}\sigma_1 =z^{-1}M^{(2)}(z^{-1})\sigma_1$.
        \item $M^{(2)}(z)$ satisfies the jump condition
        \begin{equation*}
            M^{(2)}_+(z)=M^{(2)}_-(z)V^{(2)}(z),
        \end{equation*}
        where
\begin{equation}
	V^{(2)}(z)=\left\{\begin{array}{ll}
B_-^{-1}B_+,    \; z\in (0, \infty),\\[6pt]
b_-^{-1}b_+,    \;  z\in (-\infty, \xi_2) \cup (\xi_1,0),\\[6pt]
T(z)^{-\sigma_3} V(z)T(z)^{\sigma_3},  \;  z\in (\xi_2,  \xi_1).
	\end{array}\right. \label{jumpv2}
\end{equation}

        \item $M^{(2)}(z)$ admits the asymptotic behaviors
        \begin{align*}
                &M^{(2)}(z)=I+\mathcal{O}(z^{-1}),	\quad  z \to  \infty,\\
                &zM^{(2)}(z)=\sigma_1+\mathcal{O}(z), \quad z \to 0.
        \end{align*}

    \end{itemize}
\end{prob}

\begin{proposition}  The solution of   RH problem \ref{m1} can be  approximated by the solution of  RH problem \ref{m2}
 \begin{equation}
     M^{(1)}(z ) =   M^{(2)}(z ) \left( I + \mathcal{O}\left(e^{-c t}\right) \right),\label{trans2}
 \end{equation}
 where $c$ is a constant.
\end{proposition}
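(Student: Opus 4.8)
The plan is to realize (\ref{trans2}) as a residual small-norm Riemann--Hilbert estimate. Both problems carry \emph{identical} data on $\mathbb{R}$, so their quotient is analytic off the small circles, where its jump is exponentially close to the identity; the standard Beals--Coifman inversion then controls the quotient. First I would introduce the error matrix $E(z):=M^{(1)}(z)\,M^{(2)}(z)^{-1}$ and compute its jumps. On $\mathbb{R}$ both matrices satisfy the same jump $V^{(2)}$, so
\[
E_+ = M^{(1)}_-\,V^{(2)}\,(V^{(2)})^{-1}(M^{(2)}_-)^{-1} = M^{(1)}_-(M^{(2)}_-)^{-1} = E_-,
\]
and $E$ has no jump across the real axis. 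At $z=0$ both matrices obey $zM(z)\to\sigma_1$ with the same leading coefficient; writing $M^{(k)}(z)=z^{-1}\sigma_1+O(1)$ and using $\sigma_1^2=I$, the singular parts cancel in $E=M^{(1)}(M^{(2)})^{-1}$, so $E$ continues holomorphically through $z=0$. Hence $E$ is analytic off the compact contour $\Sigma_E:=\bigcup_{j}\{|z-z_j|=\rho\}\cup\{|z-\bar z_j|=\rho\}$, normalized by $E(z)\to I$ as $z\to\infty$, and on each circle carries the jump $V_E=M^{(2)}\,V^{(1)}\,(M^{(2)})^{-1}$.

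Next I would estimate $V_E-I$. On the circle $|z-z_j|=\rho$ the nontrivial entry of $V^{(1)}-I$ is $-\frac{z-z_j}{c_j}T^{-2}(z)\,e^{-2it\theta(z_j)}$; the prefactor is uniformly bounded there because $\rho$ is fixed by (\ref{definerho}) so that the circle avoids the zeros and poles of $T$, while $|e^{-2it\theta(z_j)}|=e^{-t\,\re(2i\theta(z_j))}$. Since every eigenvalue $z_j$ lies on the unit circle strictly inside the decay region $\re(2i\theta(z))>0$ and bounded away from the critical line $\im\theta(z)=0$, I may set $c:=\min_j\re(2i\theta(z_j))>0$ and obtain $|V^{(1)}-I|\le C e^{-ct}$ there; the mirror estimate holds on the $\bar z_j$-circles by the symmetry $\re(2i\theta(\bar z_j))=-\re(2i\theta(z_j))$. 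As $M^{(2)}$ is analytic, uniformly bounded and invertible on a neighborhood of the compact set $\Sigma_E$ (disjoint from $\mathbb{R}$ and from $z=0$), conjugation preserves the decay, so that $\|V_E-I\|_{L^\infty(\Sigma_E)\cap L^2(\Sigma_E)}\le Ce^{-ct}$.

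Finally I would close the argument with the small-norm theory. The matrix $E$ is encoded through the singular integral equation $(1-C_{V_E})\mu=I$, where the Cauchy operator $C_{V_E}$ on $\Sigma_E$ has operator norm $O(e^{-ct})$; thus $1-C_{V_E}$ is invertible for $t$ large, the solution exists and is unique, and
\[
E(z)=I+\frac{1}{2\pi i}\int_{\Sigma_E}\frac{\mu(s)\,(V_E(s)-I)}{s-z}\,ds=I+O(e^{-ct})
\]
uniformly for $z$ away from $\Sigma_E$, in particular as $z\to\infty$. Consequently $M^{(1)}=E\,M^{(2)}=(I+O(e^{-ct}))M^{(2)}$, which by the boundedness of $M^{(2)}$ is equivalent to (\ref{trans2}).

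I expect the only genuinely non-routine ingredient to be the second step: confirming that \emph{every} discrete eigenvalue sits strictly in the region $\re(2i\theta)>0$, with $\rho$ chosen small enough that the entire circle remains there, so that a single constant $c>0$ controls all of the circle jumps simultaneously. This is a $t$-independent geometric fact to be read off the signature table in Figure \ref{proptheta}; once it is secured, together with the uniform boundedness of $M^{(2)}$ on $\Sigma_E$, the Beals--Coifman inversion and the resulting $O(e^{-ct})$ bound are entirely standard.
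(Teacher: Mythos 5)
Your proposal is correct and follows essentially the same route as the paper: the paper's proof is a one-line appeal to the Beals--Coifman theorem and the associated norm estimates, and your argument (quotient matrix $E=M^{(1)}(M^{(2)})^{-1}$, cancellation of the jump on $\mathbb{R}$ and of the singularity at $z=0$, exponential smallness $\mathcal{O}(e^{-ct})$ of the conjugated jumps on the fixed circles since each $z_j$ lies strictly in the region $\re(2i\theta)>0$, then small-norm inversion) is precisely the standard execution of that citation. No gap.
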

\begin{proof}
The result is derived from the theorem of Beals-Coifman and the corresponding   norm estimates.
\end{proof}

Next  we   remove the spectral singularity $z=0$   by  an appropriate  transformation.

\subsubsection{Removing singularity}\label{remove2}
\hspace*{\parindent}

In order to remove the   singularity   $z=0$, we  make a transformation
\begin{align}
M^{(2)}(z)=\left( I+ \frac{1}{z} \sigma_1 M^{(3)}(0)^{-1} \right ) M^{(3)}(z),\label{trans3}
\end{align}
then $M^{(3)}(z)$ satisfies the RH problem without spectral singularity.

 \begin{prob} \label{ms3}
 Find  $M^{(3)}(z)=M^{(3)}(z;x,t)$ with properties
       \begin{itemize}
        \item  $M^{(3)}(z)$ is  analytical  in $\mathbb{C}\backslash \mathbb{R}$.
        \item  $M^{(3)}(z)=\sigma_1 \overline{M^{(3)}(\bar{z})}\sigma_1 =\sigma_1 M^{(3)}(0)^{-1}M^{(3)}(z^{-1})\sigma_1$.
        \item $M^{(3)}(z) $ satisfies the jump condition
        \begin{equation*}
            M^{(3)}_+(z)=M^{(3)}_-(z)V^{(2)}(z),
        \end{equation*}
        where
        $V^{(2)}(z)$ is given by   (\ref{jumpv2}).
        \item $M^{(3)}(z)$ admits the  asymptotics    $  M^{(3)}(z)=I+\mathcal{O}(z^{-1}),	\quad  z \to  \infty.$

    \end{itemize}
\end{prob}

\begin{proof}
We show that if $M^{(3)}(z)$ satisfies the RH problem \ref{ms3}, then $M^{(2)}(z)$ satisfies the RH problem \ref{m2}. Firstly, we verify the jump condition
\begin{align}
M^{(2)}_+(z)=\left( I+ \frac{1}{z} \sigma_1 M^{(3)}(0)^{-1} \right )  M^{(3)}_+(z)= M^{(2)}_-(z) V^{(2)}(z).   \label{udh}
\end{align}
To show the singularity of $z=0$, substituting the expansion
$$M^{(3)}(z) = M^{(3)}(0)+ z \widetilde{M}^{(3)}(z),$$
into (\ref{udh}) yields
\begin{align}
M^{(2)}(z)& =  \frac{1}{z} \sigma_1 + M^{(3)}(0) + z \widetilde{M}^{(3)}(z)+\sigma_1 M^{(3)}(0)^{-1} \widetilde{M}^{(3)}(z)\nonumber\\
&= \frac{1}{z} \sigma_1 +\mathcal{O}(1), \ z\to 0. \nonumber
\end{align}
\end{proof}

\subsection{Transformation to a hybrid $\bar{\partial}$-RH problem} \label{modi2}

\hspace*{\parindent}


In this section, we open the jump contour $\mathbb{R}\setminus (\xi_2, \xi_1)$ by the $\bar{\partial}$ extension.
Denote $\xi_{0}=0$, $\gamma=(\xi_{0}+\xi_{1})/2$, and
    \begin{align*}
    l_1\in \left(0,  |\gamma|\sec \phi\right), \ \  l_2\in \left(0,  |\gamma|\tan \phi\right),
      \end{align*}
where $\phi=\phi(\xi)$.  We define the following rays  passing through $\xi_0, \xi_1$ and $ \xi_2$
 \begin{align*}
     & \Sigma_{0}=e^{i\left(\pi - \phi\right)} l_1,\quad \Sigma_{1}=\xi_1+e^{i\phi} l_1,\; \quad \Sigma_{2}=\xi_2+e^{i\left(\pi - \phi\right)} \mathbb{R}^+, \\
        & \Sigma_{3}=e^{i\phi} \mathbb{R}^+,  \quad L =\gamma+ e^{i\pi/2}l_2,
     \end{align*}
  $\overline{\Sigma}_{j}$ and  $\overline{L} $  denote their  conjugate  rays. These  rays are opened at  a sufficiently small angle $0<\phi <\pi/4$
 such that they all fall into their decaying   regions,
  which correspond to  the signature table of $\re \left(2i\theta(z)\right)$.
  The  opened sectors with  the above  jump lines are denoted by $\Omega_{j}$ and $\overline{\Omega}_{j} $.  See   Figure \ref{signdbar},
       which corresponds to  Figure \ref{figurea}.

 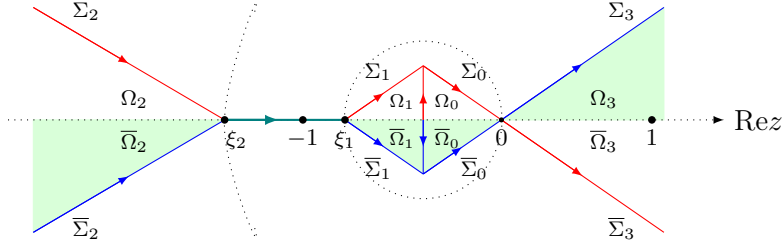
\begin{figure}
        \begin{center}
  \begin{tikzpicture}[scale=0.8]
     \draw[green!15, fill=green!15] (-4.3, 0)--(-7.479, -1.87)--(-7.479,0)--(-4.1, 0);
  \draw[green!15, fill=green!15] (0.3, 0)--(3, 1.87)--(3,0)--(0.3, 0);
  \draw[green!15, fill=green!15] (0.3, 0)--(-2.3,0)--(-1,-0.9)--(0.3, 0);
 \draw [dotted  ] (-3.8,1.8) to [out=75,in=90] (-4.3,0);
  \draw [dotted ] (-3.8,-1.8) to [out=-75,in=-95] (-4.3,0);
                \draw[dotted,-latex](-7.9,0)--(4,0)node[black,right]{Re$z$};

                \node[below] at (0.3,0) {\footnotesize $0$};
                 \node[below] at (-3,0) {\footnotesize $-1$};
                    \node[below] at (2.8,0) {\footnotesize $ 1$};
                                 \draw [ red] (0.3, 0)--(-1,0.9);
                                 \draw[red](-2.3,0)--(-1,0.9);
                               \draw [ red] (0.3,0 )--(3,-1.87);
                             \draw [-latex,red] (-2.3,0)--(-1.65,0.45);
                                 \draw [ red] (-4.3, 0)--(-7.479, 1.87);
                           \draw [-latex, red] (-7.479, 1.87)--(-5.8895, 0.935);
   \draw [-latex,red] (0.3,0)--(1.65,-0.935 );
   \draw[-latex,red](-1,0.9 )--(-0.35,0.45);

   \draw[red](-1,0)--(-1,0.9);
   \draw[red,-latex](-1,0)--(-1,0.45);
   \draw[blue](-1,-0.9)--(-1,0);
   \draw[blue,-latex](-1,0)--(-1,-0.45);

                             \draw [ blue] (-2.3,0)--(-1, -0.9);
                                \draw [-latex,blue] (-2.3,0)--(-1.65,-0.45);

                                \draw [blue] (-4.3, 0)--(-7.479, -1.87);
                                 \draw [thick,teal] (-4.3, 0)--(-2.3, 0);
                                  \draw [thick,teal,-latex](-4.3, 0)--(-3.4,0);

                                \draw [blue] (0.3,0 )--(3,1.87);
                                   \draw [blue] (0.3, 0)--(-1,-0.9);
                                 \draw [-latex,blue]  (-7.479, -1.87)--(-5.8895, -0.935);

                                   \draw [-latex,blue] (0.3,0)--(1.65,0.935 );
                                   \draw[-latex,blue](-1,-0.9 )--(-0.35,-0.45);


                \node[shape=circle,fill=black, scale=0.15]  at (-3,0){0} ;
                \node[shape=circle,fill=black, scale=0.15]  at (2.8,0){0} ;
            \node[shape=circle,fill=black,scale=0.15] at (-2.3,0) {0};
             \node[shape=circle,fill=black,scale=0.15] at (-4.3,0) {0};
                \node[shape=circle,fill=black,scale=0.1] at (0.3,0) {0};

                                                 \node[below] at (-4.1,0) {\scriptsize $\xi_2$};
                                         \node[below] at (-2.3,0) {\scriptsize $\xi_1$};
                                      \node  at (-5.8,0.35) {\scriptsize $\Omega_{2}$};
                                         \node  at (-5.8,-0.35) {\scriptsize $\overline{\Omega}_{2}$};
                                           \node  at (2,0.35) {\scriptsize $\Omega_{3}$};
                                         \node  at (2,-0.35) {\scriptsize $\overline{\Omega}_{3}$};
                                             \node  at (-1.35,0.3) {\tiny $\Omega_{1}$};
                                             \node  at (-0.6,0.3) {\tiny $\Omega_{0}$};
                                             \node  at (-1.35,-0.3) {\tiny $\overline{\Omega}_{1}$};
                                             \node  at (-0.6,-0.3) {\tiny $\overline{\Omega}_{0}$};
                                         \node  at (-6.6,1.8) {\scriptsize  $\Sigma_{2}$};
                                         \node  at (-6.6,-1.8) {\scriptsize $\overline{\Sigma}_{2}$};

                                            \node  at (2.3,1.8) {\scriptsize $\Sigma_{3}$};
                                         \node  at (2.3,-1.8) {\scriptsize $\overline{\Sigma}_{3}$};

                                           \node  at (-1.75,0.8) {\scriptsize $\Sigma_{1}$};
                                             \node  at (-0.15,0.8) {\scriptsize $\Sigma_{0}$};

                                             \node  at (-1.75,-0.8) {\scriptsize  $\overline{\Sigma}_{1}$};
                                             \node  at (-0.15,-0.8) {\scriptsize $\overline{\Sigma}_{0}$};
\draw [dotted ](-1,0) circle (1.31);
  \end{tikzpicture}
            \caption{\footnotesize{Open the jump contour $\mathbb{R}\setminus (\xi_2, \xi_1)$ along  red rays  and blue rays.
          The green regions
             are continuous extension sectors  with $\re \left(2i\theta(z)\right)>0$, while the white regions are  continuous extension sectors with
             $\re \left(2i\theta(z)\right) <0$. }}
      \label{signdbar}
        \end{center}
    \end{figure}

  To determine the decaying properties  of the oscillating factors   $ e^{\pm 2it\theta(z)}$,  we  especially  estimate
  $ \re(2i\theta(z))$ in regions  $\Omega_{j}, \ j=0, 1, 2, 3$.

\begin{proposition}\label{reprop1}  Let $\xi<-1$ and  $(x,t)\in \mathcal{P}_{<-1}(x,t)$.
  Then    the following estimates hold.
 \begin{itemize}
  \item[{\rm   Case I.}] (corresponding to $z=0$)
    \begin{align}
    & \re(2i\theta(z))\geq   |\sin 2\varphi|| v| ,\quad z\in  \overline{\Omega}_{0} \cup \Omega_{3},\label{estm1}\\
 & \re(2i\theta(z))\leq -   |\sin 2 \varphi|| v| ,\quad z\in \Omega_{0}\cup \overline{\Omega}_{3},
    \end{align}
    where $z =u+iv$ and $ \varphi =\arg z  $.
   \item[{\rm   Case II.}] (corresponding to $z=\xi_1$)
    \begin{align}
       & \re(2i\theta(z))\leq -\frac{4}{|\xi_1|}u^2 |v|,\quad z\in \Omega_{1},\\
       & \re(2i\theta(z))\geq \frac{4}{|\xi_1|}u^2 | v|,\quad z\in  \overline{\Omega}_{1},
    \end{align}
    where $z=\xi_1+u+iv$.
     \item[{\rm   Case III.}](corresponding to $z=\xi_2$)
                 \begin{align}\label{omega2est1}
       &	\re(2i\theta(z))\leq  \begin{cases}
        -\frac{1}{8|\xi_2|^3}u^2 |v|,\quad z\in \Omega_{2} \cap\{ |z| \leq 2\},\\[4pt]
       	  -2 \sqrt{2}|v|,\quad z\in \Omega_{2} \cap  \{  |z|>2 \}.	
       	\end{cases}\\
      & 	\re(2i\theta(z))\geq  \begin{cases}
        \frac{1}{8|\xi_2|^3}u^2 |v|,\quad z\in  \overline{\Omega}_{2} \cap \{  |z| \leq 2 \},\\[4pt]
       	  2 \sqrt{2}|v|,\quad z\in  \overline{\Omega}_{2} \cap  \{  |z|>2 \},	
       	\end{cases}
       \end{align}
    where $z=\xi_2+ u+iv$.
 \end{itemize}
  \end{proposition}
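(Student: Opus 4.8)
The plan is to estimate the harmonic quantity $\re(2i\theta)$ sector by sector, starting from the explicit expression (\ref{theta01}). Writing $z=u+iv$ I first record the factorization
$$\re(2i\theta(z))=2v\Big[u-\xi+\frac{u}{(u^2+v^2)^2}-\frac{\xi}{u^2+v^2}\Big],$$
which exhibits the overall factor $v=\im z$ and shows that $\re(2i\theta)$ vanishes on $\mathbb{R}$. Since $\theta$ has real coefficients, $\theta(\bar z)=\overline{\theta(z)}$, hence $\re(2i\theta(\bar z))=-\re(2i\theta(z))$; this odd symmetry turns every lower estimate on $\overline{\Omega}_j$ into the matching upper estimate on $\Omega_j$, so it suffices to treat $\Omega_0,\Omega_1,\Omega_2,\Omega_3$. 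Finally I would record, from (\ref{2theta}), that $\theta'(\xi_1)=\theta'(\xi_2)=0$ and compute $\theta''$ and $\theta'''$ at the stationary points as explicit rational functions of $\xi_j$; from the factorization $\theta'(z)=-z^{-2}(z-\xi_1)(z-\xi_2)(\eta-\eta_2)$ one reads off $\theta''(\xi_1)>0$, $\theta''(\xi_2)<0$, and that each $\theta''(\xi_j)$ vanishes proportionally to $\xi_1-\xi_2$.

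\textbf{Case I (near $z=0$).} Here I pass to polar coordinates $z=\rho e^{i\varphi}$, in which the displayed identity becomes
$$\re(2i\theta)=\sin(2\varphi)\big(\rho^2+\rho^{-2}\big)-2\xi\sin\varphi\big(\rho+\rho^{-1}\big).$$
On $\Omega_3$ one has $\varphi\in(0,\phi)$, so both terms are nonnegative (recall $\xi<-1$), and dropping the second leaves $\re(2i\theta)\ge\sin(2\varphi)(\rho^2+\rho^{-2})\ge|\sin2\varphi|\,|v|$ via the elementary bound $\rho^2+\rho^{-2}\ge\rho\ge\rho\sin\varphi=|v|$; this is (\ref{estm1}). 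On $\Omega_0$ one has $\varphi$ near $\pi$, so $\sin2\varphi<0$ while the second term is positive; here I show that the singular contribution $\sin(2\varphi)\rho^{-2}$ dominates because the cut-off $\gamma=\xi_1/2$ keeps $\rho$ small relative to $|\xi|^{-1}$, which yields the required negative bound. The conjugate sectors follow by symmetry.

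\textbf{Cases II and III (near $\xi_1$ and $\xi_2$).} I set $z=\xi_j+u+iv$ and Taylor expand using $\theta'(\xi_j)=0$:
$$\re(2i\theta)=-2\theta''(\xi_j)\,uv-\theta'''(\xi_j)\,u^2v+\tfrac13\theta'''(\xi_j)\,v^3+\cdots.$$
The decisive observation is that $(x,t)\in\mathcal{P}_{<-1}(x,t)$ forces $\xi\to-1$ and $\xi_1,\xi_2\to-1$ at speed $t^{-1/3}$, so $\theta''(\xi_j)\propto(\xi_1-\xi_2)$ is small while $\theta'''(\xi_j)\to 6$ stays bounded away from $0$; thus the cubic term $-\theta'''(\xi_j)u^2v$ governs both the sign and the size, which is the source of the $u^2|v|$ scaling. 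On each sector the (small) quadratic term $-2\theta''(\xi_j)uv$ turns out to carry the favorable sign — negative on $\Omega_1$ (where $u,v>0$, $\theta''(\xi_1)>0$) and negative on $\Omega_2$ (where $u<0$, $v>0$, $\theta''(\xi_2)<0$) — and the $v^3$ remainder is absorbed using the smallness of the opening angle $\phi$ through $v\le u\tan\phi$. Uniform lower bounds $\theta'''(\xi_1)\ge 4/|\xi_1|$ and $\theta'''(\xi_2)\ge 1/(8|\xi_2|^3)$ then deliver the cubic bounds of Case II on $\Omega_1$ and the near-field bound of (\ref{omega2est1}) on $\Omega_2\cap\{|z|\le2\}$. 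For the far field $\Omega_2\cap\{|z|>2\}$ the local expansion is inadequate; instead I use that the term $-\tfrac12 z^2$ of $\theta$ dominates there, so that $\re(2i\theta)\approx 2v(u-\xi)$, and the geometry of the ray $\Sigma_2$ (opened into $\re z<\xi_2$) forces $u-\xi\le-\sqrt2$, giving $\re(2i\theta)\le-2\sqrt2\,|v|$.

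The main obstacle is not any individual inequality but the uniformity of the constants throughout the transition window $\xi\in(-1-Ct^{-2/3},-1)$, precisely where the quadratic part of $\theta$ degenerates and the naive stationary-phase expansion loses its leading term. Two points need care: establishing the lower bounds on $\theta'''(\xi_j)$ uniformly in $\xi$ directly from the defining relations (\ref{2theta})--(\ref{xi2}) for the stationary points; and making the crossover between the cubic-dominated region $|z|\le2$ and the quadratic-dominated region $|z|>2$ quantitative, so that the remainder terms in the Taylor expansion are dominated across the \emph{entire} opened sector rather than only to leading order in the offset. Once these uniform estimates are in hand, the six inequalities of the proposition follow by collecting the sign and size information above.
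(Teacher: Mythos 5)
Your symmetry reduction ($\theta(\bar z)=\overline{\theta(z)}$, so lower bounds on $\overline{\Omega}_j$ flip to upper bounds on $\Omega_j$) is valid, and your Case I is essentially the paper's own argument: your polar identity is exactly the paper's $\re(2i\theta)=\sin 2\varphi\,(F(|z|)^2-\xi\sec\varphi\,F(|z|)-2)$ with $F(\rho)=\rho+\rho^{-1}$, and the asserted $\Omega_0$ bound does close because the cutoff keeps $\rho\le|\gamma|\sec\phi<1$, so the $\rho^{-2}$ term beats $|\xi|(\rho+\rho^{-1})$ there. Where you genuinely diverge from the paper is Cases II--III: the paper never Taylor expands. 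It substitutes the exact stationary-point relation $\xi=\frac{\xi_2^4+1}{\xi_2^3+\xi_2}$ into the exact formula (\ref{theta01}), obtains a polynomial identity in the offset $u$ with prefactor $\frac{2v}{|z|^4(\xi_2^3+\xi_2)}$, discards the $u^4$ and $u^3$ terms because their signs are favorable, and controls the surviving coefficients by monotonicity ($f(\xi_2)>4$, $g(\xi_2)u>0$). That computation is valid on the \emph{entire} sector $\Omega_2\cap\{|z|\le 2\}$ with no remainder to absorb, which is precisely what your route lacks.

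This is the genuine gap: your Taylor expansion at $\xi_j$ only yields the stated inequalities in a small neighborhood of the stationary points, while the proposition claims them on sectors of unit size. On $\Omega_1$ the offset reaches $|u|\approx|\xi_1|/2$, and on $\Omega_2\cap\{|z|\le2\}$ it reaches $|u|\approx|\xi_2|\approx 1$, which is the radius of convergence of the expansion of $\theta$ at $\xi_2$ (pole at $z=0$). Since $\theta'''(\xi_j)\approx 6$ but $\theta''''(\xi_j)\approx 36$ near the transition, the quartic contribution $-\tfrac{1}{3}\theta''''(\xi_j)(u^3v-uv^3)$ is already comparable to your cubic term at $|u|=1/2$ and dominant near $|u|=1$; so "absorbing the remainder" is not a perturbative matter, and your closing paragraph defers exactly the step that constitutes the content of the proposition. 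The cure is to abandon the expansion on the full sector and do the exact polynomial bookkeeping as in the paper (or as in its Proposition \ref{opow1} for the scaled cubic model). Separately, your far-field step is incorrect as stated: from $\re z\le -2\cos w$ and $\xi\to -1^-$ one gets $\re z-\xi\frac{|z|^4+|z|^2}{|z|^4+1}\le -2\cos w+1+\frac{1}{4\cos^2 w}+o(1)$, about $-3/4$ for small $\phi$, not $\le-\sqrt2$: the positive term $-\xi(\cdot)\approx +1$ cannot be dropped, so "the geometry forces $u-\xi\le-\sqrt2$" fails. (In fairness, the paper's own $h(\xi)$ contains a matching slip --- its "$-1$" should be "$+1$" --- so the constant $2\sqrt2$ in (\ref{omega2est1}) is not truly derived there either; but since only \emph{some} exponential rate $\re(2i\theta)\le -c|v|$, $c>0$, is used downstream, you should repair this step by claiming an explicit smaller constant rather than $2\sqrt2$.)
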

  \begin{proof}
   For the case I,  we take   $\Omega_3$   as an  example  to prove the estimate (\ref{estm1}).
  Other cases can be proven similarly.

 For $z\in \Omega_3$, denote the ray $z=|z|e^{i\varphi}=u+iv$ where $0<\varphi<\phi$ and $u>v>0$,  and the function $F(s)=s+s^{-1}$ with $s>0$. Then  (\ref{theta01}) becomes
    \begin{align}
        \re\left(2i\theta(z)\right)=   \sin2\varphi  \left( F(|z|)^2 - \xi \sec \varphi F(|z|)-2\right). \label{eoue}
    \end{align}
Since  $\xi<-1$, we have $- \xi \sec \varphi F(|z|)>0$, and so
      \begin{align}
 &  F(|z|)^2 - \xi \sec \varphi F(|z|)-2 \geq      F(|z|)^2  -F(|z|)   \geq   v.\nonumber
    \end{align}
Substituting the above estimate    into   (\ref{eoue})   gives
    \begin{align*}
 &  \re\left(2i\theta(z)\right)   \geq      v \sin2\varphi,
    \end{align*}
which yields (\ref{estm1}) for  $z\in \Omega_3$.

For cases II and III,  we take   $\Omega_{2}$ as an example.
In this case, for $z=\xi_2 +u +iv = \xi_2 +|z-\xi_2| e^{i\varphi} \in \Omega_{2}$,  we have
 $$u<0, \ v  = -u \tan \varphi>0, \  |z|^2 = (u+\xi_2)^2+u^2\tan^2 \varphi,$$
with $ 0<\varphi<\phi$.  We    prove the estimate (\ref{omega2est1})  to the cases $|z|\leq 2$ and $|z|>2$ respectively.

 For the   region $  |z| \leq 2  $,    with  (\ref{2323}) and  (\ref{xi2}),  we have
     \begin{align*}
     \xi = \frac{\xi_2^4+1}{\xi_2^3+\xi_2}.
     \end{align*}
Substituting  the above formula   into    (\ref{theta01}) gives
     \begin{align*}
     &\re \left(2i\theta(z)\right) = \frac{2v}{|z|^4 \left(\xi_2^3 +\xi_2\right)} \Big\{ \left[ \left(\xi_2^3+\xi_2\right)u +\xi_2^2-1\right]\\
     & \left[ \left(u+\xi_2\right)^2 +u^2\tan^2 \varphi \right]^2  - \left( \xi_2^4+1\right) \left(1+\tan^2 \varphi\right)u^2\\
     &+ \left(-2\xi_2^5+\xi_2^3-\xi_2\right)u +\xi_2^4 -\xi_2^6 \Big\}.
     \end{align*}
With  the properties
 $$\xi_2<-1, \ u<0,\ \left(\xi_2^3+\xi_2\right)u +\xi_2^2-1 \geq  0,      \ \left( \xi_2^3+\xi_2\right) u^3 \geq 0,$$
after  removing  the terms in order  $u^4$ and $u^3$, we find that
 \begin{align}
 \re \left(2i\theta(z)\right) \leq  \frac{2v}{|z|^4 \left(\xi_2^3 +\xi_2\right)} \left( f(\xi_2) u^2 +g(\xi_2) u \right), \label{refg}
 \end{align}
 where
 \begin{align*}
  f(\xi_2) &=4 \xi_2^6 +(9+ \tan^2 \varphi) \xi_2^4 -2(3+\tan^2 \varphi) \xi_2^2 - (1+\tan^2 \varphi),\\
  g(\xi_2) &= \xi_2^7 + 3 \xi_2^5 -3 \xi_2^3 -\xi_2.
 \end{align*}
 Direct calculation  shows  that the function $f(\xi_2)$ is strictly decreasing in  $\xi_2 \in (-\infty,-1)$, and hence
  $$f(\xi_2) > f(-1)=6-2 \tan^2 \varphi.$$
  For $\varphi \in (0, \frac{\pi}{4})$, we have  $\tan^2 \varphi \in (0,1)$, $f(-1) \in (4,6)$, and so  $f(\xi_2)>4$.

 Similarly, $g(\xi_2)$ is strictly increasing in  $\xi_2 \in (-\infty,-1)$  and   $g(-1)=0$,   then $g(\xi_2)u>0$.
 Therefore, (\ref{refg}) becomes
  \begin{align*}
 \re \left(2i\theta(z)\right) \leq  \frac{8u^2v}{|z|^4 \left(\xi_2^3 +\xi_2\right)},
 \end{align*}
 together with $|z| \leq 2$, gives us  the  estimate  (\ref{omega2est1}).

 Next  we prove  (\ref{omega2est1}) in  the region  $|z|>2$.
        In this case, we notice that
        $$|z|=|u+\xi_2| \sec w >2,$$
          with $ w=\arg z$  and  $0<w<\varphi$,  which implies that
          $$u+\xi_2 <-2 \cos w.$$
      Further with  (\ref{theta01}), we have
          \begin{align}
         & \re \left(2i\theta(z)\right)= 2v \left(1+\frac{1}{|z|^4} \right) \left( u+\xi_2 - \xi   \frac{|z|^4+|z|^2}{|z|^4+1} \right) \leq 2v h(\xi),\label{re20}
          \end{align}
where
  \begin{align*}
   h(\xi) = -2 \cos w -(\xi+1) \left(1+\frac{1}{4\cos^2 w} \right) -1+ \frac{1}{4\cos^2 w} .
  \end{align*}
Let $t \to  +\infty$, we have  $\xi \to -1^-$ and
  \begin{align*}
  h(\xi) \to h(-1) = -2 \cos w-1+ \frac{1}{4\cos^2w}.
  \end{align*}
  Since $0<\varphi<\frac{\pi}{4}$, we have $\frac{\sqrt{2}}{2}<\cos w<1$ and then $h(-1 ) \in (-\frac{11}{4},-\frac{2\sqrt{2}+1}{2})$.
  Therefore, there exists a large   $T$  such that when $t>T$,
   \begin{equation}\label{estfxi}
h(\xi) \le h(-1 )+ \frac{1}{2} <-\sqrt{2}.
  \end{equation}
  Substituting (\ref{estfxi}) into  (\ref{re20}) gives the second estimate when $|z|>2$  of  (\ref{omega2est1}). We finish the proof.
  \end{proof}

Next we open  the contour $\mathbb{R}\setminus [\xi_2, \xi_1]$  via  continuous extensions of the jump matrix $V^{(2)}(z)$
by defining   appropriate functions.

 \begin{proposition} \label{prop3}
     Let  $q_0 \in \tanh (x)+H^{4,4}(\mathbb{R})$ and  define   functions $R_{j}(z)(j=0,1,2,3) $ with  boundary values
       \begin{align}
        &R_{j}(z)= \begin{cases}
             r(z)T(z)^2, \quad  z \in (-\infty,\xi_2) \cup (\xi_1,0),\\
              r\left( \xi_j\right)T(\xi_j)^2,\quad z \in \Sigma_{j},\ j=0,1,2,
            \end{cases}\\
        &\overline{R}_{j}(z)= \begin{cases}
         \overline{r(z)}T(z)^{-2}, \quad z  \in (-\infty,\xi_2) \cup (\xi_1,0),\\
                 \overline{r\left( \xi_j\right)}T(\xi_j)^{-2},\quad z \in \overline{\Sigma}_{j},\ j=0,1,2,
               \end{cases}\\
       &R_{3}(z)= \begin{cases}
            \frac{\overline{r(z)}T_+(z)^{-2}}{1-|r(z)|^2}, \quad z \in (0,\infty),\\
             0,\quad z \in \Sigma_{3},
           \end{cases} \label{r31}\\
        &\overline{R}_{3}(z)= \begin{cases}
                 \frac{r(z)}{1-|r(z)|^2} T_-(z)^{2}, \quad  z \in (0,\infty),\\
                 0,\quad z \in \overline{\Sigma}_3,
               \end{cases}\label{r32}
       \end{align}
     where $r(\xi_0)=r(0)=0$.   Then there exists a  constant  $c =c(q_0)$ which only depends on $q_0$ such that

     For $ j=1,2$,
 \begin{equation}
  |\bar{\partial}R_{j}| \le c\left( |\varphi(\re(z))|+  | r'\left( \re\left( z \right)\right)|+    | z-\xi_j |^{-1/2} \right),   \;z\in \Omega_j\cup \overline{\Omega}_j.\label{437}
\end{equation}
For $ j=0,3$,
       \begin{equation}
	|\bar{\partial}R_{ j}| \le \begin{cases}
c \left( |\varphi(\re z)|+  | r'\left( \re z \right)|+    |z|^{-1/2} \right),  \;z\in \Omega_j \cup \overline{\Omega}_j,\\
c |z-1|, \quad \text{near}\; z= 1, \label{438}
\end{cases}
\end{equation}
where $\varphi \in C_0^\infty \left( \mathbb{R}, \left[0,1\right] \right)$ with small support near $z=1$.

        \end{proposition}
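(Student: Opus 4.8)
The plan is to define each $R_j$ explicitly as an angular interpolation between its two prescribed boundary values and then to differentiate the closed-form expression; the bounds then reduce to three elementary estimates. By the symmetry relation $\overline{T(\bar z)}=T(z)^{-1}$ from Lemma \ref{prop1}, it suffices to construct the $R_j$ on the sectors $\Omega_j$ and to obtain $\overline{R}_j$ on $\overline{\Omega}_j$ by conjugation, so I restrict attention to $\Omega_j$. Two structural facts drive the whole argument: first, $T(z)^2$ is analytic in each $\Omega_j$ (Lemma \ref{prop1}), so it may be pulled outside every $\bar\partial$; second, $r\in H^1(\mathbb{R})$ --- indeed more regular, since $q_0\in\tanh x+H^{4,4}(\mathbb{R})$ (Lemma \ref{lemm3.2}) --- which supplies both the pointwise derivative $r'$ and, via Cauchy--Schwarz, H\"older-$1/2$ control of the increments of $r$.

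Concretely, writing $\omega=\arg(z-\xi_j)$ for the angle from the real axis to the ray $\Sigma_j$, I would set for $j=1,2$
$$ R_j(z)=r(\xi_j)T(\xi_j)^2+\Bigl(r(\re z)\,T(z)^2-r(\xi_j)T(\xi_j)^2\Bigr)\cos\!\Bigl(\frac{\pi\omega}{2\phi}\Bigr), $$
which equals $r(z)T(z)^2$ on the real segment ($\omega=0$) and the constant $r(\xi_j)T(\xi_j)^2$ on $\Sigma_j$ ($\omega=\phi$), as required; the same device with centre $\xi_0=0$ (and the singular numerators of (\ref{r31})--(\ref{r32})) handles $j=0,3$ away from $z=1$. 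Computing $\bar\partial$ in polar coordinates centred at $\xi_j$, where $\bar\partial=\tfrac12 e^{i\omega}\bigl(\partial_\rho+i\rho^{-1}\partial_\omega\bigr)$ with $\rho=|z-\xi_j|$, produces exactly two contributions. The radial one, $\tfrac12 r'(\re z)T(z)^2\cos(\pi\omega/2\phi)$, is $\lesssim|r'(\re z)|$ because $T$ is bounded (Lemma \ref{prop1}). The angular one carries a factor $\rho^{-1}$ from $\partial_\omega$ and is bounded by $\rho^{-1}\bigl|r(\re z)T(z)^2-r(\xi_j)T(\xi_j)^2\bigr|$; splitting this as an increment of $r$ times $|T|^2$ plus $|r(\xi_j)|$ times an increment of $T^2$, the estimate $|r(\re z)-r(\xi_j)|\le\|r'\|_{L^2}|\re z-\xi_j|^{1/2}\le C\rho^{1/2}$ together with the Lipschitz bound $|T(z)^2-T(\xi_j)^2|\le C\rho$ yields $\lesssim\rho^{-1/2}+1$. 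The term $|\varphi(\re z)|$ only appears once patching cutoffs are inserted to glue adjacent sectors and is vacuous for $\Omega_{1},\Omega_{2}$; collecting everything gives (\ref{437}) and the first line of (\ref{438}).

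The genuine difficulty, and the heart of the proposition, is the sharp bound $|\bar\partial R_3|\le c|z-1|$ near $z=1$, where the naive interpolation fails catastrophically: by (\ref{scatter2}), the simple pole of $s_{11}$ at $z=1$ (Lemma \ref{lemm3.2}) forces $1-|r(z)|^2=|s_{11}(z)|^{-2}\sim|s_+|^{-2}|z-1|^2$, so the factors $\tfrac{\overline{r}}{1-|r|^2}$ and $\tfrac{r}{1-|r|^2}$ entering (\ref{r31})--(\ref{r32}) blow up like $|z-1|^{-2}$ while $r\to-1$ by (\ref{weie}). The plan here is to isolate, near $z=1$, the explicit singular part of $R_3$ --- a rational expression in $z-1$ assembled from the pole datum $s_+$ and the (analytic) boundary value of $T^{-2}$ given by the trace formula (\ref{trance})--(\ref{nu}) --- and to continue it off the axis as a genuinely analytic function, so that it contributes nothing to $\bar\partial R_3$. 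What remains is a smooth remainder, localized by the cutoff $\varphi$ supported near $z=1$ and vanishing to first order there, whose $\bar\partial$ is therefore $O(|z-1|)$; this is precisely the mechanism, introduced by Cuccagna--Jenkins \cite{CJ}, that absorbs the generic spectral singularity $|r(\pm1)|=1$ without producing a new shock region. I expect this singular-part extraction --- verifying that the explicit piece is analytic and that the remainder really is $C^1$ with a first-order zero at $z=1$ --- to be the main obstacle; the estimates for $j=1,2$ and for the bulk of $j=0,3$ are routine once the interpolation formula above is in place.
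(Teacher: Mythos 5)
Your construction for $j=0,1,2$, and for $R_3$ away from $z=1$, is correct and is essentially the paper's own: compare (\ref{429})--(\ref{430}), which use the cosine interpolation $\cos(k\arg z)$ with $k=\pi/(2\phi)$ and the profile $r(|z|)$ where you use $\arg(z-\xi_j)$ and $r(\re z)$ --- an immaterial variant. Your estimates (pulling the analytic factor $T(z)^{\pm2}$ through $\bar\partial$ by Lemma \ref{prop1}, the radial term bounded by $|r'(\re z)|$, the angular term by $\rho^{-1}$ times an increment controlled via $\|r'\|_{L^2}\rho^{1/2}$ and the local Lipschitz bound on $T^2$) reproduce (\ref{437}) and the first line of (\ref{438}). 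Your diagnosis at $z=1$ is also the right one: $\frac{\bar r}{1-|r|^2}=\overline{s_{21}}\,s_{11}\sim -|s_+|^2(z-1)^{-2}$ is exactly cancelled by $T_+^{-2}\sim(z-1)^2$ on the axis, and the naive extension fails because it decouples the two cancelling factors off the axis --- which is why the paper, following \cite{CJ}, rewrites $R_3=f\,g$ with $f=\overline{S_{21}}/S_{11}$ regular and $g=(s_{11}/T_+)^2$ analytic and bounded in $\Omega_3$.

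However, your mechanism for the sharp bound has a genuine gap: a remainder that is merely $C^1$ with a first-order zero at $z=1$ does \emph{not} give $|\bar\partial R_3|\le c|z-1|$ under the only extension device you introduce. Concretely, for an amplitude $A(s,\omega)$ times analytic $g$ (with $z=se^{i\omega}$), one has $\bar\partial(Ag)=\frac{e^{i\omega}}{2}\bigl(\partial_s A+\frac{i}{s}\partial_\omega A\bigr)g$; with $A=(f(s)-f(1))\cos(k\omega)$ (your subtract-the-analytic-model plan, with model $f(1)g(z)$) the radial term $\frac{1}{2}f'(s)\cos(k\omega)\,g$ survives at size $O(1)$ at $z=1$, not $O(|z-1|)$ --- first-order vanishing of the remainder controls only the derivative-free terms. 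This is precisely why the paper's $R_{12}$ in (\ref{oedk}) carries the second summand $\frac{i|z|}{k}\chi_0(\arg z/\delta_0)f'(|z|)g(z)\sin(k\arg z)$: it is an almost-analytic correction in the radial variable (note $\frac{i|z|}{k}\sin(k\arg z)\approx i\im z$ near the axis) whose $\bar\partial$ cancels $f'\cos(k\omega)$ where $\chi_0=1$, leaving only terms proportional to $\sin(k\arg z)=O(|z-1|)$ or to $\chi_0'$, supported off the axis. The price is a term $\frac{i|z|}{k}\chi_0 f''\sin(k\omega)g$, so one needs $f\in W^{2,\infty}$ near $z=1$ --- one derivative more than your ``$C^1$ remainder'' budget, and part of why the hypothesis is $q_0\in\tanh x+H^{4,4}(\mathbb{R})$. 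With this correction term inserted your plan closes (the paper itself omits these computations and cites \cite{CJ}); without it, the step ``remainder vanishes to first order, hence $\bar\partial R_3=O(|z-1|)$'' is false.
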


\begin{proof}   The extension functions  $R_j(z),  j=0,1,2$ and their $\bar\partial$ estimates
are easy obtained by defining
\begin{equation}
R_j(z)=\cos (k\arg z)  r(|z|)T( z)^2+  [1-\cos( k\arg z )] r ( \xi_j )T(\xi_j)^2, \label{429}
\end{equation}
where $ z\in \Omega_j, j=0,1,2$ and  $k=\frac{\pi}{2\phi}$.

As for $R_3(z)$, in the non-generic case $|r(z)|<1, \ z\in \mathbb{R}$, we define
\begin{equation}
R_3(z)=   \frac{\overline{r(|z|)} }{1-|r(|z|)|^2}  T ( z)^{-2}\cos(k\arg z), \ \ z\in \Omega_3. \label{430}
\end{equation}
In the  generic case  $|r(1)|=1$, as observed in (\ref{r31})-(\ref{r32}),
  $R_3(z)$  is  singular at $z= 1$,
 however,  the   singularity  can be balanced by the factor $T(z)^{-2}$  following   Cuccagna and Jenkins's method \cite{CJ}.

It follows from Lemmas  \ref{lemma3.2} that
\begin{equation}
\frac{\overline{r  (z )} }{1-|r( z )|^2} T_+^{-2}(z)=\frac{\overline{s_{21}(z )} }{s_{11}(z)}\left(\frac{ s_{11}(z) }{T_+(z)}\right)^2 =\frac{ \overline{S_{21}(z)} }{S_{11}(z)} \left(\frac{ s_{11}(z) }{T_+(z)}\right)^2,
\label{oeeedk}
\end{equation}
where
\begin{equation}
S_{21}(z)=\det (\psi_1^+ (z),  \psi_1^-(z)), \ \ \ S_{11}(z)=\det ({\psi_1^-}(z),  {\psi_2^+}(z)).
\label{oeeedk2}
\end{equation}
Then  the denominator of each factor in the r.h.s. of (\ref{oeeedk}) is non-zero and analytical in $ \Omega_3$,
 with well defined nonzero limit on $\partial \Omega_3$.

Introduce  a  cutoff function $\chi_0(z), \chi_1(z)\in C_0^\infty(\mathbb{R},[0,1])$ with small support near $z=0$ and  $z=1$ respectively.
Define  the extension function   $R_3(z)=R_{11}(z)+R_{12}(z) $ with
\begin{align}
&R_{11}(z) =(1-\chi (z)) \frac{\overline{r(|z|)} }{1-|r(|z|)|^2}  T ( z)^{-2}\cos(k\arg z),\label{oedk} \\
& R_{12}(z)=f(|z|) g(z) \cos(k\arg z) + \frac{i|z|}{k} \chi_0(\frac{ \arg z}{\delta_0}) f'(|z|) g(z) \sin (k\arg z).\nonumber
\end{align}
where $\delta_0$ is a fixed small constant, $f'(s)$ is the derivative of $f(s)$,  and
\begin{align}
&g(s)= \frac{ s_{11}(s)^2}{T_+(s)^2}, \ \  f(s) =\chi_1(s) \frac{ \overline{S_{21}(s)} }{S_{11}(s)}.\nonumber
\end{align}

In this way, the effect of the singularity at $z=1$ can be neutralized.  The details of the proof can be found  in \cite{CJ}.
We omit it here.

\end{proof}

Further,  with these functions $R_j$, we  define a matrix function
 \begin{align}\label{R3}
    	R^{(3)}(z)=\begin{cases}
          \left( \begin{array}{cc}
    	1& 0\\-R_je^{2it\theta(z)}&1
    \end{array} \right),\quad z \in \Omega_j,\;  j = 0,1,2,\\
    \left(\begin{array}{cc}
    	1&  -\overline{R}_je^{-2it\theta(z)}\\0&1
    \end{array} \right), \quad z \in \overline{\Omega}_{j},\;  j = 0,1,2,\\
     		\left(\begin{array}{cc}
    	1&  R_3 e^{-2it\theta(z)}\\0&1
    \end{array} \right), \quad z \in \Omega_{3},\\
          		\left(\begin{array}{cc}
    	1&  0\\ \overline{R}_3e^{2it\theta(z)}&1
    \end{array} \right), \quad z \in \overline{\Omega}_{3},\\
    I, \quad {\rm others},
    	\end{cases}
    \end{align}
and a contour
$$\Sigma^{(4)}=\Gamma\cup \overline{\Gamma}\cup [\xi_2, \xi_1], \ \Gamma= \cup_{j=0}^3 \Sigma_j \cup L.$$
  Then the new matrix function
    \begin{equation}\label{trans4}
        M^{(4)}(z)=M^{(3)}(z)R^{(3)}(z)
    \end{equation}
 satisfies the   following hybrid  $\bar{\partial}$-RH problem.

\begin{prob2}
    Find    $M^{(4)}(z)=M^{(4)}(z;x,t)$  with properties
    \begin{itemize}
        \item  $M^{(4)}(z)$ is continuous in $\mathbb{C}\setminus  \Sigma^{(4)} $. See Figure \ref{jumpm4}.
        \item $M^{(4)}(z)$ satisfies the  jump condition
        \begin{equation*}
            M^{(4)}_+(z)=M^{(4)}_-(z)V^{(4)}(z),
        \end{equation*}
     \end{itemize}
        where
       \begin{align}\label{V3}
       	V^{(4)}(z)=\begin{cases}
           \left( \begin{array}{cc}
    	1& 0\\ r\left( \xi_j\right)T(\xi_j)^2 e^{2it\theta(z)}&1
    \end{array} \right), \  z \in \Sigma_j, \ j=1,2;\vspace{2mm}\\
    \left(\begin{array}{cc}
    	1&  -\overline{r\left( \xi_j\right)}T(\xi_j)^{-2} e^{-2it\theta(z)}\\0&1
    \end{array} \right), \  z \in \overline{\Sigma}_j,\ j=1,2;\vspace{2mm}\\
     	\left(	\begin{array}{cc}
     	1& 0\\
     	-r\left( \xi_1\right)T(\xi_1)^2 e^{2it\theta(z) } & 1
     \end{array}\right),\ z\in L;\vspace{2mm}\\
            \left(		\begin{array}{cc}
       	1& \overline{r\left( \xi_1\right)}T(\xi_1)^{-2} e^{-2it\theta(z)}\\
       	0 & 1
       \end{array}\right),\  z\in \overline{L};\vspace{2mm}\\
       T(z)^{-\sigma_3} V(z)T(z)^{\sigma_3},\quad z\in (\xi_2,\xi_1).
       	\end{cases}
       \end{align}
    \begin{itemize}
        \item $ M^{(4)}(z)=I+\mathcal{O}(z^{-1}),	\quad  z \to  \infty.$

        \item For $z\in \mathbb{C}\setminus \Sigma^{(4)} $, we have
        \begin{equation*}
            \bar{\partial}M^{(4)}(z)= M^{(4)}(z) \bar{\partial}R^{(3)}(z),
        \end{equation*}
        where
        \begin{equation}\label{parR2}
            \bar{\partial}R^{(3)}(z)= \begin{cases}
                \left( \begin{array}{cc}
    	1& 0\\-\bar{\partial}R_je^{2it\theta(z)}&1
    \end{array} \right), \quad  z \in \Omega_j,\; j = 0,1,2,\\
    \left(\begin{array}{cc}
    	1&  -\bar{\partial} \overline{R}_je^{-2it\theta(z)}\\0&1
    \end{array} \right), \quad z \in \overline{\Omega}_j,\; j= 0,1,2,\\
            		\left(\begin{array}{cc}
    	1&  \bar{\partial}R_3e^{-2it\theta(z)}\\0&1
    \end{array} \right), \quad z \in \Omega_3,\\
          		\left(\begin{array}{cc}
    	1&  0\\\bar{\partial} \overline{R}_3e^{2it\theta(z)}&1
    \end{array} \right), \quad z \in \overline{\Omega}_3.
            	\end{cases}
           \end{equation}

\end{itemize}
    \end{prob2}

	We decompose $M^{(4)}(z)$ into a pure RH problem  $M^{rhp}(z)$ with $\bar{\partial} R^{(3)}(z)=0$ and a pure $\bar{\partial}$-problem $M^{(5)}(z)$ with $\bar{\partial} R^{(3)} (z)\neq 0$
in the form
	\begin{align}
M^{(4)}(z)=M^{(5)}(z) M^{rhp}(z). \label{trans5}
\end{align}

   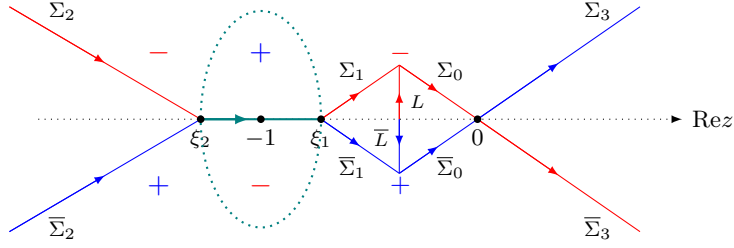
\begin{figure}
        \begin{center}
  \begin{tikzpicture}[scale=0.8]
                \draw[dotted, -latex](-7,0)--(3.7,0)node[black,right]{\footnotesize Re$z$};

                \node[below] at (0.3,0) {\footnotesize $0$};
                 \node[below] at (-3.3,0) {\footnotesize $-1$};
                                 \draw [ red] (0.3, 0)--(-1,0.9);
                                 \draw[red](-2.3,0)--(-1,0.9);
                               \draw [ red] (0.3,0 )--(3,-1.87);
                             \draw [-latex,red] (-2.3,0)--(-1.65,0.45);
                                 \draw [ red] (-4.3, 0)--(-7.479, 1.87);
                           \draw [-latex, red] (-7.479, 1.87)--(-5.8895, 0.935);
   \draw [-latex,red] (0.3,0)--(1.65,-0.935 );
   \draw[-latex,red](-1,0.9 )--(-0.35,0.45);
  \draw[red](-1,0)--(-1,0.9);
   \draw[red,-latex](-1,0)--(-1,0.45);
   \draw[blue](-1,-0.9)--(-1,0);
   \draw[blue,-latex](-1,0)--(-1,-0.45);
    \draw [ blue] (-2.3,0)--(-1, -0.9);
          \draw [-latex,blue] (-2.3,0)--(-1.65,-0.45);
   \draw [blue] (-4.3, 0)--(-7.479, -1.87);
                                 \draw [thick,teal] (-4.3, 0)--(-2.3, 0);
                                  \draw [thick,teal,-latex](-4.3, 0)--(-3.5,0);
        \draw [blue] (0.3,0 )--(3,1.87);
        \draw [blue] (0.3, 0)--(-1,-0.9);
                \draw [-latex,blue]  (-7.479, -1.87)--(-5.8895, -0.935);

             \draw [-latex,blue] (0.3,0)--(1.65,0.935 );
                      \draw[-latex,blue](-1,-0.9 )--(-0.35,-0.45);
          \node[red,thick]  at (-5,1.1 ) {\bf  $-$};
                  \node[blue,thick]  at (-5,-1.1 ) {\bf  $+$};
                                    \node[red,thick]  at (-0.98,1.1) {\small\bf $-$};
                                     \node[blue,thick]  at (-0.98,-1.1) {\small\bf $+$};
                           \node[red,thick]  at (-3.3,-1.1) {\bf  $-$};
                                \node[blue,thick]  at (-3.3,1.1) {\bf  $+$};

                                                 \node[below] at (-4.3,0) {\scriptsize $\xi_2$};
                                         \node[below] at (-2.3,0) {\scriptsize $\xi_1$};
                                         \node  at (-6.6,1.8) {\scriptsize  $\Sigma_2$};
                                         \node  at (-6.6,-1.8) {\scriptsize $\overline{\Sigma}_2$};

                                            \node  at (2.3,1.8) {\scriptsize $\Sigma_3$};
                                         \node  at (2.3,-1.8) {\scriptsize $\overline{\Sigma}_3$};

                                           \node  at (-1.78,0.8) {\scriptsize $\Sigma_1$};
                                             \node  at (-0.15,0.8) {\scriptsize $\Sigma_0$};

                                             \node  at (-1.78,-0.8) {\scriptsize  $\overline{\Sigma}_1$};
                                             \node  at (-0.15,-0.8) {\scriptsize $\overline{\Sigma}_0$};
                                                      \node  at (-0.7,0.3) {\tiny $L$};
                                                \node  at (-1.3,-0.3) {\tiny $\overline{L}$};
  \draw[dotted,teal,thick] (-3.3,0) ellipse (1 and 1.8);
                    \node[shape=circle,fill=black, scale=0.15]  at (-3.3,0){0} ;
            \node[shape=circle,fill=black,scale=0.15] at (-2.3,0) {0};
             \node[shape=circle,fill=black,scale=0.15] at (-4.3,0) {0};
                \node[shape=circle,fill=black,scale=0.15] at (0.3,0) {0};
  \end{tikzpicture}
            \caption{ \footnotesize{ The jump contours of  $M^{(4)}(z) $ and  $M^{rhp}(z) $  }}
      \label{jumpm4}
        \end{center}
    \end{figure}

    \subsection{Long-time analysis on  a pure RH problem} \label{modi3}
  \hspace*{\parindent}

In this subsection,
we find a local solution $M^{loc}(z) $,  which approximates to $M^{rhp}(z)$ near $ z=-1$.
The pure RH problem  is given as follows.
	
\begin{prob}\label{mrhp}
    Find  $M^{rhp}(z)=M^{rhp}(z;x,t)$ which satisfies
	  \begin{itemize}
	  	\item  $M^{rhp}(z)$ is analytical in $\mathbb{C}\backslash \Sigma^{(4)}$. See   Figure \ref{jumpm4}.
	  	\item $M^{rhp}(z)$  satisfies the jump condition
\begin{equation*}
	  		M^{rhp}_+(z)=M^{rhp}_-(z)V^{(4)}(z),
	  	\end{equation*}
	  	where $V^{(4)}(z)$ is given by (\ref{V3}).
	  	\item   $M^{rhp}(z)$  has the same asymptotics with  $M^{(4)}(z)$.

	  \end{itemize}
\end{prob}

\subsubsection{Local paramatrix near $z=-1$}
\hspace*{\parindent}

In the region $ -C< (\xi+1) t^{2/3}<0$,  we first notice that
 $ \xi \to -1^- \   as \  t \to  \infty, $
 further  from
(\ref{xi2}), it is found that  two  phase points  $\xi_1$ and $\xi_2$ also will merge to $z=-1$.
Making  the asymptotic expansion of the phase function $t\theta(z)$   near   $z=-1 $,  we find that
\begin{align}
 t \theta(z) &= t \left(\textcolor{red} { (z+1)^3}   +\frac{1}{2}\sum_{n=4}^\infty  (n-1)(z+1)^n  \right)\nonumber\\
 & +(x+2t) \left(\textcolor{red} {  (z+1)}  + \frac{1}{2} \sum_{n=2}^\infty (z+1)^n  \right) \nonumber\\
 &: =    \frac{4}{3} k^3 + s k + S(t;k),\label{asymn1}
\end{align}
where the scaled parameters are given by
\begin{align}
 & k  = \tau^{\frac{1}{3}} (z+1), \quad
  s   = \frac{8}{3}  (\xi +1) \tau^{\frac{2}{3}}, \label{scaled}
\end{align}
with $\tau=  \frac{3}{4} t$.  The first two terms  $ \frac{4}{3} k^3 + s k $   play  a key  role in
matching  the   Painlev\'e  model  in the local region, and   the  remainder term  is given by
\begin{align}
 &S(t;k ) =  \frac{2}{3}   \sum_{n=4}^\infty (n-1)  \tau^{\frac{3-n}{3}} k^n+ \frac{4}{3}  (\xi+1) \sum_{n=2}^\infty \tau^{\frac{3-n}{3}} k^n. \label{sers}
\end{align}

Next we show that  two scaled phase points $k_j= \tau^{1/3} (  \xi_j+ 1  ), \ j=1,2$ are always in a fixed interval in the $k$-plane.
\begin{proposition} \label{opow}
 In the transition region  $\mathcal{P}_{< -1}(x,t)$ and under scaling transformation (\ref{scaled}),   we have
  \begin{align}
 k_j\in ( -(3/4)^{1/3} \sqrt{2C}, (3/4)^{1/3} \sqrt{2C}), \ j=1, 2.
        \end{align}

\end{proposition}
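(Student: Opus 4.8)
The plan is to bound the \emph{squared} scaled phase points $k_j^2=\tau^{2/3}(\xi_j+1)^2$ by means of an \emph{exact} algebraic identity rather than an asymptotic expansion, and then to invoke the defining constraint $|\xi+1|t^{2/3}<C$ of the region $\mathcal{P}_{<-1}(x,t)$. First I would exploit the two quadratic relations already available for the phase data. Since each $\xi_j$ solves (\ref{eq219}), i.e. $\xi_j^2-\eta_1\xi_j+1=0$, a direct substitution gives
\[
(\xi_j+1)^2=\xi_j^2+2\xi_j+1=(\eta_1\xi_j-1)+2\xi_j+1=\xi_j(\eta_1+2),
\]
which reduces the whole problem to controlling the single quantity $\eta_1+2$.

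Next I would compute $\eta_1+2$ in closed form. Starting from (\ref{2323}), $\eta_1=\tfrac12(\xi-\sqrt{\xi^2+8})$, and rationalizing the numerator yields
\[
\eta_1+2=\frac{\xi+4-\sqrt{\xi^2+8}}{2}=\frac{(\xi+4)^2-(\xi^2+8)}{2\,(\xi+4+\sqrt{\xi^2+8})}=\frac{4(\xi+1)}{\xi+4+\sqrt{\xi^2+8}}.
\]
Writing $D:=\xi+4+\sqrt{\xi^2+8}>0$ and inserting this into the identity of the previous paragraph gives the exact relation
\[
(\xi_j+1)^2=\frac{4\,\xi_j(\xi+1)}{D}=\frac{4\,|\xi_j|\,|\xi+1|}{D},
\]
where in the last step I have used that $\xi_j<0$ and $\xi+1<0$ throughout $\mathcal{P}_{<-1}(x,t)$.

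It then remains only to establish the elementary inequality $2|\xi_j|\le D$, because this immediately turns the last display into $(\xi_j+1)^2\le 2|\xi+1|$. Multiplying by $\tau^{2/3}=(3/4)^{2/3}t^{2/3}$ and using the region constraint $|\xi+1|t^{2/3}<C$ coming from (\ref{regpm}), I would obtain
\[
k_j^2=\tau^{2/3}(\xi_j+1)^2\le 2\,(3/4)^{2/3}\bigl(|\xi+1|t^{2/3}\bigr)<2\,(3/4)^{2/3}C,
\]
hence $|k_j|<(3/4)^{1/3}\sqrt{2C}$, which is exactly the claimed confinement.

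The main obstacle is precisely the verification of $2|\xi_j|\le D$, for this inequality is \emph{false} for general $\xi<-1$ (indeed $|\xi_2|\to\infty$ as $\xi\to-\infty$, while $D\to4$) and therefore genuinely relies on being inside the transition region. Here the crucial observation is that in $\mathcal{P}_{<-1}(x,t)$ one has $\xi\to-1^-$ as $t\to\infty$; since $\xi_1\xi_2=1$ and $\xi_1+\xi_2=\eta_1\to-2$, the two phase points merge, so $\xi_1,\xi_2\to-1$ and $|\xi_j|\to1$, whereas $D=\xi+4+\sqrt{\xi^2+8}\to6$. Consequently $2|\xi_j|/D\to\tfrac13<1$, so $2|\xi_j|\le D$ holds with ample room for all sufficiently large $t$, which closes the argument. (The same estimate in fact shows $k_j^2\to|s|/4$, consistent with $k_1,k_2$ being the leading-order roots of $4k^2+s=0$; the stated bound is the clean, non-sharp version of this.)
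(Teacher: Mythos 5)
Your proof is correct, and it takes a genuinely different (and in one respect tighter) route than the paper's. Both arguments start from the same identity $(\xi_j+1)^2=(\eta_1+2)\xi_j$ derived from (\ref{eq219}), but they diverge from there. The paper treats the two phase points asymmetrically: for $\xi_1$ it uses $|\xi_1|<1$ to bound $(\eta_1+2)\xi_1<-(\eta_1+2)$ and then verifies the square-root inequality $-(\eta_1+2)\le -2(\xi+1)$ directly from (\ref{2323}); for $\xi_2$ it invokes the symmetry $\xi_1\xi_2=1$ to write $\xi_2+1=(\xi_1+1)/\xi_1$ and simply asserts the bound on $k_2$. You instead rationalize $\eta_1+2=\frac{4(\xi+1)}{\xi+4+\sqrt{\xi^2+8}}$ exactly, reducing both cases to the single inequality $2|\xi_j|\le D$, which you verify uniformly. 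What your approach buys: the exact identity makes transparent why the bound is sharp-up-to-constants (your closing remark $k_j^2\to|s|/4$ is a nice consistency check against the Painlev\'e scaling), and it treats $j=1,2$ on the same footing --- indeed for $j=1$ your key inequality holds unconditionally, since $|\xi_1|<1$ and $D>4$ for every $\xi<-1$, with no large-$t$ restriction at all. What it costs: for $j=2$ you need $t$ large, which you honestly flag. But note that the paper's own step for $k_2$ has the same hidden requirement: dividing $\tau^{1/3}(\xi_1+1)<(3/4)^{1/3}\sqrt{2C}$ by $|\xi_1|\in(0,1)$ \emph{increases} the quantity, so the paper's asserted bound on $k_2$ also only follows once $|\xi_1|$ is bounded away from $0$, i.e.\ once $\xi$ is near $-1$ (your counterexample regime $\xi\approx-3$ with $t$ small, still inside $\mathcal{P}_{<-1}(x,t)$, breaks the pointwise claim for $k_2$ in both proofs). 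Since the paper works throughout under standing large-$t$ assumptions (e.g.\ $\sqrt{2C}\,\tau^{-1/3}<\rho$), your explicit restriction is consistent with the intended use of the proposition and is, if anything, a more careful rendering of the same fact.
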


\begin{proof} From (\ref{eq219}), the  phase point $\xi_1$ satisfies the equation
\begin{align}
&\xi_1^2 +1 =\eta_1 \xi_1  \Longrightarrow    ( \xi_1 +1 )^2=(\eta_1+2)\xi_1. \label{por40}
\end{align}
Noting that $\eta_1 +2 <0, \ \xi_1 >-1$, and  using (\ref{2323}), we have
\begin{align}
& (\eta_1+2) \xi_1 <  -(\eta_1 +2)  =    -\frac{1}{2} \xi +\frac{1}{2}\sqrt{\xi^2+8} -2
 \leq  -2(\xi+1).\label{23236}
\end{align}
Substituting (\ref{23236}) into (\ref{por40}) gives
\begin{align*}
( \xi_1 +1 )^2 < -2( \xi +1) <  2Ct^{-2/3},
\end{align*}
 which implies that
\begin{align*}
&     k_1 = \tau^{1/3} (  \xi_1+ 1  )  <   (3/4)^{1/3} \sqrt{2C}.
\end{align*}

Further, by  the symmetry $ \xi_1\xi_2=1$,    we obtain
 \begin{align*}
&   \xi_2 +1 =  \frac{1}{\xi_1} +1    =  \frac{ \xi_1+1}{\xi_1}.
\end{align*}
Therefore,  for  $-1<\xi_1<0$, $k_2$ can be controlled  by a constant
 \begin{align*}
&   k_2 = \tau^{1/3} (  \xi_2 +1   ) =  \tau^{1/3}  \frac{ \xi_1+1}{\xi_1} > -(3/4)^{1/3} \sqrt{2C}.
\end{align*}

\end{proof}

Let  $t$  be  large enough so that $\sqrt{2C} \tau^{-1/3}<\rho$ where $\rho$ has been defined in (\ref{definerho}).
For a fix constant  $\varepsilon \leq \sqrt{2C}$, define  two open disks
  \begin{align}
&   \mathcal{U}_{z}(-1) = \{z \in \mathbb{C}: |z+1|< \varepsilon \tau^{-1/3}\},\nonumber\\
&  \mathcal{U}_{k }(0) = \{k \in \mathbb{C}: |k|< \varepsilon \},\nonumber
\end{align}
whose   boundaries are  oriented counterclockwise.
The transformation  defined by (\ref{scaled})
defines a map $z \mapsto k$  maps   $ \mathcal{U}_{z}(-1)$  onto the disk $ \mathcal{U}_{k }(0)$ of the radius $\varepsilon$ in the $k$-plane.
Proposition \ref{opow}  implies that  for  large $t$,   we have   $\xi_1, \xi_2 \in \mathcal{U}_{z} (-1)$,
and also  $k_1, k_2 \in\mathcal{U}_{k} (0)$. See Figure \ref{scalingzk}.

\begin{figure}
        \begin{center}
  \begin{tikzpicture}[scale=0.8]
  \draw[dotted,thick ](-6,0)--(-4.3,0);
                \draw[dotted,thick,-latex](-2.3,0)--(0,0) node[black,right]{\footnotesize Re$z$};
                 \node[below] at (-3.3,0) {\footnotesize $-1$};
                  \draw [dotted,thick,teal](-3.3,0) circle (1.9);
                                 \draw[red](-2.3,0)--(-1,0.9);
                             \draw [-latex,red] (-2.3,0)--(-1.65,0.45);
                                 \draw [ red] (-4.3, 0)--(-6.3, 1.2);
                           \draw [-latex, red] (-6.3, 1.2)--(-5.3, 0.6);
    \draw [ blue] (-2.3,0)--(-1, -0.9);
          \draw [-latex,blue] (-2.3,0)--(-1.65,-0.45);
   \draw [blue] (-4.3, 0)--(-6.3, -1.2);
   \draw [-latex,blue]  (-6.3,- 1.2)--(-5.3, -0.6);
                                 \draw [thick,teal] (-4.3, 0)--(-2.3, 0);
                                  \draw [thick,teal,-latex](-4.3, 0)--(-3.6,0);
\node[]  at (-3.3,2.2) {\scriptsize $\partial \mathcal{U}_{z}(-1)$};

 \node[below] at (-4.3,0) {\footnotesize $\xi_2$};
                                         \node[below] at (-2.3,0) {\scriptsize $\xi_1$};
                                         \node  at (-6,1.5) {\scriptsize $\Sigma_{2}$};
                                         \node  at (-6,-1.6) {\scriptsize $\overline{\Sigma}_{2}$};
                                           \node  at (-1,1.2) {\scriptsize $\Sigma_{1}$};
                                             \node  at (-1,-1.2) {\scriptsize $\overline{\Sigma}_{1}$};

                                             \node  at (-5.5,0.3) {\scriptsize $\Omega_{2}$};
                                         \node  at (-5.5,-0.3) {\scriptsize $\overline{\Omega}_{2}$};
                                           \node  at (-1,0.3) {\scriptsize $\Omega_{1}$};
                                             \node  at (-1,-0.3) {\scriptsize $\overline{\Omega}_{1}$};
    \draw [dotted,  ->  ]  (-1.4, 0)  to  [out=90,  in=0] (-3.3,  1.9);
                \node[shape=circle,fill=black, scale=0.15]  at (-3.3,0){0} ;
            \node[shape=circle,fill=black,scale=0.15] at (-2.3,0) {0};
             \node[shape=circle,fill=black,scale=0.15] at (-4.3,0) {0};

 \draw[dotted,thick ](1.6,0)--(3.2,0);
 \draw[dotted,thick,-latex](4.8,0)--(7,0)node[black,right]{\footnotesize Re$k$};
 \node[below] at (4,0) {\footnotesize $0$};
   \draw [dotted,thick,teal](4,0) circle (1.6);
\draw [thick,teal] (3.2, 0)--(4.8, 0);
  \draw [thick,teal,-latex](-4.3, 0)--(-3.6,0);
\node[shape=circle,fill=black, scale=0.15]  at (4,0){0} ;
            \node[shape=circle,fill=black,scale=0.15] at (3.2,0) {0};
             \node[shape=circle,fill=black,scale=0.15] at (4.8,0) {0};
     \draw[red](4.8,0)--(6.2,1);
    \draw [-latex,red] (4.8,0)--(5.5,0.5);
   \draw [ red] (3.2, 0)--(1.4, 1.5);
     \draw [-latex, red] (3.2, 0)--(2.3, 0.75);

\draw[blue](4.8,0)--(6.2,-1);
    \draw [-latex,blue] (4.8,0)--(5.5,-0.5);
   \draw [blue] (3.2, 0)--(1.4,- 1.5);
     \draw [-latex, blue] (3.2, 0)--(2.3, -0.75);
   \node[below] at (3.3,0) {\footnotesize $k_2$};
 \node[below] at (4.8,0) {\footnotesize $k_1$};
  \node[]  at (4.3, 2) {\scriptsize $\partial \mathcal{U}_{k}(0)$};
      \node  at (6,1.4) {\footnotesize $\Sigma_1'$};
                                         \node  at (6,-1.4) {\scriptsize $\overline{\Sigma}_1'$};
                                           \node  at (2,1.4) {\scriptsize $\Sigma_2'$};
                                             \node  at (2,-1.4) {\scriptsize $\overline{\Sigma}_2'$};

                                                   \node  at (6,0.35) {\scriptsize $\Omega_1'$};
                                         \node  at (6,-0.35) {\scriptsize $\overline{\Omega}_1'$};
                                           \node  at (2,0.35) {\scriptsize $\Omega_2'$};
                                             \node  at (2,-0.35) {\scriptsize $\overline{\Omega}_2'$};
                                             \draw [dotted,  ->  ]  (5.6, 0)  to  [out=90,  in=0] (4,  1.6);
  \node    at (-3.3, -2.5 )  {  $ (a)$};
    \node    at (4, -2.5 )  {  $ (b)$};
  \end{tikzpicture}
            \caption{\footnotesize{ The map relation  between two  disks  $\mathcal{U}_z(-1)$  and   of  $\mathcal{U}_k(0)$.   }}
      \label{scalingzk}
        \end{center}
    \end{figure}
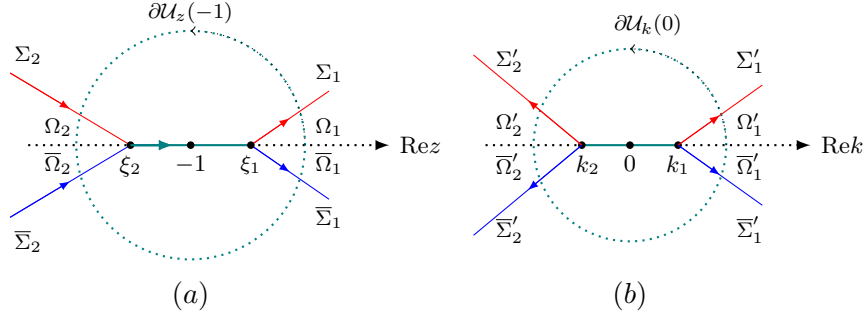

We show   that when $t$ is sufficiently large, $\xi$ is close  to $z=-1$ and $z$ is close to  $z=-1$, the phase function $ t \theta(z)$ can be  approximated by $ \frac{4}{3} k^3 + s k $.
For this purpose,  we first prove that the series  $S(t;k)$ converges uniformly in  $\mathcal{U}_{k}(0)$
 and  decays  with respect to  $t$.

\begin{proposition}\label{order1}
  Let  $(x,t)\in  \mathcal{P}_{< -1}(x,t)$,   then for  $k\in \mathcal{U}_k(0)$,
we have the uniform convergence estimate
$$| S(t;k))| \leq c  t^{-1/3},  \ t\to +\infty, $$
where $c$ is only dependent on the radius of   $\mathcal{U}_k(0)$.
\end{proposition}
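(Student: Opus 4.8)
The plan is to exploit the scaling structure of the two series in (\ref{sers}) and to reduce each of them to a convergent geometric-type series whose \emph{leading} term dictates the decay rate. The crucial algebraic observation is that, since $k=\tau^{1/3}(z+1)$, one has $\tau^{\frac{3-n}{3}}k^n=\tau\,(z+1)^n$, so that for $k\in\mathcal{U}_k(0)$ (i.e. $|k|\le\varepsilon$) every monomial satisfies
\[
\left|\tau^{\frac{3-n}{3}}k^n\right| = \tau\,(|k|\tau^{-1/3})^n \le \tau\,\rho^n, \qquad \rho := \varepsilon\,\tau^{-1/3}.
\]
Because $\rho\to 0$ as $t\to+\infty$, I would first fix $t$ large enough that $\rho<\tfrac12$; then all the series below converge uniformly in $k$ and are controlled by their lowest-order term. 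This is also what makes the estimate uniform over the disk, since every bound is expressed through $|k|\le\varepsilon$ alone.

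For the first series, using the elementary majorization $\sum_{n\ge 4}(n-1)\rho^n\le C\rho^4$ valid for $\rho<\tfrac12$, I would bound
\[
\left|\frac{2}{3}\sum_{n=4}^\infty (n-1)\tau^{\frac{3-n}{3}}k^n\right| \le \frac{2}{3}\,\tau\sum_{n=4}^\infty(n-1)\rho^n \le C\,\tau\rho^4 = C\,\varepsilon^4\,\tau^{-1/3},
\]
which already yields the claimed $\mathcal{O}(\tau^{-1/3})=\mathcal{O}(t^{-1/3})$ decay. The second series is treated the same way, but now the decisive point is that its $\tau$-growth must be absorbed by the prefactor $\xi+1$: from $\sum_{n\ge 2}\rho^n\le 2\rho^2$ one gets $\tau\sum_{n\ge2}\rho^n\le C\varepsilon^2\tau^{1/3}$, and then invoking the defining inequality of the transition region $\mathcal{P}_{<-1}(x,t)$, namely $|\xi+1|< Ct^{-2/3}\le C'\tau^{-2/3}$, gives
\[
\left|\frac{4}{3}(\xi+1)\sum_{n=2}^\infty\tau^{\frac{3-n}{3}}k^n\right| \le C''\,\tau^{-2/3}\cdot\varepsilon^2\tau^{1/3} = C''\,\varepsilon^2\,\tau^{-1/3}.
\]
Adding the two bounds and using $\tau=\tfrac34 t$ produces $|S(t;k)|\le c\,t^{-1/3}$ with a constant $c$ depending only on $\varepsilon$, that is, on the radius of $\mathcal{U}_k(0)$, as asserted.

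I do not expect a serious obstacle: the argument is essentially a geometric-series majorization once the scaling identity $\tau^{\frac{3-n}{3}}k^n=\tau(z+1)^n$ is in place. The one step that genuinely requires care, and where the geometry of the region enters, is the second series: its leading $n=2$ term carries a factor $\tau^{1/3}$ that by itself would grow, so the $t^{-1/3}$ decay is recovered only after using $\xi+1=\mathcal{O}(t^{-2/3})$ on $\mathcal{P}_{<-1}(x,t)$. This is precisely the balance between the scaled space--time variable and the width of the transition region that underlies the whole double-scaling analysis.
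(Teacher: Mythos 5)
Your proposal is correct and follows essentially the same route as the paper: the paper likewise splits $S(t;k)$ into the two sub-series, factors out the leading monomial so that each reduces to a uniformly convergent geometric-type sum in $\tau^{-1/3}\varepsilon$, and absorbs the $\tau^{1/3}$ growth of the $n=2$ term via the transition-region bound $|(\xi+1)t^{2/3}|<C$. Your explicit identity $\tau^{\frac{3-n}{3}}k^n=\tau(z+1)^n$ and the requirement $\rho<\tfrac12$ are just a cleaner packaging of the same majorization.
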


\begin{proof}
We decompose  the series  $S(t;k)$ into
\begin{align*}
S (t;k)= S_1 +S_2,
\end{align*}
where
\begin{align*}
S_1 =  \frac{2}{3}   \tau^{-1/3}   k^4  \sum_{n=4}^\infty (n-1) \left(\tau^{-\frac{1}{3}} k\right)^{n-4}, \ \ S_2 = \frac{4}{3}(\xi+1)\tau^{1/3}   k^2      \sum_{n=2}^\infty \left(\tau^{-\frac{1}{3}}k\right)^{n-2}.
\end{align*}

For the first  series $S_1$,
by  $\lim_{n \to \infty}\frac{n-1}{n} = 1$ and $|k|<\varepsilon$,
we have
\begin{align}
 & |S_1|\leq \frac{2}{3}    \tau^{-1/3}  \varepsilon^4   \sum_{n=4}^\infty (n-1) (\tau^{-1/3}\varepsilon)^{n-4}\leq c t^{-1/3},  \nonumber
\end{align}
which implies that $S_1$  converges uniformly and  decays  with respect to  $t$.

 Similarly, by $|(\xi+1)t^{2/3} |<C$ and $|k|<\varepsilon $,
 we have
 \begin{align}
 & |S_2|\leq    \left(\frac{3}{4}\right)^{-\frac{1}{3}}  C \tau^{-1/3} \varepsilon^2    \sum_{n=2}^\infty  (\tau^{-1/3}\varepsilon)^{n-2}\leq c t^{-1/3}.   \nonumber
\end{align}

\end{proof}

To show $ e^{  i  (\frac{4}{3} k^3 + sk  )} $ is bounded in the disk  $\mathcal{U}_k(0)$ (see Figure \ref{scalingzk} (b)), we give the estimates.

\begin{proposition} \label{opow1}
Let   $(x,t)\in \mathcal{P}_{<-1}(x,t)$, then for  large $t$, we have
\begin{align}
  &{\rm Re}\left[i  \left(\frac{4}{3} k^3 + sk \right)  \right]  \leq - \frac{8}{3}u^2v, \ \  k\in \Omega_j', \label{re011}\\
  &{\rm Re}\left[i  \left(\frac{4}{3} k^3 + sk \right)  \right]  \geq \frac{8}{3}u^2v,\ \   k \in  \overline{\Omega}_j', \label{re022}
\end{align}
where  $k=k_j+u+iv$ is the scaled variable.

\end{proposition}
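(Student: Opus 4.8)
The plan is to compute $\re\!\left[i\left(\tfrac43 k^3+sk\right)\right]$ explicitly as a function of the real and imaginary parts of $k$, and then to exploit the fact that $k_1,k_2$ are the stationary points of the model cubic $\Phi(k):=\tfrac43 k^3+sk$, together with the smallness of the opening angle, to extract the factor $u^2|v|$. Writing $k=a+iv$ with $a=\re k=k_j+u$ and $v=\im k$, a direct expansion of $(a+iv)^3$ gives $\re(ik^3)=-3a^2v+v^3$, while $\re(isk)=-sv$ since $s$ is real. Hence
\begin{equation*}
\re\!\left[i\Phi(k)\right]=-4a^2v+\tfrac43 v^3-sv=-v\left(4a^2+s-\tfrac43 v^2\right),
\end{equation*}
so the whole statement reduces to a lower bound for the scalar factor $4a^2+s-\tfrac43 v^2$ on each sector.

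Next I would record the stationarity relation. The critical points of $\Phi$ solve $\Phi'(k)=4k^2+s=0$, i.e. $k=\pm\tfrac12\sqrt{-s}$ (real, since $s<0$ for $\xi<-1$); by Proposition \ref{opow} the scaled phase points $k_j=\tau^{1/3}(\xi_j+1)$ agree with these up to $4k_j^2+s=\tau^{2/3}\big(4(\xi_j+1)^2+\tfrac83(\xi+1)\big)=\mathcal O(t^{-1/3})$, which is absorbed into the hypothesis ``for large $t$''. Using $4k_j^2+s=0$ and $a=k_j+u$,
\begin{equation*}
4a^2+s-\tfrac43 v^2=8k_j u+4u^2-\tfrac43 v^2=8k_j u+\tfrac83 u^2+\tfrac43\big(u^2-v^2\big).
\end{equation*}
Here the geometry of the sectors is decisive: $\Omega_1'$ lies to the right of $k_1>0$ and $\Omega_2'$ to the left of $k_2<0$, so in either case $k_j u\ge 0$ and thus $8k_j u\ge 0$; and because the rays are opened at an angle $\phi_0<\pi/4$ one has $|v|\le|u|\tan\phi_0<|u|$, whence $u^2-v^2\ge 0$. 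Consequently $4a^2+s-\tfrac43 v^2\ge \tfrac83 u^2$ on both $\Omega_j'$ and $\overline\Omega_j'$, irrespective of the sign of $v$.

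Substituting this lower bound into the formula for $\re[i\Phi]$ yields both claims at once. On $\Omega_j'$ one has $v>0$, so $\re[i\Phi]=-v\,(4a^2+s-\tfrac43 v^2)\le-\tfrac83 u^2 v$, which is (\ref{re011}); on $\overline\Omega_j'$ one has $v<0$, so $-v>0$ and $\re[i\Phi]\ge -\tfrac83 u^2 v\ge \tfrac83 u^2 v$, which is (\ref{re022}) (the same conclusion also follows from the Schwarz symmetry $\Phi(\bar k)=\overline{\Phi(k)}$, giving $\re[i\Phi(\bar k)]=-\re[i\Phi(k)]$). I expect the only genuinely delicate point to be the reconciliation of the true scaled phase points $k_j=\tau^{1/3}(\xi_j+1)$ with the exact critical points $\pm\tfrac12\sqrt{-s}$ of the model phase: the residual $4k_j^2+s=\mathcal O(t^{-1/3})$ would spoil the clean bound in an $\mathcal O(t^{-1/3})$ neighborhood of $k_j$, so one must either take $k_j$ to be the model critical points from the outset, or check that this shrinking neighborhood is harmless for the subsequent parametrix matching of Proposition \ref{locpain}. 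All remaining steps are the elementary trigonometric estimates already illustrated in the proof of Proposition \ref{reprop1}.
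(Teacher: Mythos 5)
Your reduction $\re\left[i\left(\tfrac43 k^3+sk\right)\right]=-v\left(4a^2+s-\tfrac43 v^2\right)$ with $a=k_j+u$, followed by the sector geometry ($k_ju\ge 0$ because $\xi_2<-1<\xi_1<0$ gives $k_1>0>k_2$, and $|v|\le |u|\tan\phi<|u|$), is exactly the mechanism of the paper's proof, which performs the same expansion in the unscaled variable: with $z=\xi_j+u+iv$ it bounds $\re\left[2i\left(t(z+1)^3+(x+2t)(z+1)\right)\right]\le -4tu^2v+2tv f(\xi)$ and then must show $f(\xi)\le 0$. The difference lies entirely in how the residual term is treated, and that is where your argument is incomplete. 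You invoke $4k_j^2+s=0$, which is false: the $k_j=\tau^{1/3}(\xi_j+1)$ are the scaled stationary points of the exact phase $\theta$, not of the model cubic, and the residual $\delta_j:=4k_j^2+s=4\tau^{2/3}\left[(\xi_j+1)^2+\tfrac23(\xi+1)\right]$ has a sign that cannot be ignored. Writing $\xi=-1-\epsilon$ and expanding $\eta_1=-2-\tfrac23\epsilon+\mathcal O(\epsilon^2)$, one finds $(\xi_1+1)^2+\tfrac23(\xi+1)\approx-\tfrac13\sqrt{8/3}\,\epsilon^{3/2}<0$, i.e. $\delta_1$ is \emph{negative}, of size up to $t^{-1/3}$ when $\epsilon t^{2/3}\asymp 1$. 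Since (\ref{re011}) on $\Omega_1'$ amounts to $\delta_1+8k_1u+\tfrac43(u^2-v^2)\ge 0$, the clean bound genuinely fails for $0<u\lesssim t^{-1/3}$ near the vertex $k_1$ — precisely the caveat in your last sentence. But flagging the gap is not closing it: redefining $k_j$ as the model critical points changes the statement (and the geometry feeding into Proposition \ref{locpain}), while "checking that the shrinking neighborhood is harmless" is exactly the estimate you leave unproved.

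For comparison, the paper does not appeal to approximate criticality at all: it keeps the exact stationary-phase identities (\ref{eta1}) and (\ref{por4}), factors the residual as $f(\xi)=(\xi_j+1)g(\xi)$, and analyzes the sign of $g$ separately for $j=1,2$. For $j=2$ this is clean — there the residual itself turns out favorable (one checks $-3(\xi_2+1)^2-2(\xi+1)<0$) and every term helps. For $j=1$ the paper needs the linear term (your $8k_1u$) to dominate the unfavorable residual, and its argument that $g(\xi)$ tends to a negative multiple of $u$ and is therefore negative for all $t>T$ is, like yours, not uniform as $u\to 0^+$; so you have in fact located the genuinely delicate point of this proposition. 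Nevertheless, as written your proposal proves strictly less than the stated inequality: to complete it you must either reproduce the paper's exact algebra and confront the $j=1$ vertex region directly, or settle for the weakened bound $\re\left[i\left(\tfrac43 k^3+sk\right)\right]\le-\tfrac83 u^2v+Ct^{-1/3}|v|$ on $\Omega_j'$, which follows from your computation verbatim and is all that the subsequent boundedness and matching estimates (Proposition \ref{ppe}) actually consume.
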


\begin{proof}  We only show  (\ref{re011}) and the proof  of  (\ref{re022}) is  similar.
The estimate  (\ref{re011}) is equivalent to  the following  estimate
\begin{align}
  &{\rm Re}\left[2i  \left(t(z+1)^3 + (x+2t) (z+1) \right)  \right]  \leq  -4tu^2v,   \ z\in \Omega_{j},   \label{re01}
\end{align}
where  $z=\xi_j+u+iv, \ j=1,2$.    For $j=1$, $u>0$ and $v>0$, while for $j=2$, $u<0$ and $v>0$.
 Since $\arg (z- \xi_1) <\frac{\pi}{4}$ and  $  \pi -\pi/4 <\arg (z- \xi_2) < \pi$, we have $v<|u|$ and    direct calculations show that
\begin{align*}
  {\rm Re}\left[2i  \left(t(z+1)^3 + (x+2t) (z+1) \right)  \right] \leq    -4tu^2v + 2tv f(\xi),
\end{align*}
where
\begin{align}
f(\xi) = -3(\xi_j+1)^2-2(\xi+1)-6(1+(-1)^{j+1})(\xi_j+1)u.\label{f0}
\end{align}
From (\ref{2theta}) and (\ref{eq219}),  we obtain
 \begin{align}
&\eta_1^2 -\xi \eta_1-2=0  \Longrightarrow  \eta_1+2 = \frac{(\xi+1)\eta_1}{\eta_1-1},  \label{eta1}\\
&\xi_j^2 +1 =\eta_1 \xi_j  \Longrightarrow    ( \xi_j +1 )^2=(\eta_1+2)\xi_j,\quad j=1,2. \label{por4}
\end{align}
 Substituting (\ref{eta1}) and (\ref{por4}) into (\ref{f0}) yields
\begin{align}\label{f2}
f(\xi) = \frac{\xi+1}{1-\eta_1}  \left(3 \eta_1 \xi_j +2 \eta_1 - 2\right)-6(1+(-1)^{j+1})(\xi_j+1)u,
\end{align}
in which
\begin{align}\label{eta12}
3 \eta_1 \xi_j +2 \eta_1 - 2 =\frac{1}{\xi_j}  (\xi_j+1)\big[ (\xi_j-1/2)^2+4/7\big].
\end{align}
Further substituting (\ref{eta12}) into (\ref{f2}) yields
\begin{align*}
f(\xi) = (\xi_j+1)g(\xi),
\end{align*}
where
\begin{align}
g(\xi)= \frac{\xi+1}{(1-\eta_1)\xi_j}\big[ (\xi_j-1/2)^2+4/7\big] -6(1+(-1)^{j+1})u.\label{wow}
\end{align}

For  $j=1$,  as $t \to +\infty$, $g(\xi) \to -6 u <0$. Thus there exists  a sufficiently large time $T$ such that for  $t>T$,  $g(\xi)<0$, which
 together with $\xi_1+1>0$, we get  $f(\xi)<0$.

For  $j=2$,    $\xi +1< 0, \  \xi_2<0, \ 1-\eta_1>0,$  and then (\ref{wow}) yields
$$g(\xi)= \frac{\xi+1}{(1-\eta_1)\xi_2}  \big[ (\xi_j-1/2)^2+4/7\big] >0,$$
which together with  $ \xi_2+1<0$ gives  $f(\xi)<0$.
Finally, we obtain the result  (\ref{re01}), which yields  (\ref{re011}) by scaling $u\to u \tau^{-1/3}, v\to v \tau^{-1/3}$.

\end{proof}

    The jump matrix decays to the identity matrix outside $\mathcal{U}_{z}(-1)$ exponentially and uniformly fast as $t \to  +\infty$,
which  enlightens us to construct the solution of $M^{rhp}(z)$ as follows:
 \begin{align} \label{trans6}
 M^{rhp}(z) = \begin{cases}
  E(z),\quad z \in \mathbb{C} \setminus \mathcal{U}_{z}(-1),\\
  E(z) M^{loc}(z), \quad z \in \mathcal{U}_{z}(-1),
 \end{cases}
 \end{align}
where $E(z)$ is an error function which will be determined later, and
 $M^{loc}(z)$  is a solution to the following RH problem.
\begin{prob}
    Find  $ M^{loc}(z)=M^{loc}(z;x,t)$ with properties
	  \begin{itemize}
	  	\item  $M^{loc}(z)$ is analytical in $\mathcal{U}_{z}(-1) \backslash  {\Sigma}^{loc}$, where   $\Sigma^{loc}= \Sigma^{(4)} \cap \mathcal{U}_{z}(-1) $.
     See Figure \ref{scalingzk} (a).
	  	\item  $M^{loc}(z)$  satisfies the jump condition
\begin{equation*}
	  		M^{loc}_+(z)=M^{loc}_-(z) {V}^{loc}(z), \ z\in  {\Sigma}^{loc},
\end{equation*}
\end{itemize}
where
\begin{align*}
        V^{loc}(z)= \begin{cases}
       e^{-it\theta(z)\widehat\sigma_3  }   \left( \begin{array}{cc}
       		1& 0\\
       		 r(\xi_j) T(\xi_j)^2   & 1
       	\end{array}\right),\  z\in \Sigma_{j}, \ j=1,2, \\
  e^{-it\theta(z)\widehat\sigma_3  }  \left(	\begin{array}{cc}
       			1& -\overline{r(\xi_j)} T(\xi_j)^{-2}\\
       			0 & 1
       		\end{array}\right) ,\ z \in \overline{\Sigma}_{j}, \ j=1,2,\\
 T(z)^{-\sigma_3} V(z)T(z)^{\sigma_3},  \   z\in (\xi_2,  \xi_1).
       	\end{cases}
       \end{align*}
	  \begin{itemize}
         \item   $M^{loc}(z)(M^{\infty}((z+1)\tau^{1/3}))^{-1} \to I, \ t\to +\infty$, uniformly for $z \in \partial \mathcal{U}_{z}(-1).$
	  \end{itemize}
\end{prob}

Denote
\begin{align*}
R(z): =r(z)T^2(z),
\end{align*}
then $R(-1) =r(-1)T^{2}(-1)$.
We show that  in the $\mathcal{U}_{k}(0)$,   $M^{loc}(z)$ can be approximated by the solution $M^\infty(k)$ of
 the model RH problem \ref{2minfty} with $r_0= R(-1)$ based on  the following estimates.

 \begin{proposition} \label{ppe}  Let $r\in H^1(\mathbb{R})$, $(x,t)\in \mathcal{P}_{<-1}(x,t)$,  then
\begin{align}
 & \Big| R\left(z\right)   e^{2it\theta \left(z\right)}- r_0    e^{8ik^3/3+2isk }  \Big|   \lesssim t^{-1/6}, \ \ \ k   \in \left(k_2,  k_1\right), \label{e1} \\
 &  \Big| R \left(\xi_j \right)   e^{2it\theta \left(z\right)}- r_0 e^{8ik^3/3+2isk }  \Big|   \lesssim t^{-1/6},    \ k \in  \Sigma_{j}'\cup \overline{\Sigma}_{j}', \ j=1,2.\label{e2}
\end{align}

\end{proposition}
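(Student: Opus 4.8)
The plan is to isolate the common oscillatory factor $e^{8ik^3/3+2isk}$ and reduce both estimates to a bound on the slowly varying amplitude. From the expansion (\ref{asymn1}) under the scaling (\ref{scaled}), I would write $2it\theta(z)=\tfrac{8i}{3}k^3+2isk+2iS(t;k)$, so that on every relevant contour
$$R(z)e^{2it\theta(z)}-r_0\,e^{8ik^3/3+2isk}=e^{8ik^3/3+2isk}\bigl(R(z)e^{2iS(t;k)}-r_0\bigr),$$
and likewise with $R(z)$ replaced by the constant $R(\xi_j)$ for (\ref{e2}). The first task is to bound the prefactor $e^{8ik^3/3+2isk}$. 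On the real segment $k\in(k_2,k_1)$ the exponent is purely imaginary, so its modulus is $1$; on $\Sigma_j'$ Proposition \ref{opow1} gives $\re\bigl[i(\tfrac{4}{3}k^3+sk)\bigr]\le 0$, whence the modulus is $\le 1$. By the Schwarz reflection symmetry of the construction, the estimate on the conjugate rays $\overline{\Sigma}_j'$ (where the jump carries $e^{-2it\theta}$) is the complex conjugate of the one on $\Sigma_j'$, so it suffices to treat $\Sigma_j'$ and the real segment; in both the prefactor is bounded and it remains to estimate the bracket.

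For the bracket I would split, using $r_0=R(-1)$,
$$\bigl|R(z)e^{2iS(t;k)}-r_0\bigr|\le |R(z)|\,\bigl|e^{2iS(t;k)}-1\bigr|+\bigl|R(z)-R(-1)\bigr|.$$
The first term is the mild one: Proposition \ref{order1} gives $|S(t;k)|\lesssim t^{-1/3}$ uniformly on $\mathcal{U}_k(0)$, and since $R=rT^2$ is bounded near $z=-1$ (the factor $T$ being analytic there, as $-1\notin[0,\infty)$ and $-1$ is not a spectral point), the elementary inequality $|e^{w}-1|\le |w|e^{|w|}$ yields $|R(z)|\,|e^{2iS(t;k)}-1|\lesssim t^{-1/3}$; the same bound holds verbatim with $R(\xi_j)$ in place of $R(z)$.

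The decisive term is $|R(z)-R(-1)|$, respectively $|R(\xi_j)-R(-1)|$, and this is where the exponent $t^{-1/6}$ is created. Since $r\in H^1(\mathbb{R})$, the Sobolev embedding $H^1\hookrightarrow C^{0,1/2}$ makes $r$ H\"older-$\tfrac12$ continuous, and as $T^2$ is Lipschitz in a neighborhood of $-1$, the product $R=rT^2$ is H\"older-$\tfrac12$ near $z=-1$, giving $|R(z)-R(-1)|\lesssim |z+1|^{1/2}$. For (\ref{e1}) one has $|z+1|=|k|\tau^{-1/3}\le\varepsilon\tau^{-1/3}\lesssim t^{-1/3}$ since $k\in\mathcal{U}_k(0)$, while for (\ref{e2}) one has $|\xi_j+1|=|k_j|\tau^{-1/3}\lesssim t^{-1/3}$ by Proposition \ref{opow}; in both cases the H\"older bound becomes $\lesssim t^{-1/6}$. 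Collecting the two contributions, the $t^{-1/6}$ term dominates $t^{-1/3}$ and (\ref{e1})--(\ref{e2}) follow. The main obstacle, and the reason the rate cannot be pushed below $t^{-1/6}$, is precisely this last step: the only regularity available for the reflection coefficient at the coalescing stationary point is the H\"older-$\tfrac12$ continuity supplied by the $H^1$ hypothesis, which composed with the $\mathcal{O}(t^{-1/3})$ closeness of $z$ (or $\xi_j$) to $-1$ produces exactly the half-power $t^{-1/6}$.
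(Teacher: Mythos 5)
Your proof is correct and follows essentially the same route as the paper: the same factorization of $e^{2it\theta}=e^{i(\frac{8}{3}k^3+2sk)}e^{2iS(t;k)}$ with the prefactor controlled by Proposition \ref{opow1} (unimodular on $(k_2,k_1)$), the same two-term split into $|R-R(-1)|$ plus the $|e^{2iS(t;k)}-1|\lesssim t^{-1/3}$ contribution from Proposition \ref{order1}, and the same H\"older-$\tfrac12$ mechanism producing $t^{-1/6}$ from $|z+1|\lesssim t^{-1/3}$. Your appeal to the Sobolev embedding $H^1\hookrightarrow C^{0,1/2}$ is just a repackaging of the paper's explicit step $|R(z)-R(-1)|\le \|R'\|_{L^2}|z+1|^{1/2}$ (which relies on $T$ being analytic near $z=-1$, exactly as you note), so there is no substantive difference.
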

\begin{proof}
For $k \in \left(k_2, k_1\right)$,  then  $z\in (\xi_2, \xi_1)$ and $z$ is real,
$$\left|  e^{2i t   \theta (z) } \right|=1, \ \ \left| e^{  i(\frac{8}{3} k^3+ 2s k)  } \right|=1.$$
  so  we have
\begin{align}
&\left| R (z)    e^{2i t   \theta \left(z\right) } - r_0  e^{  i(\frac{8}{3} k^3+ 2s k)  }    \right| \nonumber\\
&\leq \left|  R (z) - R\left(-1\right) \right|
+ \left|  R\left(-1\right)   \right|       \left|   e^{iS(t;k)}-   1  \right|. \label{poe1}
 \end{align}

Since  there is not discrete spectral in the disk $\mathcal{U}_{z}(-1)$,  $ T(z)$  defined by (\ref{tfcs}) is analytical  in the disk $\mathcal{U}_z(-1)$,  it can be shown  that
  \begin{align}
& \|R (z)\|_{H^1(\xi_2, \xi_1) }  =  \| r(z)\|_{H^1(\xi_2, \xi_1) }. \label{poe232}
 \end{align}
 Noticing that $|k|< \sqrt{2C}$,  with the H\"{o}lder  inequality  and (\ref{poe232}),
  \begin{align}
& \left| R (z) - R (-1)  \right|=     \left|\int_{-1}^{z} R'(s) ds
    \right| \leq \|R' \|_{L^2(\xi_2, \xi_1) } |z+1|^{1/2} \nonumber\\
    &  \leq  \|r \|_{H^1(\xi_2, \xi_1) } |k|^{1/2}     t^{-1/6} \leq c t^{-1/6}. \label{poe2}
 \end{align}
By Proposition \ref{order1},
  \begin{align}
&  \left|   e^{iS(t;k)}-   1  \right| \leq e^{|S(t;k)|}- 1    \leq c t^{-1/3}. \label{poe3}
 \end{align}
 Substituting (\ref{poe2}) and (\ref{poe3}) into (\ref{poe1}) gives the estimate  (\ref{e1}).

For $k \in  {\Sigma}_{1}$,  denote $k = k_1 +u+iv$. By (\ref{re011}), $ \left| e^{i \left( \frac{8}{3} k^3 + 2 s k \right)}\right|$ is bounded.
Similarly to the case on the real axis, we can obtain the estimate   (\ref{e2}). The estimate   on the other jump contours can be given in the same way.
\end{proof}

\begin{corollary}  \label{woee}
 Let $(x,t)\in \mathcal{P}_{<-1}(x,t)$, then for  large $t$, we have
\begin{align}
&V^{loc} \left(z\right) =  V^{\infty}(k)+\mathcal{O}(t^{-1/6}), \ \  k\in\mathcal{U}_{k}(0),\nonumber
\end{align}
\end{corollary}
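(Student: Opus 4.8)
The plan is to compare $V^{loc}$ with $V^{\infty}$ piece by piece along the contour $\Sigma^{loc}=\Sigma^{(4)}\cap\mathcal{U}_z(-1)$, which under the scaling $k=\tau^{1/3}(z+1)$ is carried onto the model contour $\bigcup_j(\Sigma_j'\cup\overline\Sigma_j')\cup(k_2,k_1)$. The essential input is Proposition \ref{ppe} together with the expansion \eqref{asymn1}, which yields $e^{2it\theta(z)}=e^{i(\frac{8}{3}k^3+2sk)}e^{2iS(t;k)}$, so that the oscillatory factor of the local problem is the model factor up to the remainder $S(t;k)$ already controlled in Proposition \ref{order1}. The strategy is simply to reduce every entrywise difference to one of the two estimates in Proposition \ref{ppe}.

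First I would conjugate out the diagonal factor $e^{-it\theta(z)\widehat\sigma_3}$. On $\Sigma_j$ ($j=1,2$) this action sends the off-diagonal entry to position $(2,1)$ multiplied by $e^{2it\theta(z)}$, giving
$$
V^{loc}(z)=\begin{pmatrix}1&0\\ R(\xi_j)\,e^{2it\theta(z)}&1\end{pmatrix},\qquad R(\xi_j)=r(\xi_j)T(\xi_j)^2,
$$
whereas the model jump on $\Sigma_j'$ is the same lower-triangular matrix with off-diagonal entry $r_0\,e^{i(\frac{8}{3}k^3+2sk)}$. The difference is the single scalar $R(\xi_j)e^{2it\theta(z)}-r_0 e^{8ik^3/3+2isk}$, which is $\mathcal{O}(t^{-1/6})$ by \eqref{e2}. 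The conjugate contours $\overline\Sigma_j$ are treated identically with the upper-triangular factor and the complex-conjugate entry.

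The remaining step is the middle segment $(\xi_2,\xi_1)\leftrightarrow(k_2,k_1)$, where the jump is full rather than triangular. Conjugating $V(z)$ from \eqref{V0} by $T(z)^{\sigma_3}$ gives
$$
V^{loc}(z)=\begin{pmatrix}1-|r(z)|^2 & -\overline{R(z)}\,e^{-2it\theta(z)}\\ R(z)\,e^{2it\theta(z)} & 1\end{pmatrix},
$$
whose off-diagonal entries are controlled by \eqref{e1}, applied to $R(z)e^{2it\theta(z)}$ and to its conjugate, against the off-diagonal entries $r_0 e^{i(\frac{8}{3}k^3+2sk)}$ and $-\overline{r_0}\,e^{-i(\frac{8}{3}k^3+2sk)}$ of $V^{\infty}$. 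For the diagonal entries I would use that $|R(z)|=|r(z)|$ on the real axis, since $\overline{T(z)}=T(z)^{-1}$ there forces $R\overline{R}=|r|^2$; hence $1-|r(z)|^2=1-|R(z)|^2$ is to be compared with the model value $1-|r_0|^2=1-|R(-1)|^2$, and the Hölder estimate $\big||R(z)|-|R(-1)|\big|\le\|R'\|_{L^2(\xi_2,\xi_1)}|z+1|^{1/2}\lesssim t^{-1/6}$, which follows from \eqref{poe232}--\eqref{poe2}, supplies the required bound.

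Collecting these entrywise estimates, every entry of $V^{loc}(z)-V^{\infty}(k)$ is $\mathcal{O}(t^{-1/6})$ uniformly for $k\in\mathcal{U}_k(0)$, which is the assertion. The main obstacle is precisely the diagonal part on the middle segment: Proposition \ref{ppe} addresses only the off-diagonal (oscillatory) terms, so one must supply the extra ingredient that $|R|=|r|$ on $\mathbb{R}$ and that the reflection coefficient is $H^1$, hence Hölder continuous, near $z=-1$. Everything else is a mechanical book-keeping of the conjugation by $e^{-it\theta\widehat\sigma_3}$ and $T^{\sigma_3}$.
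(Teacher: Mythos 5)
Your proposal is correct and is essentially the paper's own (implicit) argument: the paper states Corollary \ref{woee} without separate proof precisely because it follows from Proposition \ref{ppe} by the piecewise, entrywise comparison of $V^{loc}$ with $V^{\infty}$ that you carry out. The one point you flag as an ``extra ingredient''---the diagonal entry $1-|r(z)|^2$ versus $1-|r_0|^2$ on $(\xi_2,\xi_1)$---in fact already follows from the estimate (\ref{e1}) by the reverse triangle inequality, since $|T(z)|=1$ on $\mathbb{R}^-$ gives $\bigl||R(z)|-|r_0|\bigr|\le\bigl|R(z)e^{2it\theta}-r_0e^{8ik^3/3+2isk}\bigr|\lesssim t^{-1/6}$, so your separate H\"older argument, while correct, is not genuinely additional.
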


From the symmetry of $T(z)$ in Proposition \ref{prop1}, we have $T(-1)=1$, and
$ \arg (r(-1) T^2(-1) ) = \arg (r(-1))$.
Further based on above  Corollary \ref{woee},  we can  show
  the following result.

\begin{proposition}\label{locpain}  Let  $(x,t)\in \mathcal{P}_{<-1}(x,t)$, then for  large $t$, we have
\begin{align}
& M^{loc}\left(z\right) = M^{\infty}(k)+\mathcal{O}(t^{-1/6}), \ \  k\in \mathcal{U}_{k}(0),\label{trans7}
\end{align}
where $M^{\infty}(k)$ is given by (\ref{eegs6})  with the  argument
\begin{align}
& \varphi_0=    \arg (r(-1)).\label{trans87}
\end{align}
\end{proposition}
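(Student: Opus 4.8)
The plan is to upgrade the jump estimate of Corollary~\ref{woee} into an estimate on the solutions themselves by a standard small-norm argument carried out in the scaled $k$-plane. First I would pull the local problem back through the scaling map $k=\tau^{1/3}(z+1)$ of (\ref{scaled}), setting $\widetilde{M}^{loc}(k):=M^{loc}\big(-1+\tau^{-1/3}k\big)$. Under this map the contour $\Sigma^{loc}=\Sigma^{(4)}\cap\mathcal{U}_z(-1)$ is carried onto the rays $\Sigma_j',\overline{\Sigma}_j'$ and the segment $(k_2,k_1)$ inside the fixed disk $\mathcal{U}_k(0)$ (Figure~\ref{scalingzk}(b)); by Proposition~\ref{opow} the scaled phase points $k_1,k_2$ remain in a fixed compact subinterval, so all these contours lie in a $t$-independent compact set and, by construction of the model, coincide with those of the Painlev\'e~II problem \ref{2minfty}. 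By the phase expansion (\ref{asymn1}) together with Propositions~\ref{order1} and~\ref{opow1}, the oscillatory factor $e^{2it\theta(z)}$ becomes $e^{i(\frac{8}{3}k^3+2sk)}e^{iS(t;k)}$ with $S(t;k)=\mathcal{O}(t^{-1/3})$ uniformly on $\mathcal{U}_k(0)$, so that $\widetilde{M}^{loc}$ solves an RH problem whose jump $V^{loc}$, written in $k$, is precisely the quantity compared against $V^\infty(k)$ in Corollary~\ref{woee}.

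Next I would introduce the ratio $N(k):=\widetilde{M}^{loc}(k)\,M^\infty(k)^{-1}$. Since $\widetilde{M}^{loc}$ and $M^\infty$ share the same contour and the same normalization $N(k)\to I$ on $\partial\mathcal{U}_k(0)$—which is exactly the matching condition imposed on $M^{loc}$—the function $N$ is analytic across $\partial\mathcal{U}_k(0)$ and off $\Sigma^{loc}$ carries the single jump
\begin{equation*}
V_N=M^\infty_-\,V^{loc}(V^\infty)^{-1}(M^\infty_-)^{-1},\qquad V_N-I=M^\infty_-\,(V^{loc}-V^\infty)(V^\infty)^{-1}(M^\infty_-)^{-1}.
\end{equation*}
Because the model problem \ref{2minfty} with $r_0=R(-1)=r(-1)T^2(-1)$ and $\varphi_0=\arg r(-1)$ (using $T(-1)=1$) has a solution $M^\infty$ that is bounded together with its inverse on the compact contour—the associated Ablowitz--Segur solution $u(s)$ carrying the Airy asymptotics (\ref{pain2equa}) is pole-free for real $s$ since $|r(-1)|\le 1$—the factors $M^\infty_-$, $(M^\infty_-)^{-1}$ and $(V^\infty)^{-1}$ are $\mathcal{O}(1)$. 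Combining this with Corollary~\ref{woee}, namely $V^{loc}-V^\infty=\mathcal{O}(t^{-1/6})$, I obtain
\begin{equation*}
\|V_N-I\|_{L^\infty(\Sigma^{loc})\cap L^2(\Sigma^{loc})}=\mathcal{O}(t^{-1/6}).
\end{equation*}

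Finally I would invoke the small-norm theory for RH problems. Since $\Sigma^{loc}$ (in the $k$-plane) is compact and $V_N=I+\mathcal{O}(t^{-1/6})$, the Beals--Coifman operator $I-C_{V_N}$ is invertible for large $t$, the problem for $N$ is uniquely solvable, and the Cauchy-integral representation
\begin{equation*}
N(k)=I+\frac{1}{2\pi i}\int_{\Sigma^{loc}}\frac{\mu(\zeta)\big(V_N(\zeta)-I\big)}{\zeta-k}\,\mathrm{d}\zeta,\qquad \|\mu-I\|_{L^2(\Sigma^{loc})}=\mathcal{O}(t^{-1/6}),
\end{equation*}
yields $N(k)=I+\mathcal{O}(t^{-1/6})$ uniformly for $k\in\mathcal{U}_k(0)$. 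Unwinding the ratio gives $M^{loc}(z)=N(k)M^\infty(k)=M^\infty(k)+\mathcal{O}(t^{-1/6})$, which is (\ref{trans7}). The only genuinely delicate point is the one already resolved in Corollary~\ref{woee}: replacing $R(z)e^{2it\theta(z)}$ by $r_0 e^{8ik^3/3+2isk}$ requires simultaneously absorbing the $H^1$-H\"older error $|R(z)-R(-1)|\lesssim t^{-1/6}$ and the higher-order phase remainder $S(t;k)=\mathcal{O}(t^{-1/3})$; once that is in hand, the passage from jumps to solutions is the routine small-norm step above, and the rate $t^{-1/6}$ is inherited directly from the jump estimate.
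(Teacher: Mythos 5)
Your proposal is correct and takes essentially the same route as the paper: Proposition \ref{locpain} is obtained there directly from the jump estimate of Corollary \ref{woee} together with $T(-1)=1$ (giving $\varphi_0=\arg r(-1)$), with the passage from jump closeness to solution closeness left implicit. Your explicit small-norm step — forming $N(k)=\widetilde{M}^{loc}(k)M^{\infty}(k)^{-1}$, bounding $V_N-I$ using the boundedness of $M^{\infty}$ guaranteed by (\ref{mPbounded}), and invoking Beals--Coifman theory on the compact scaled contour — is exactly the standard argument the paper relies on, and it inherits the correct rate $\mathcal{O}(t^{-1/6})$.
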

\begin{remark}
In the generic case, we have $ r(-1)=1$ \cite{CJ}  and then $\varphi_0=0$.
\end{remark}

\subsubsection{Small norm RH problem}
\hspace*{\parindent}

We  now consider the error function  $E(z)$ defined by  (\ref{trans6}) and
  have the following  RH problem.

\begin{prob}\label{iew}
  Find      $E(z)$ with the properties
          \begin{itemize}
          	\item   $E(z)$ is analytical in $\mathbb{C}\backslash  \Sigma^{E}$,
          where $\Sigma^E = \partial \mathcal{U}_{z}(-1) \cup \left(\Sigma^{(4)} \backslash \mathcal{U}_{z}(-1)\right)$.
          	\item  $E(z)$  satisfies the jump condition
          	\begin{align*}
          		E_{+}(z)=E_-(z)V^{E}(z), \quad z\in \Sigma^{E},
          	\end{align*}
          	where the jump matrix is given by
           \begin{align}
                V^{E}(z)= \begin{cases}
                   V^{(4)}(z),\quad z \in \Sigma^{(4)} \backslash \mathcal{U}_{z}(-1),\\
                  M^{loc}(z) , \quad z \in \partial \mathcal{U}_{z}(-1). \label{ioei}
                \end{cases}
            \end{align}
See Figure \ref{signdbiiiar}.
          	\item       $E(z)=I+\mathcal{O}(z^{-1}),	\quad  z\to  \infty.$	

          \end{itemize}
\end{prob}

   \begin{figure}
        \begin{center}
  \begin{tikzpicture}[scale=0.7]

                \draw[dotted,-latex](-7,0)--(4,0)node[black,right]{Re$z$};

                \node[below] at (0.3,0) {\footnotesize $0$};

                     \node[below] at (2.8,0) {\footnotesize $ 1$};
                                 \draw [ red] (0.3, 0)--(-1,0.9);
                                 \draw[red](-2.3,0)--(-1,0.9);
                               \draw [ red] (0.3,0 )--(3,-1.87);
                             \draw [-latex,red] (-2.3,0)--(-1.65,0.45);
                                 \draw [ red] (-4.3, 0)--(-7.479, 1.87);
                           \draw [-latex, red] (-7.479, 1.87)--(-5.8895, 0.935);
   \draw [-latex,red] (0.3,0)--(1.65,-0.935 );
   \draw[-latex,red](-1,0.9 )--(-0.35,0.45);

  \draw[red](-1,0)--(-1,0.9);
   \draw[red,-latex](-1,0)--(-1,0.45);
   \draw[blue](-1,-0.9)--(-1,0);
   \draw[blue,-latex](-1,0)--(-1,-0.45);

                             \draw [ blue] (-2.3,0)--(-1, -0.9);
                                \draw [-latex,blue] (-2.3,0)--(-1.65,-0.45);

                                \draw [blue] (-4.3, 0)--(-7.479, -1.87);
                                 \draw [] (-4.3, 0)--(-2.3, 0);
                                  \draw [-latex](-4.3, 0)--(-3.4,0);

                                \draw [blue] (0.3,0 )--(3,1.87);
                                   \draw [blue] (0.3, 0)--(-1,-0.9);
                                 \draw [-latex,blue]  (-7.479, -1.87)--(-5.8895, -0.935);

                                   \draw [-latex,blue] (0.3,0)--(1.65,0.935 );
                                   \draw[-latex,blue](-1,-0.9 )--(-0.35,-0.45);


                \node[shape=circle,fill=blue, scale=0.15]  at (-3,0){0} ;
                \node[shape=circle,fill=blue, scale=0.15]  at (2.8,0){0} ;
            \node[shape=circle,fill=red,scale=0.15] at (-2.3,0) {0};
             \node[shape=circle,fill=red,scale=0.15] at (-4.3,0) {0};
                \node[shape=circle,fill=black,scale=0.1] at (0.3,0) {0};

                                                 \node[below] at (-4.1,0) {\tiny $\xi_2$};
                                         \node[below] at (-2.3,0) {\tiny $\xi_1$};

                                         \node  at (-6.6,1.8) {\tiny  $\Sigma_{2}$};
                                         \node  at (-6.6,-1.8) {\tiny $\overline{\Sigma}_{2}$};

                                            \node  at (2.3,1.8) {\tiny $\Sigma_{3}$};
                                         \node  at (2.3,-1.8) {\tiny $\overline{\Sigma}_{3}$};

                                           \node  at (-1.75,0.8) {\tiny $\Sigma_{1}$};
                                             \node  at (-0.15,0.8) {\tiny $\Sigma_{0}$};

                                             \node  at (-1.75,-0.8) {\tiny  $\overline{\Sigma}$};
                                             \node  at (-0.15,-0.8) {\tiny $\overline{\Sigma}_{0}$};
                                              \node  at (-0.7,0.3) {\tiny $L$};
                                                \node  at (-1.3,-0.3) {\tiny $\overline{L}$};

\node[]  at (-3.3,1.7) {\tiny $\partial \mathcal{U}_{z}(-1)$};
\draw[white!20, fill=white] (-3.3,0) circle (1.3);
          \node[shape=circle,fill=black, scale=0.15]  at (-3.3,0){0} ;
                \node[below] at (-3,0) {\footnotesize $-1$};
                   \draw [ teal](-3.3,0) circle (1.3);
                 \draw [teal,  ->  ]  (-2, 0)  to  [out=90,  in=0] (-3.3,  1.3);
  \end{tikzpicture}
            \caption{\footnotesize{The  jump contour $V^E(z)$  for $E(z)$ }}
      \label{signdbiiiar}
        \end{center}
    \end{figure}
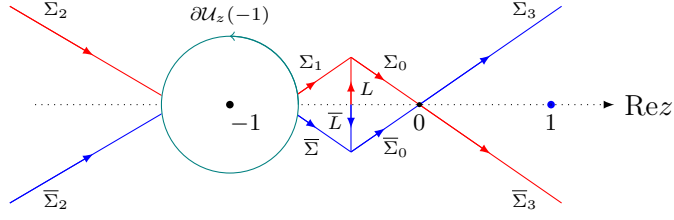

We can  estimate the jump matrix $V^{E}(z)-I$.
\begin{proposition}  Let  $r \in  H^1(\mathbb{R})$. Then
   \begin{equation}\label{vee}
                |V^{E}(z)-I|=\begin{cases}
                    \mathcal{O}(e^{-c t}),\quad z\in \Sigma^{E} \backslash \mathcal{U}_{z}(-1), \\
                    \mathcal{O}(t^{-1/3}), \quad z\in \partial \mathcal{U}_{z}(-1).
                \end{cases}
            \end{equation}
\end{proposition}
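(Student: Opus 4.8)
The plan is to estimate $V^E-I$ on the two pieces of $\Sigma^E$ in \eqref{ioei} separately: on $\Sigma^{(4)}\setminus\mathcal{U}_z(-1)$, where $V^E=V^{(4)}$, and on the circle $\partial\mathcal{U}_z(-1)$, where $V^E$ is the ratio of the local parametrix $M^{loc}$ to the rescaled model $M^{\infty}((z+1)\tau^{1/3})$ (the last normalization condition of the RH problem for $M^{loc}$ forces this ratio to $I$). The two pieces require genuinely different arguments: pure oscillation control on the former, and a matching estimate on the latter.

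On $\Sigma^{(4)}\setminus\mathcal{U}_z(-1)$ the matrix $V^{(4)}-I$ is nonzero only on the parts outside the disk of $\Sigma_1,\Sigma_2,L$ and their conjugates, where each entry is a bounded amplitude $r(\xi_j)T(\xi_j)^{\pm2}$ (bounded since $r\in H^1\hookrightarrow L^\infty$ and $T^{\pm1}$ is bounded by Lemma \ref{prop1}) multiplied by an oscillatory factor $e^{\pm 2it\theta(z)}$. Once the disk is excised these contours stay away from the unique stationary point $z=-1$, so the signature estimates of Proposition \ref{reprop1} and Proposition \ref{opow1} furnish a fixed $c>0$ with $\re\!\big(2it\theta(z)\big)\le -ct$; hence $|V^E-I|=\mathcal{O}(e^{-ct})$ there.

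On $\partial\mathcal{U}_z(-1)$ I would set $N(z)=M^{loc}(z)\,(M^{\infty}(k))^{-1}$ with $k=(z+1)\tau^{1/3}$. This is analytic across the circle, and on $\Sigma^{loc}$ its jump satisfies $V^N-I=M^{\infty}_-\big(V^{loc}(V^{\infty})^{-1}-I\big)(M^{\infty}_-)^{-1}$, so $N-I$ is controlled by $\|V^{loc}(V^{\infty})^{-1}-I\|$ via the bounded Cauchy operator. On each ray $\Sigma_j'$, after extracting the common factor $e^{2i(\tfrac{4}{3}k^3+sk)}$ (bounded by Proposition \ref{opow1}), the deviation reduces to $R(\xi_j)e^{2iS(t;k)}-r_0=\big(R(\xi_j)-R(-1)\big)+R(\xi_j)\big(e^{2iS(t;k)}-1\big)$. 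The phase term is $\mathcal{O}(t^{-1/3})$ by Proposition \ref{order1}. For the amplitude term the decisive point is that $R=rT^2$ is Lipschitz near $z=-1$: $T$ is analytic in $\mathcal{U}_z(-1)$ (no discrete spectrum there, Lemma \ref{prop1}), and although $r$ is only $H^1$ on all of $\mathbb{R}$, at the generic point $z=-1$ the simple poles of $s_{11}$ and $s_{21}$ cancel in $r=s_{21}/s_{11}$, so $r(z)=r(-1)+\mathcal{O}(z+1)$. With $|\xi_j+1|=|k_j|\tau^{-1/3}\lesssim t^{-1/3}$ (Proposition \ref{opow}) this gives $|R(\xi_j)-R(-1)|\lesssim t^{-1/3}$, the middle jump on $(\xi_2,\xi_1)$ being controlled the same way after the $T$-conjugation removes its singular diagonal. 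Thus $\|V^{loc}-V^{\infty}\|_{L^\infty\cap L^2(\Sigma^{loc})}=\mathcal{O}(t^{-1/3})$, and since $M^{\infty}$ and $(M^{\infty})^{-1}$ are uniformly bounded on the circle, the Beals--Coifman/small-norm estimate for $N$ yields $V^E-I=N-I=\mathcal{O}(t^{-1/3})$.

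The main obstacle is exactly this sharp amplitude bound on the circle: upgrading the generic $H^1$/H\"older rate $t^{-1/6}$ (all that Propositions \ref{ppe} and \ref{locpain} assert) to the Lipschitz rate $t^{-1/3}$ is what the proposition needs, and it rests on the cancellation of the spectral singularity at $z=-1$ in the generic case $|r(-1)|=1$ together with the analyticity of $T$ --- the very mechanism encoded in the functions $R_{11},R_{12}$ of Proposition \ref{prop3}. Checking that this Lipschitz control survives the $T$-conjugation of the singular middle jump, uniformly on the circle, is the delicate step; everything else is oscillation bookkeeping already supplied by the signature estimates.
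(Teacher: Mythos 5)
Your treatment of the first regime is essentially the paper's: on $\Sigma^{(4)}\setminus\mathcal{U}_z(-1)$ one has $V^E=V^{(4)}$, the amplitudes $r(\xi_j)T(\xi_j)^{\pm2}$ are bounded, and the signature estimates of Proposition \ref{reprop1} give $\re(2it\theta)\le -ct$ on the excised contours, hence $\mathcal{O}(e^{-ct})$; this part is correct and identical in spirit to the paper's one-line argument. On the circle, however, your argument has a genuine gap, in two respects. First, you have silently changed what is to be estimated: by (\ref{ioei}) the jump of $E$ on $\partial\mathcal{U}_z(-1)$ is $V^E=M^{loc}$ itself, not the ratio $N=M^{loc}\left(M^{\infty}\right)^{-1}$. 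Bounding $N-I$ does not bound $M^{loc}-I$ unless you also control $M^{\infty}(k)-I$ on the matching circle, and that is precisely where the $t^{-1/3}$ of the proposition actually originates in the paper: the deviation of $M^{loc}$ from $I$ on $\partial\mathcal{U}_z(-1)$ is governed by the tail $M_1^{\infty}(s)/k$ of the model solution in the scaled variable $k=\tau^{1/3}(z+1)$ (see (\ref{eegs6})--(\ref{eegs})), with $M_1^{\infty}(s)$ uniformly bounded for the bounded values of $s$ occurring in $\mathcal{P}_{<-1}(x,t)$ (cf. (\ref{posee})--(\ref{mPbounded})); it is exactly this tail that later produces $E_1=-\tau^{-1/3}M_1^{\infty}(s)+\dots$ in Proposition \ref{error1}. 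The paper's proof is a direct two-liner exploiting this decay of $M^{\infty}$; no auxiliary small-norm problem for $N$ on the circle is set up at this stage (the Beals--Coifman small-norm machinery is applied to $E$ itself, afterwards).

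Second, the quantitative engine of your circle argument --- the Lipschitz bound $|R(\xi_j)-R(-1)|\lesssim t^{-1/3}$ --- is both unproven and unavailable under the hypothesis as stated. The proposition assumes only $r\in H^1(\mathbb{R})$, which embeds into $C^{1/2}$ and yields exactly the H\"older rate $\|r\|_{H^1}\,|z+1|^{1/2}\lesssim t^{-1/6}$ of Proposition \ref{ppe}; no better modulus can hold for general $H^1$ data. Your appeal to the generic-case cancellation of the simple poles of $s_{11}$ and $s_{21}$ at $z=-1$ would require differentiability of the scattering data near $z=-1$, i.e.\ hypotheses on $q_0$ strictly stronger than $r\in H^1$, and in the paper that mechanism (the functions $R_{11},R_{12}$ of Proposition \ref{prop3}) is used only to secure \emph{boundedness} of the $\bar\partial$-extensions near the spectral point, never a Lipschitz modulus for $R$. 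You flag this step yourself as the delicate one, but flagging does not fill it: with only the paper's proven inputs (Propositions \ref{ppe} and \ref{locpain}) your small-norm scheme delivers $N-I=\mathcal{O}(t^{-1/6})$, which falls short of the claimed $\mathcal{O}(t^{-1/3})$. In short, the rate in this proposition is generated by the $\mathcal{O}(k^{-1})$ decay of $M^{\infty}$ on the matching circle, not by a sharpened amplitude estimate, and your proposal misidentifies that source.
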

\begin{proof}
For $z\in \Sigma^{E} /\mathcal{U}_{z}(-1)$, by (\ref{ioei}) and Proposition \ref{reprop1},
\begin{equation}
|V^{E}(z)-I| = |V^{(4)}(z)-I| \lesssim   \mathcal{O}(e^{-c t}).
\end{equation}
For $z \in \partial\mathcal{U}_{z}(-1)$, by (\ref{ioei}),
\begin{equation}
|V^{E}(z)-I| = |M^{loc}(z)-I| \lesssim   \mathcal{O}(t^{-1/3}).
\end{equation}

\end{proof}

     Define the Cauchy integral operator
        \begin{align*}
           C_{w^{E}}f =C_-\left( f \left( V^{E}(z)-I \right) \right),
        \end{align*}
         where $w^{E} =V^{E}(z)-I$ and  $C_-$ is the Cauchy projection operator on $\Sigma^{E}$.
        By (\ref{vee}), a simple calculation shows that
       $$||C_{w^{E}}||_{L^2(\Sigma^{E})}\lesssim  ||C_-||_{L^2(\Sigma^{E})}||V^{E}(z)-I||_{L^\infty(\Sigma^{E})} \lesssim \mathcal{O}(t^{-1/3}).$$
 According to the theorem of Beals-Coifman,   the solution of the  RH problem \ref{iew} can be expressed by
         \begin{equation}
            E(z)=I +\frac{1}{2\pi i} \int_{\Sigma^{E}} \frac{\mu_E(\zeta) \left( V^{E}(\zeta)-I\right)}{\zeta-z}\, \mathrm{d}\zeta, \nonumber
         \end{equation}
        where $\mu_E \in L^2\left( \Sigma^{E}\right)$   satisfies $\left(I-C_{w^{E}}\right)\mu_E=I$.
   Further, from (\ref{vee}), we have the estimates
       \begin{equation}
        ||V^{E}(z)-I||_{L^2}=\mathcal{O}(t^{-1/3}),\quad
          ||\mu_E-I||_{L^2}=\mathcal{O}(t^{-1/3}), \label{owej}
       \end{equation}
which imply that the  RH problem \ref{iew} exists an unique solution.
 We  make the expansion of  $E(z)$ at $z=\infty$
        \begin{equation}
            E(z)=I +\frac{E_1}{z} +\mathcal{O}\left(z^{-1}\right), \label{trans8}
        \end{equation}
      where
      $$ E_1=-\frac{1}{2\pi i} \int_{\Sigma^{E}} \mu_E(\zeta)\left( V^{E}(\zeta)-I \right)\, \mathrm{d}\zeta.$$

   \begin{proposition} \label{error1}
   $E_1$ and $E(0)$ can be estimated as follows
	\begin{align}
		&  E_1=    -\tau^{-1/3} M_1^\infty(s)+  \mathcal{O}(t^{-2/3}), \label{e001}\\[6pt]
		& E(0) = I-\tau^{-1/3} M_1^\infty(s)+ \mathcal{O}(t^{-2/3}),\label{e00}
	\end{align}
where $M_1^\infty(s)$ is given by (\ref{eegs}) with  the argument  (\ref{trans87}).
\end{proposition}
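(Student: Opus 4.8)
The plan is to read off $E_1$ and $E(0)$ from the Beals--Coifman representation
$$E(z)=I+\frac{1}{2\pi i}\int_{\Sigma^{E}}\frac{\mu_E(\zeta)\left(V^{E}(\zeta)-I\right)}{\zeta-z}\,\mathrm{d}\zeta,$$
and to show that the only contribution surviving at order $\tau^{-1/3}$ is a single residue produced by the circle $\partial\mathcal{U}_{z}(-1)$. First I would split $\Sigma^{E}=\partial\mathcal{U}_{z}(-1)\cup\left(\Sigma^{(4)}\setminus\mathcal{U}_{z}(-1)\right)$. On the second part (\ref{vee}) gives $V^{E}-I=\mathcal{O}(e^{-ct})$, so this piece contributes only an exponentially small term to $E_1=-\frac{1}{2\pi i}\int_{\Sigma^{E}}\mu_E(\zeta)\left(V^{E}(\zeta)-I\right)\,\mathrm{d}\zeta$ and to $E(0)$, and is absorbed into the error. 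On the circle I would then write $\mu_E=I+(\mu_E-I)$; by Cauchy--Schwarz the $(\mu_E-I)$ piece is controlled by $\|\mu_E-I\|_{L^2(\partial\mathcal{U}_{z}(-1))}\,\|V^{E}-I\|_{L^2(\partial\mathcal{U}_{z}(-1))}$, and both factors are $\mathcal{O}(t^{-1/3})$ by (\ref{owej}), hence this cross term is $\mathcal{O}(t^{-2/3})$.

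It remains to evaluate the leading integral $-\frac{1}{2\pi i}\oint_{\partial\mathcal{U}_{z}(-1)}\left(V^{E}(\zeta)-I\right)\,\mathrm{d}\zeta$. Here I would pass to the scaled variable $k=\tau^{1/3}(\zeta+1)$ from (\ref{scaled}), under which $\partial\mathcal{U}_{z}(-1)$ becomes $\{|k|=\varepsilon\}$ traversed counterclockwise and $\mathrm{d}\zeta=\tau^{-1/3}\,\mathrm{d}k$. On the circle $V^{E}-I$ is the local-to-model data, which by Proposition \ref{locpain} (together with Corollary \ref{woee} and Proposition \ref{ppe}) is matched to the large-$k$ tail of the Painlev\'e model $M^{\infty}(k)=I+\frac{M_1^{\infty}(s)}{k}+\mathcal{O}(k^{-2})$, with $M^{\infty}(k)$ and $M_1^{\infty}(s)$ as in (\ref{eegs6}) and (\ref{eegs}). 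The residue theorem then extracts the $1/k$ coefficient, giving
$$-\frac{1}{2\pi i}\oint_{\partial\mathcal{U}_{z}(-1)}\left(M^{\infty}(k)-I\right)\mathrm{d}\zeta=-\tau^{-1/3}\,\frac{1}{2\pi i}\oint_{|k|=\varepsilon}\frac{M_1^{\infty}(s)}{k}\,\mathrm{d}k+\mathcal{O}(t^{-2/3})=-\tau^{-1/3}M_1^{\infty}(s)+\mathcal{O}(t^{-2/3}),$$
since the $\mathcal{O}(k^{-2})$ tail is analytic inside $\{|k|<\varepsilon\}$ and integrates to zero. Comparing with the expansion (\ref{trans8}) yields (\ref{e001}). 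For (\ref{e00}) I would run the identical computation on $E(0)=I+\frac{1}{2\pi i}\int_{\Sigma^{E}}\zeta^{-1}\mu_E(\zeta)\left(V^{E}(\zeta)-I\right)\,\mathrm{d}\zeta$; the only change is the extra kernel $\zeta^{-1}$, and since $|\zeta+1|<\varepsilon\tau^{-1/3}$ on the circle one has $\zeta^{-1}=-1+\mathcal{O}(\tau^{-1/3})$, so the $\mathcal{O}(\tau^{-1/3})$ correction against an $\mathcal{O}(\tau^{-1/3})$ integrand is again $\mathcal{O}(t^{-2/3})$, while the leading factor $-1$ turns the circle residue into $E(0)=I+E_1+\mathcal{O}(t^{-2/3})=I-\tau^{-1/3}M_1^{\infty}(s)+\mathcal{O}(t^{-2/3})$.

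The main obstacle is not the residue bookkeeping but securing the remainder at the sharp order $\mathcal{O}(t^{-2/3})$. A naive pointwise use of the matching estimate in Proposition \ref{locpain}, $V^{E}-I=(M^{\infty}-I)+\mathcal{O}(t^{-1/6})$, would after integration over a circle of circumference $\mathcal{O}(\tau^{-1/3})$ deliver only an $\mathcal{O}(t^{-1/2})$ error, which is too weak. To reach $\mathcal{O}(t^{-2/3})$ one must separate the argument cleanly: the explicit $\frac{1}{k}$-term is what produces the exact residue $-\tau^{-1/3}M_1^{\infty}(s)$, and the genuinely remaining pieces are estimated not pointwise but through the $L^2$ bounds (\ref{owej}) for $V^{E}-I$ and $\mu_E-I$ on $\partial\mathcal{U}_{z}(-1)$, using in addition that the subleading Taylor coefficients of $M^{\infty}(k)$ are analytic inside the disk and so contribute nothing to the contour integral. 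Isolating this exact residue from the matching and operator-norm errors, and verifying that the latter collapse to $\mathcal{O}(t^{-2/3})$, is the delicate part of the proof.
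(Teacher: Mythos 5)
Your proposal reproduces the paper's proof of Proposition \ref{error1} essentially step for step: the same splitting of $\Sigma^{E}$ into $\partial\mathcal{U}_{z}(-1)$ plus the exponentially suppressed remainder via (\ref{vee}), the same Cauchy--Schwarz treatment of the $(\mu_E-I)$ cross term via (\ref{owej}), and the same evaluation of the leading circle integral by scaling $\zeta=-1+\tau^{-1/3}k$ and extracting the $1/k$ coefficient of $M^{\infty}$.

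Two caveats. First, your reason for discarding the tail beyond $M_1^{\infty}(s)/k$ is misstated: (\ref{eegs6}) is a large-$k$ expansion, and negative powers of $k$ are singular at $k=0$, so they are certainly not ``analytic inside $\{|k|<\varepsilon\}$''; the correct mechanism is that $\oint_{|k|=\varepsilon}k^{-n}\,\mathrm{d}k=0$ for $n\geq 2$, and since the $\mathcal{O}(k^{-2})$ remainder is only asymptotic (not a Laurent tail valid on the fixed circle $|k|=\varepsilon$), one must in principle deform $|k|=\varepsilon$ outward by Cauchy's theorem, using the decay of the jumps of $M^{\infty}$ along the rays of $\Sigma^{\infty}$, to identify the integral with $2\pi i\,M_1^{\infty}(s)$. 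Second, the sharp-order issue you flag is real, but your proposed resolution does not close it: the $L^2$ bounds (\ref{owej}) control only the quadratic cross term with $\mu_E-I$ (that is where $\mathcal{O}(t^{-2/3})$ comes from), whereas the matching error is \emph{linear} in $M^{loc}-M^{\infty}$, and Proposition \ref{locpain} gives $\mathcal{O}(t^{-1/6})$ pointwise, hence $\mathcal{O}(t^{-1/6})\cdot\mathcal{O}(\tau^{-1/3})=\mathcal{O}(t^{-1/2})$ after integration over the shrinking circle, no matter how one rearranges it. You are in good company: the paper's own proof compresses exactly this step into the phrase ``which gives (\ref{e001}) by the estimate (\ref{trans7})'' and, read literally, also delivers only $\mathcal{O}(t^{-1/2})$ rather than the stated $\mathcal{O}(t^{-2/3})$. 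The discrepancy is harmless downstream, since Theorem \ref{th} carries an overall $\mathcal{O}(t^{-1/2})$ error and the $\bar\partial$ contribution (\ref{m51infty}) is already of that size, but a genuinely sharp $\mathcal{O}(t^{-2/3})$ remainder would require a quantitative improvement of the matching in Proposition \ref{locpain}, which neither you nor the paper supplies.
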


\begin{proof}
By   (\ref{ioei}) and  (\ref{owej}), we obtain that
	\begin{align*}
		&E_1  = -\frac{1}{2\pi i} \oint_{\partial \mathcal{U}_{z}(-1)} \left( V^{E}(\zeta)-I \right)\, \mathrm{d}\zeta -\frac{1}{2\pi i} \int_{\Sigma^{E} \backslash \partial \mathcal{U}_{z}(-1)} \left( V^{E}(\zeta)-I \right)\, \mathrm{d}\zeta\\
		&- \frac{1}{2\pi i} \int_{\Sigma^{E}}\left( \mu_E(s)-I\right)\left( V^{E}(\zeta)-I \right)\, \mathrm{d}\zeta\\
		& = -\frac{1}{2\pi i} \oint_{\partial \mathcal{U}_{z}(-1)} \left( V^{E}(\zeta)-I \right)\, \mathrm{d}\zeta + \mathcal{O}(t^{-2/3})\\
		& = -\tau^{-1/3}M^{loc}_1(s)+ \mathcal{O}(t^{-2/3}),
	\end{align*}
which gives (\ref{e001}) by  the estimate (\ref{trans7}).

	In a similar way, we have
	\begin{align*}
		&E(0) = I + \frac{1}{2\pi i} \oint_{\partial \mathcal{U}_{z}(-1)} \frac{ V^{E}(\zeta)-I}{\zeta} \, \mathrm{d}\zeta + \mathcal{O}(t^{-2/3})\\
		&= I -\tau^{-1/3}M^{loc}_1(s) + \mathcal{O}(t^{-2/3}),
	\end{align*}
which yields   (\ref{e00}) by  the estimate (\ref{trans7}).
\end{proof}

\subsection{Long-time analysis on a pure $\bar{\partial}$-problem}\label{modi4}
\hspace*{\parindent}

Here we   consider the long-time   asymptotic behavior for  the pure $\bar{\partial}$-problem $M^{(5)}(z)$.
   Define the function
            \begin{equation}\label{transd}
                M^{(5)}(z)=M^{(4)}(z)\left(M^{rhp}(z)\right)^{-1},
            \end{equation}
            which  satisfies the following $\bar{\partial}$-problem.

\begin{prob3}\label{trad}
 Find  $M^{(5)}(z)$ which satisfies
            \begin{itemize}
                \item   $M^{(5)}(z)$ is continuous and has sectionally continuous first partial derivatives in $\mathbb{C}  \backslash \left(\mathbb{R} \cup \Sigma^{(4)}\right)$.
                \item    $M^{(5)}(z)=I+\mathcal{O}(z^{-1}), \ z\to  \infty$.	

                \item For $z\in \mathbb{C}$,  $M^{(5)}(z)$ satisfies the $ \bar{\partial}$-equation
                \begin{equation}
                    \bar{\partial} M^{(5)}(z) = M^{(5)}(z) W^{(5)}(z), \nonumber
                \end{equation}
                where
	\begin{align}
	W^{(5)}(z):=M^{rhp}(z)  \bar{\partial} R^{(3)}(z) \left(M^{rhp}(z)\right)^{-1}, \label{633s}
	\end{align}
 and  $ \bar{\partial}  R^{(3)}(z)$ has been given in (\ref{R3}).
            \end{itemize}

\end{prob3}

The solution of $ \bar{\partial}$-RH problem   \ref{trad} can be given by
\begin{equation} \label{Im3}
	M^{(5)}(z)=I-\frac{1}{\pi}  \iint_\mathbb{C} \frac{ M^{(5)}(\zeta) W^{(5)}(\zeta)}{\zeta-z} \, \mathrm{d}\zeta\wedge \mathrm{d} \bar \zeta,
\end{equation}
which can be  written as an operator equation
\begin{equation}
	(I-S) M^{(5)}(z)=I, \label{Sm3}
\end{equation}
where
\begin{equation}
	Sf(z)=\frac{1}{\pi} \iint_\mathbb{C} \frac{f(\zeta)W^{(5)}(\zeta)}{\zeta-z} \, \mathrm{d}\zeta\wedge \mathrm{d} \bar \zeta. \label{hfu}
\end{equation}

\begin{proposition}\label{pss} Let  $q_0 \in \tanh (x)+H^{4,4}(\mathbb{R})$. Then
	the    operator $S$  admits the estimate
	\begin{align}
&	\parallel S\parallel_{L^\infty\to L^\infty}\lesssim  t^{-1/6}, \label{esuf}
	\end{align}
	which implies the existence of $(I-S)^{-1}$ for large $t$.
\end{proposition}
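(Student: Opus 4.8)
The plan is to reduce the operator-norm bound to a scalar area integral and then exploit the pointwise decay estimates already in hand. Since $d\zeta\wedge d\bar\zeta=-2i\,dA(\zeta)$ with $dA$ the Lebesgue area measure, pulling $f$ out of the integral in \eqref{hfu} gives
\begin{equation*}
\|Sf\|_{L^\infty} \leq \frac{2}{\pi}\, \|f\|_{L^\infty} \sup_{z\in\mathbb{C}} \iint_{\mathbb{C}} \frac{|W^{(5)}(\zeta)|}{|\zeta-z|}\, dA(\zeta),
\end{equation*}
so it suffices to show the supremum of the area integral is $\mathcal{O}(t^{-1/6})$. First I would use that $M^{rhp}(z)$ and $(M^{rhp}(z))^{-1}$ are uniformly bounded in $z$ and $t$ --- which follows from the construction \eqref{trans6}, the boundedness of $E(z)$ coming from \eqref{owej}, and the boundedness of the Painlev\'e model $M^{\infty}(k)$ --- to obtain from \eqref{633s} the pointwise bound $|W^{(5)}(\zeta)| \lesssim |\bar\partial R^{(3)}(\zeta)|$. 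Because $\bar\partial R^{(3)}$ is supported only in the sectors $\Omega_j\cup\overline{\Omega}_j$, the integral splits into a finite sum over these sectors, and by the symmetry of the configuration it is enough to treat one representative sector.

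Next I would carry out the estimate on the sector $\Omega_1$ adjacent to the phase point $\xi_1$, writing $\zeta=\xi_1+u+iv$. Here two ingredients combine: Proposition~\ref{reprop1} (Case II) gives $|e^{2it\theta(\zeta)}|=e^{t\,\re(2i\theta)}\leq e^{-ctu^2v}$ with $c$ bounded below uniformly since $\xi_1\to-1$, and Proposition~\ref{prop3} gives $|\bar\partial R_1|\lesssim |r'(\re\zeta)|+|\zeta-\xi_1|^{-1/2}$ (the cutoff term $\varphi$ is supported near $z=1$ and drops out away from there). I would split the integral accordingly into an $r'$-piece and a $|\zeta-\xi_1|^{-1/2}$-piece. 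For the $r'$-piece I would fix $v$, apply the Cauchy--Schwarz inequality in $u$ using $\|r'\|_{L^2}$ together with the elementary bound $\big\||\zeta-z|^{-1}\big\|_{L^2_u}\lesssim |v-\im z|^{-1/2}$, and then integrate in $v$ against $e^{-ctu^2v}$. For the singular piece I would instead apply H\"older's inequality with an exponent $p>2$, controlling $\|(\zeta-z)^{-1}\|_{L^p(\Omega_1)}$ against the $L^{p'}$ norm of $|\zeta-\xi_1|^{-1/2}e^{-ctu^2v}$.

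The power $t^{-1/6}$ emerges from the scaling of the transition region: the exponential confines the effective support to $u,v\lesssim t^{-1/3}$, so after the substitution $(u,v)\mapsto(\tau^{1/3}u,\tau^{1/3}v)$ the exponential becomes $t$-independent, the Jacobian together with the Cauchy kernel contributes a net factor $\mathcal{O}(\tau^{-1/3})$, while the vanishing factor $|\zeta-\xi_1|^{-1/2}$ contributes $\mathcal{O}(\tau^{1/6})$; the two balance to give $\mathcal{O}(\tau^{-1/6})=\mathcal{O}(t^{-1/6})$. I expect the main obstacle to be the simultaneous presence of the two singularities --- the half-order factor $|\zeta-\xi_1|^{-1/2}$ and the Cauchy kernel $|\zeta-z|^{-1}$ --- which requires choosing the H\"older exponent $p$ so that both $L^p$ and $L^{p'}$ norms stay finite and the resulting rate is exactly $t^{-1/6}$ (rather than a weaker or log-corrected bound), while keeping all constants uniform in $z$ and as $\xi_1,\xi_2\to-1$. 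The sectors $\Omega_0$ and $\Omega_3$ near $z=0$ and $z=1$ are handled in the same way using the bounds in \eqref{438} (including the $|z-1|$ estimate that tames the generic singularity), and yield contributions of the same or smaller order. Summing over the finitely many sectors gives \eqref{esuf}; then for $t$ large $\|S\|_{L^\infty\to L^\infty}<1$, so $(I-S)^{-1}=\sum_{n\geq 0}S^n$ converges and the claimed existence follows.
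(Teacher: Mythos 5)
Your proposal is correct and takes essentially the same route as the paper's proof: bound $|W^{(5)}(\zeta)|\lesssim|\bar{\partial}R^{(3)}(\zeta)|$ via the uniform boundedness of $M^{rhp}$ and its inverse, split each sector's area integral into the $|r'(\re\zeta)|$ piece (Cauchy--Schwarz in $u$, producing $|v-\im z|^{-1/2}$) and the $|\zeta-\xi_j|^{-1/2}$ piece (H\"older with $p>2$), and let the quadratic-exponential decay of Proposition \ref{reprop1} deliver the $t^{-1/6}$ rate by the same scaling you describe. The only cosmetic difference is that the paper's representative sector is the unbounded $\Omega_{2}$, which forces the extra near/far split of Proposition \ref{reprop1} into $|z|\le 2$ (cubic exponential $e^{-ctv^{3}}$, rate $t^{-1/6}$) and $|z|>2$ (linear decay $e^{-2\sqrt{2}tv}$, rate $t^{-1/2}$) --- a detail your ``one representative sector'' remark must absorb, and which your framework already contains since you cite the full two-regime estimate.
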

\begin{proof}
	The estimate  of the operator $S$  on  $\Omega_{3}  $ and  $\overline{\Omega}_{3}  $
 used (\ref{438}), which was given by Cuccagna and Jenkins \cite{CJ}.   We   estimate   the operator $S$  on  $\Omega_{2}$  and other cases  are   similar. In fact, by  (\ref{437}),  (\ref{R3}),  (\ref{633s}) and (\ref{hfu}), we have
	\begin{align*}
	\|Sf\|_{L^\infty\to L^\infty}  \leq c (I_1+I_2+I_3+I_4),
	\end{align*}
where
	\begin{align*}
		&I_1  =    \iint_{\Omega_{2} \cap \{|z| \le 2\} } F(\zeta,z) \mathrm{d}\zeta\wedge \mathrm{d} \bar \zeta, \ \  I_2 =  \iint_{\Omega_{2} \cap \{|z| > 2\}}  F(\zeta,z) \mathrm{d}\zeta\wedge \mathrm{d} \bar \zeta,\\
&I_3  =    \iint_{\Omega_{2} \cap \{|z| \le 2\} } G(\zeta,z)\mathrm{d}\zeta\wedge \mathrm{d} \bar \zeta, \  \ \ I_4=\iint_{\Omega_{2} \cap \{|z| > 2\}} G(\zeta,z)	\mathrm{d}\zeta\wedge \mathrm{d} \bar \zeta.
	\end{align*}
with
$$  F(\zeta,z)= \frac{1}{|\zeta-z|}\left|r'(\re(\zeta)) \right|e^{\re(2it\theta)}, \  G(\zeta,z)= \frac{1}{|\zeta-z|}\left| \zeta-\xi_2 \right|^{-1/2}e^{\re(2it\theta)}.   $$

Let $z=x+iy$ and $\zeta =\xi_2 + u+iv = |\zeta|e^{iw}$.
   Using Proposition \ref{reprop1} and  the Cauchy-Schwartz's inequality, we have
	\begin{align*}
		 I_1  &= \int_{0}^{2 \sin w} \int_{-\xi_2-2\cos w}^{-v} F(\zeta,z) \mathrm{d}u \mathrm{d}v \\
		&	\lesssim \int_{0}^{2 \sin w} ||r'||_{L^2}   |v-y|^{-1/2} e^{-\frac{1}{8|\xi_2|^3}tv^3} \mathrm{d}v   \lesssim t^{-1/6},\\
         I_2  &= \int_{2 \sin w}^\infty \int_{-\infty}^{-\xi_2-2\cos w} F(\zeta,z) \mathrm{d}u \mathrm{d}v \\
		&	\lesssim \int_{2 \sin w}^\infty ||r'||_{L^2}   |v-y|^{-1/2} e^{-2\sqrt{2}tv} \mathrm{d}v   \lesssim t^{-1/2}.
	\end{align*}

In a similar way, using Proposition \ref{reprop1} and   the H\"{o}lder's inequality with $p>2$ and $1/p+1/q=1$, we obtain
	\begin{align*}
		 I_3  &\lesssim  \int_{0}^{2 \sin w} v^{1/p-1/2}|v-y|^{1/q-1}e^{-\frac{1}{8|\xi_2|^3}tv^3} \mathrm{d}v \lesssim t^{-1/6},\\
       I_4  &\lesssim  \int_{2 \sin w}^\infty v^{1/p-1/2}|v-y|^{1/q-1}e^{-2\sqrt{2}tv} \mathrm{d}v \lesssim t^{-1/2}.
	\end{align*}
\end{proof}

This Proposition \ref{pss}  implies that   the operator  equation (\ref{Sm3})  exists an unique solution, which
can be expanded in the form
\begin{equation}
	M^{(5)}(z)=I + \frac{M^{(5)}_1(x,t) }{z} +\mathcal{O}(z^{-2}), \quad z\to \infty, \label{trans9}
\end{equation}
where
\begin{align}\label{expanm51}
M^{(5)}_1(x,t)=\frac{1}{\pi} \iint_{\mathbb{C}} M^{(5)}(\zeta)W^{(5)}(\zeta)\, \mathrm{d}\zeta\wedge \mathrm{d} \bar \zeta.
\end{align}

Take $z=0$ in (\ref{Im3}), then
\begin{equation} \label{m50}
	M^{(5)}(0)=I-\frac{1}{\pi}  \iint_\mathbb{C} \frac{ M^{(5)}(\zeta) W^{(5)}(\zeta)}{\zeta}  \mathrm{d}\zeta\wedge \mathrm{d} \bar \zeta.
\end{equation}
\begin{proposition}  Let  $q_0 \in \tanh (x)+H^{4,4}(\mathbb{R})$. We have the  following estimates
	\begin{equation}\label{m51infty}
		|M^{(5)}_1(x,t)| \lesssim t^{-1/2}, \ \ \ |M^{(5)}(0)-I| \lesssim t^{-1/2}.
	\end{equation}
\end{proposition}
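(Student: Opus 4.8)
The plan is to first secure a uniform $L^\infty$ bound on $M^{(5)}$ and then read off both estimates directly from the explicit integral representations (\ref{expanm51}) and (\ref{m50}), reducing everything to a region-by-region estimate of area integrals of $|\bar\partial R^{(3)}|$. By Proposition \ref{pss} we have $\|S\|_{L^\infty\to L^\infty}\lesssim t^{-1/6}\to 0$, so for $t$ large the operator $I-S$ is invertible with $\|(I-S)^{-1}\|_{L^\infty\to L^\infty}\lesssim 1$; applying this to (\ref{Sm3}) gives $\|M^{(5)}\|_{L^\infty}\lesssim 1$. Since the global parametrix $M^{rhp}(z)$ and its inverse are uniformly bounded on $\mathbb{C}$ (they are built from the bounded error function $E(z)$ and the local model via (\ref{trans6})), formula (\ref{633s}) yields the pointwise bound $|W^{(5)}(\zeta)|\lesssim|\bar\partial R^{(3)}(\zeta)|$, and by (\ref{parR2}) one has $|\bar\partial R^{(3)}(\zeta)|\lesssim\sum_j|\bar\partial R_j(\zeta)|\,e^{\re(2it\theta(\zeta))}$, with $\bar\partial R_j$ estimated in Proposition \ref{prop3}. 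Combining with $|M^{(5)}|\lesssim 1$, it suffices to prove
\[
\iint_{\mathbb{C}}|\bar\partial R^{(3)}(\zeta)|\,|\mathrm{d}\zeta\wedge\mathrm{d}\bar\zeta|\lesssim t^{-1/2},\qquad
\iint_{\mathbb{C}}\frac{|\bar\partial R^{(3)}(\zeta)|}{|\zeta|}\,|\mathrm{d}\zeta\wedge\mathrm{d}\bar\zeta|\lesssim t^{-1/2}.
\]
The crucial structural difference from the operator-norm computation in Proposition \ref{pss} is that here the integrands carry no singular Cauchy kernel $|\zeta-z|^{-1}$: for $M^{(5)}_1$ there is no kernel at all, while for $M^{(5)}(0)-I$ the weight $|\zeta|^{-1}$ is bounded away from the origin and only integrably singular at $\zeta=0$. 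This is exactly what upgrades the decay from $t^{-1/6}$ to $t^{-1/2}$.

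The dominant contribution comes from the sector $\Omega_2$ (and its mirror $\overline\Omega_2$), where $\xi_2\to-1$ and the decay is weakest. On $\Omega_2\cap\{|z|\le 2\}$ write $\zeta=\xi_2+u+iv$; Proposition \ref{reprop1} gives $\re(2it\theta)\le -ctu^2 v$, and since the sector is thin ($v\le|u|\tan\phi$ with $\phi<\pi/4$) one has $u^2 v\ge c'v^3$. Splitting the exponent as $e^{-ctu^2v}\le e^{-\frac{ct}{2}u^2v}\,e^{-\frac{c't}{2}v^3}$ and using (\ref{437}), $|\bar\partial R_2|\lesssim|r'(\re\zeta)|+|\zeta-\xi_2|^{-1/2}+|\varphi(\re\zeta)|$, I integrate \emph{first} in $u$: by Cauchy-Schwarz $\int|r'(\xi_2+u)|e^{-\frac{ct}{2}u^2v}\,\mathrm{d}u\lesssim\|r'\|_{L^2}(tv)^{-1/4}$ and $\int|\zeta-\xi_2|^{-1/2}e^{-\frac{ct}{2}u^2v}\,\mathrm{d}u\lesssim (tv)^{-1/4}$, while the factor $e^{-\frac{c't}{2}v^3}$ is retained. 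The remaining integral $\int_0^{\infty}(tv)^{-1/4}e^{-c''tv^3}\,\mathrm{d}v$ scales under $v\mapsto t^{-1/3}v$ to give exactly $t^{-1/2}$. On $\Omega_2\cap\{|z|>2\}$ Proposition \ref{reprop1} provides the stronger linear decay $\re(2it\theta)\le -2\sqrt2\,t|v|$, producing a bound of order $t^{-1/2}$ or better; the sectors $\Omega_1,\overline\Omega_1$ near $\xi_1$ are treated identically.

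For the sectors $\Omega_0,\Omega_3$ near $z=0$, Case I of Proposition \ref{reprop1} supplies linear decay $\re(2it\theta)\le -ct|v|$ and (\ref{438}) gives $|\bar\partial R_{0,3}|\lesssim|r'|+|\zeta|^{-1/2}+|\varphi|$, so the same $u$-then-$v$ scheme produces a contribution $\lesssim t^{-1/2}$. For the second integral the extra weight $|\zeta|^{-1}$ is harmless: the sectors $\Omega_1,\Omega_2$ sit a fixed distance from the origin, and near $z=0$ the product $|\zeta|^{-1}|\zeta|^{-1/2}=|\zeta|^{-3/2}$ is integrable in the plane (in polar coordinates $\int r^{-1/2}\,\mathrm{d}r$). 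Summing the finitely many sectors gives both bounds in (\ref{m51infty}). The \emph{main obstacle} is precisely the coalescence region near $z=-1$, where $\xi_1,\xi_2\to-1$, the quadratic phase degenerates, and the Gaussian decay $e^{-ctu^2v}$ weakens to the cubic $e^{-ctv^3}$ along the sector boundary; here the non-smooth factors $|\zeta-\xi_j|^{-1/2}$ and $|r'|$ interact with the degenerate decay, and crude bounds such as $|\zeta-\xi_j|^{-1/2}\le v^{-1/2}$ produce a divergent $v$-integral. The resolution that pins down the exponent $-1/2$ is to carry out the $u$-integration first, extracting the gain $(tv)^{-1/4}$ from Cauchy-Schwarz before the cubic $v$-integration contributes $t^{-1/3}$; the careful bookkeeping of which portion of the exponent is spent on the $u$-integral and which is retained as $e^{-ctv^3}$ is the delicate point of the argument.
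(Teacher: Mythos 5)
Your proposal is correct and follows essentially the same route as the paper's proof: a uniform $L^\infty$ bound on $M^{(5)}$, reduction via (\ref{expanm51}) and (\ref{m50}) to kernel-free area integrals of $|\bar{\partial}R^{(3)}|$ (with the weight $|\zeta|^{-1}$ handled exactly as in the paper, since $|z|^{-1}\le |\xi_2|^{-1}$ on $\Omega_{2}$ and $|\zeta|^{-3/2}$ is locally integrable near the origin), followed by region-by-region estimates built on Propositions \ref{reprop1} and \ref{prop3} with the dominant contribution coming from the coalescing sectors near $z=-1$. If anything, your bookkeeping is more careful than the paper's displayed computation: splitting $e^{-ctu^2v}\le e^{-ctu^2v/2}e^{-c'tv^3/2}$ and integrating in $u$ first via Cauchy--Schwarz to gain the factor $(tv)^{-1/4}$ is precisely what is needed to reach $t^{-1/2}$ (the paper's intermediate bounds, which retain only $e^{-ctv^3}$ in the exponent, would as literally written yield only $t^{-1/3}$), and your resolvent bound $\|M^{(5)}\|_{L^\infty}\le \|(I-S)^{-1}\|\lesssim 1$ from Proposition \ref{pss} is a clean, self-contained variant of the paper's appeal to the boundedness of $M^{(4)}$ and $M^{rhp}$.
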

\begin{proof}
	Similarly to the proof of Proposition \ref{pss}, we take $z \in \Omega_{2}$ as an example and divide the integration (\ref{expanm51}) on $\Omega_{2}$ into four parts.
Firstly, we consider the estimate of  $M^{(5)}_1(x,t)$.
By (\ref{transd}) and the boundedness of $M^{(4)}(z)$ and $M^{rhp}(z)$ on $\Omega_{2}$, we have
	\begin{align}\label{m51esti}
	|M^{(5)}_1(x,t)| \lesssim I_1+I_2+I_3+I_4,
	\end{align}
	where
	\begin{align*}
		&I_1  =    \iint_{\Omega_{2} \cap \{|z| \le 2\} } F(\zeta,z) \mathrm{d}\zeta\wedge \mathrm{d} \bar \zeta, \ \  I_2 =  \iint_{\Omega_{2} \cap \{|z| > 2\}}  F(\zeta,z) \mathrm{d}\zeta\wedge \mathrm{d} \bar \zeta,\\
&I_3  =    \iint_{\Omega_{2} \cap \{|z| \le 2\} } G(\zeta,z)\mathrm{d}\zeta\wedge \mathrm{d} \bar \zeta, \  \ \ I_4=\iint_{\Omega_{2} \cap \{|z| > 2\}} G(\zeta,z)	\mathrm{d}\zeta\wedge \mathrm{d} \bar \zeta.
	\end{align*}
with
$$  F(\zeta,z)= \left|r'(\re(\zeta)) \right|e^{\re(2it\theta)}, \  G(\zeta,z)= \left| \zeta-\xi_2 \right|^{-1/2}e^{\re(2it\theta)}.   $$
 Let  $\zeta =\xi_2 + u+iv = |\zeta|e^{iw}$.  By Cauchy-Schwartz's inequality and Proposition \ref{reprop1}, we have
	\begin{align*}
		I_1 & \lesssim \int_{0}^{2 \sin w} \int_{-\xi_2-2\cos w}^{-v} \left|r'(\re(\zeta)) \right|e^{-\frac{1}{8|\xi_2|^3}tv^3} \mathrm{d}u \mathrm{d}v  \lesssim t^{-1/2},\\
I_2 & \lesssim \int_{2 \sin w}^\infty \int_{-\infty}^{-\xi_2-2\cos w} \left|r'(\re(\zeta)) \right| e^{-2\sqrt{2}tv} \mathrm{d}u \mathrm{d}v  \lesssim t^{-3/2}.
	\end{align*}
	By H\"{o}lder's inequality with $p>2$ and $1/p+1/q=1$ and Proposition \ref{reprop1}, we have
	\begin{align*}
		I_3 &\lesssim \int_{0}^{2 \sin w} \int_{-\xi_2-2\cos w}^{-v}   \left| u + i v\right|^{-1/2} e^{-\frac{1}{8|\xi_2|^3}tv^3} \mathrm{d}u  \mathrm{d}v \lesssim t^{-1/2},\\
I_4 &\lesssim \int_{2 \sin w}^\infty \int_{-\infty}^{-\xi_2-2\cos w}  \left| u + i v\right|^{-1/2}e^{-2\sqrt{2}tv} \mathrm{d}u  \mathrm{d}v \lesssim t^{-3/2}.
	\end{align*}

Then we estimate $M^{(5)}(0)-I$. Notice that if $z \in \Omega_{2}$, then $|z|^{-1} \le |\xi_2|^{-1}$. By (\ref{m50}) and (\ref{m51esti}), we get $|M^{(5)}(0)-I| \lesssim t^{-1/2}$.

\end{proof}

 To recover the potential from the reconstruction formula (\ref{sol}), we need the following estimate.
\begin{proposition}$M^{(3)}(0)$ satisfies the   estimate
	\begin{align}\label{m30}
		M^{(3)}(0)= E(0) + \mathcal{O}(t^{-1/2}),
	\end{align}
where $E(0)$ is given by (\ref{e00}).
\end{proposition}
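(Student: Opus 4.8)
The plan is to unwind the chain of transformations at the single point $z=0$ and then insert the estimates already established for each factor. Combining $M^{(4)}(z)=M^{(3)}(z)R^{(3)}(z)$ from (\ref{trans4}) with the decomposition $M^{(4)}(z)=M^{(5)}(z)M^{rhp}(z)$ from (\ref{trans5}), I would write
\begin{align}
M^{(3)}(z)=M^{(5)}(z)\,M^{rhp}(z)\,\bigl(R^{(3)}(z)\bigr)^{-1}, \nonumber
\end{align}
and specialize to $z=0$. This reduces the claim to controlling the three factors $M^{(5)}(0)$, $M^{rhp}(0)$ and $R^{(3)}(0)$ in turn.

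The first two factors are immediate. The estimate (\ref{m51infty}) already gives $M^{(5)}(0)=I+\mathcal{O}(t^{-1/2})$. For the second factor, the origin sits a fixed distance from the critical point $z=-1$, so for all sufficiently large $t$ it lies outside the shrinking disk $\mathcal{U}_{z}(-1)$; the definition (\ref{trans6}) of $M^{rhp}$ then gives $M^{rhp}(0)=E(0)$ exactly.

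The only genuine point is to check that $R^{(3)}(0)=I$. Here I would use that the reflection coefficient vanishes at the origin, $r(0)=0$ (from the asymptotics of $r$ in Lemma \ref{lemm3.2}), together with the explicit extension formula (\ref{429}) at $\xi_0=0$, for which $r(\xi_0)=r(0)=0$. Near $z=0$ the matrix $R^{(3)}(z)$ is nontrivial only in the sectors $\Omega_0,\overline{\Omega}_0,\Omega_3,\overline{\Omega}_3$, and in each of these its single off-diagonal entry carries a factor $r(|z|)$ or $\overline{r(|z|)}$ (see (\ref{R3}), (\ref{429}), (\ref{430})), which tends to $0$ as $z\to 0$ while $T(z)^{\pm2}$ stays bounded and nonzero at the origin. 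Hence $R^{(3)}(z)\to I$ as $z\to 0$. Since the spectral singularity was removed in passing to $M^{(3)}$, so that $M^{(3)}$ is continuous at $z=0$, and since $V(z)|_{z=0}=I$ owing to $r(0)=0$, this limit is unambiguous and yields $R^{(3)}(0)=I$.

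Putting the three factors together gives $M^{(3)}(0)=M^{(5)}(0)\,E(0)=\bigl(I+\mathcal{O}(t^{-1/2})\bigr)E(0)$. Finally, by (\ref{e00}) the factor $E(0)=I-\tau^{-1/3}M_1^\infty(s)+\mathcal{O}(t^{-2/3})$ is uniformly bounded in $t$, so $\mathcal{O}(t^{-1/2})E(0)=\mathcal{O}(t^{-1/2})$ and we conclude $M^{(3)}(0)=E(0)+\mathcal{O}(t^{-1/2})$. The main obstacle is the identification $R^{(3)}(0)=I$; everything else is assembling already-proven bounds.
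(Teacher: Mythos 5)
Your proposal is correct and takes essentially the same route as the paper: unwind $M^{(3)}(z)=M^{(5)}(z)M^{rhp}(z)\bigl(R^{(3)}(z)\bigr)^{-1}$ at $z=0$, use $|M^{(5)}(0)-I|\lesssim t^{-1/2}$ from (\ref{m51infty}), note $M^{rhp}(0)=E(0)$ since $0\notin\mathcal{U}_{z}(-1)$ by (\ref{trans6}), and absorb the error using the boundedness of $E(0)$. Your explicit check that $R^{(3)}(0)=I$ (via $r(0)=0$ and the vanishing of the extension functions at the origin) simply makes precise the step the paper leaves implicit when it restricts to the region where $R^{(3)}(z)=I$ before setting $z=0$.
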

\begin{proof}
	Reviewing the series of transformations  (\ref{trans4}), (\ref{trans5}), and (\ref{trans6}),  for $z$ large and satisfying   $R^{(3)}(z)=I$,  the solution of $M^{(3)}(z)$ is given by
	\begin{align*}
		M^{(3)}(z) = M^{(5)}(z)E(z) .
	\end{align*}
By (\ref{trans9}) and (\ref{m51infty}), we further obtain
\begin{align*}
	M^{(3)}(z)= E(z)  + \mathcal{O} (t^{-1/2}),
\end{align*}
which yields (\ref{m30}) by taking $z=0$.
\end{proof}

\subsection{Proof of Theorem \ref{th}---Case I } \label{thm-result1}
\hspace*{\parindent}


\begin{proof}
	
	Inverting  the sequence  of transformations (\ref{trans1}), (\ref{trans2}), (\ref{trans3}), (\ref{trans4}), (\ref{trans5}), (\ref{trans6}),
	and especially taking $z\to \infty$ vertically such that  $R^{(3)}(z)= G(z)=I$, then  the  solution of RH problem \ref{RHP0} is given by
	\begin{align}
		& 	M(z) = T(\infty)^{\sigma_3} (  I + z^{-1}\sigma_1 M^{(3)}(0)^{-1} )  M^{(5)}(z) E(z)T(z)^{-\sigma_3} + \mathcal{O}(e^{-ct}). \nonumber
	\end{align}
	Noticing that (\ref{Texpan1}) and  (\ref{m30}), further substituting asymptotic expansions  (\ref{Texpan}),  (\ref{trans8}),   (\ref{trans9}) into  the above formula,
	the reconstruction formula (\ref{sol}) yields
	\begin{align*}
		&q(x,t) 
		=T(\infty )^2 \left(  1-   \frac{1}{2} i \tau^{-\frac{1}{3}}  \left(u(s) e^{i\varphi_0}+\int_s^\infty u^2(\zeta)  \mathrm{d}\zeta   \right) \right) + \mathcal{O}\left( t^{-\frac{1}{2}}\right),
	\end{align*}
which leads to the result (\ref{q1}) in Theorem \ref{th}.

\end{proof}

\section{Painlev\'e asymptotics in  transition   region  $\mathcal{P}_{+1}(x,t)$} \label{sec4}

In a way similar to Section \ref{sec3},     we study  the Painlev\'e asymptotics    in the region
 $(x,t)\in \mathcal{P}_{+1}(x,t) $ and consider the region   $ 0< (\xi-1) t^{2/3}<C$, which corresponds to  Figure \ref{figuref}. For brevity, we denote
$$   \mathcal{P}_{> +1}(x,t)=\{ (x,t): 0< (\xi-1) t^{2/3}<C\}.  $$
  In this case,
the two stationary points $\xi_1, \xi_2$ defined by (\ref{xi2}) are real and close to $z= 1$ at least the speed of $t^{-1/3}$ as $t\to +\infty$.

\subsection{Modifications to the basic RH problem}
\hspace*{\parindent}


Opening    the contour  $(0, \infty)$   needs the  second  matrix   decomposition in  (\ref{v}),
    so  we introduce the function
         \begin{equation}
        T(z)=  \exp \left(  - i \int_{0}^{\infty} \nu(\zeta) \left(\frac{1}{\zeta-z}- \frac{1}{2\zeta} \right) \, \mathrm{d}\zeta \right),\label{resed}
        \end{equation}
     and obtain   the following proposition.

        \begin{lemma} \label{lemma21} \cite{WF}
            The function $T(z)$ has the following properties
            \begin{itemize} \label{prop2}
                \item $T(z)$ is  analytical  in $\mathbb{C} \backslash [0,\infty)$.
                \item $T(z)$ has the symmetry $\overline{T(\bar{z})}=T(z)^{-1}=T(z^{-1})$.
                \item $T(z)$ satisfies the jump condition \begin{equation*}
                    T_+(z)=T_-(z)(1-|r(z)|^2), \quad z\in (0,\infty).
                \end{equation*}
                \item Let \begin{equation}\label{2Tinfty}
                    T(\infty) := \lim_{z \to \infty} T(z)=  \exp \left(i \int_{0}^{\infty} \frac{\nu(\zeta)}{2\zeta} \, \mathrm{d}s \right).
                \end{equation}
                Then,  $|T(\infty)|=1$ and the asymptotic expansion at infinity is
                \begin{equation}\label{2Tzinfty}
                   T(z)=T(\infty)\Bigg[ 1+\frac{i}{z}\Bigg(\int_{0}^{\infty}\nu(\zeta)d\zeta\Bigg)+\mathcal{O}(z^{-2})\Bigg].
                \end{equation}
            \item The ratio $\frac{s_{11}(z)}{T(z)}$ here  has the same boundedness with that ratio in Lemma  \ref{prop1}.


            \end{itemize}
        \end{lemma}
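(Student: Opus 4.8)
The plan is to regard $T(z)$ in (\ref{resed}) as the explicit solution, built from a Cauchy transform, of the scalar Riemann--Hilbert problem carrying the jump $1-|r(z)|^2$ across $(0,\infty)$, and to read off each item from the Sokhotski--Plemelj formula. Writing $T(z)=e^{g(z)}$ with
\[
g(z)=-i\int_{0}^{\infty}\nu(\zeta)\left(\frac{1}{\zeta-z}-\frac{1}{2\zeta}\right)\mathrm{d}\zeta,
\]
I would first record, from the asymptotics of $r$ collected in Lemma \ref{lemm3.2}, that $\nu$ (defined in (\ref{nu})) is real and satisfies $\nu(\zeta)=\mathcal{O}(|\zeta|^{4})$ as $\zeta\to0$ and $\nu(\zeta)=\mathcal{O}(|\zeta|^{-4})$ as $\zeta\to\infty$, so that the subtracted term $-\tfrac{1}{2\zeta}$ is a convenient normalization keeping the integrand integrable on $(0,\infty)$. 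Because the kernel $\tfrac{1}{\zeta-z}-\tfrac{1}{2\zeta}$ is holomorphic in $z$ off $[0,\infty)$, the function $g$, and hence the nowhere-vanishing $T=e^{g}$, are analytic on $\mathbb{C}\setminus[0,\infty)$. For the jump, taking logarithms converts the desired multiplicative relation into the additive one $\log T_+(z)-\log T_-(z)=\log(1-|r(z)|^2)$ on $(0,\infty)$; this is precisely what the Plemelj formula applied to the Cauchy integral in $g$ produces, giving $T_+(z)=T_-(z)(1-|r(z)|^2)$.

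For the two symmetries I would argue directly on $g$. Since $\nu$ is real-valued, conjugation gives $\overline{g(\bar z)}=-g(z)$, whence $\overline{T(\bar z)}=e^{-g(z)}=T(z)^{-1}$. The inversion symmetry $T(z^{-1})=T(z)^{-1}$ follows from the change of variables $\zeta\mapsto1/\zeta$ in the defining integral together with the reflection identity $\nu(1/\zeta)=\nu(\zeta)$, itself a consequence of the scattering symmetry $S(z)=\pm z^{-1}S(z^{-1})\sigma_1$ in Lemma \ref{lemm3.2}, which forces $|r(1/\zeta)|=|r(\zeta)|$; the two relations combine into $\overline{T(\bar z)}=T(z)^{-1}=T(z^{-1})$. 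The behaviour at infinity comes from the geometric expansion $\tfrac{1}{\zeta-z}=-\tfrac1z-\tfrac{\zeta}{z^2}-\cdots$, which yields
\[
g(z)=\frac{i}{2}\int_0^\infty\frac{\nu(\zeta)}{\zeta}\,\mathrm{d}\zeta+\frac{i}{z}\int_0^\infty\nu(\zeta)\,\mathrm{d}\zeta+\mathcal{O}(z^{-2}).
\]
Exponentiating identifies $T(\infty)$ with (\ref{2Tinfty}) and gives the expansion (\ref{2Tzinfty}); since both exponents are purely imaginary (as $\nu$ is real), $|T(\infty)|=1$.

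Finally, the boundedness of $s_{11}(z)/T(z)$ is obtained exactly as in Lemma \ref{prop1}: the relation $|s_{11}(z)|^{2}=(1-|r(z)|^2)^{-1}$ from (\ref{scatter2}) shows that $s_{11}$ carries the same jump as $T$ across $(0,\infty)$, so the ratio extends holomorphically across $(0,\infty)$, is analytic in $\mathbb{C}^{+}$ (no Blaschke factor being needed here, in contrast to (\ref{tfcs})), and has modulus $1$ on the real axis, hence is bounded; I would simply quote this from \cite{WF,CJ}. None of these steps is genuinely hard---this is the standard scalar-RHP lemma, and indeed it is cited from our earlier work \cite{WF}---and the only point demanding care is the inversion symmetry, where the $\zeta\mapsto1/\zeta$ substitution must be reconciled with the regularizing term $-\tfrac{1}{2\zeta}$ and the reflection symmetry of $\nu$; this is the step I expect to absorb most of the bookkeeping.
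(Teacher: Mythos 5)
Your proposal is sound in architecture and is, in substance, the standard proof: note first that the paper itself offers no argument for this lemma at all---it is quoted from \cite{WF}, just as Lemma \ref{prop1} is quoted from \cite{CJ}---and what the cited source does is essentially your computation: analyticity of the exponent $g$ off $[0,\infty)$, Plemelj for the jump, realness of $\nu$ plus the substitution $\zeta\mapsto 1/\zeta$ (with $\nu(1/\zeta)=\nu(\zeta)$ coming from the symmetry of $S$ in Lemma \ref{lemm3.2}) for the two symmetries, and the geometric expansion of the kernel for (\ref{2Tinfty})--(\ref{2Tzinfty}). Your endpoint rates $\nu(\zeta)=\mathcal{O}(\zeta^{4})$ and $\mathcal{O}(\zeta^{-4})$ are correct and do make the regularized integrand integrable; the only additional point worth recording is that in the generic case $|r(1)|=1$ the function $\nu$ has a logarithmic singularity at $\zeta=1$, so local integrability there should be quoted from the bound $\|\log(1-|r|^2)\|_{L^p(\mathbb{R})}<\infty$ of Lemma \ref{lemm3.2} rather than from pointwise decay.

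Two soft spots deserve attention. First, the jump step---which you assert rather than compute---is the one place in the lemma where a sign must actually be tracked: with $(0,\infty)$ oriented left to right and $T_{\pm}$ the boundary values from $\mathbb{C}^{\pm}$, Plemelj applied to the exponent of (\ref{resed}) gives $g_+(x)-g_-(x)=-i\cdot 2\pi i\,\nu(x)=2\pi\nu(x)=-\log(1-|r(x)|^2)$, i.e. $T_+T_-^{-1}=(1-|r|^2)^{-1}$, the \emph{reciprocal} of the displayed relation; this reciprocal is also precisely what the subsequent conjugation requires, since cancelling the diagonal middle factor of (\ref{v}) in $T_-^{-\sigma_3}V(z)T_+^{\sigma_3}$ forces $T_+/T_-=(1-|r|^2)^{-1}$. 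So your phrase ``this is precisely what the Plemelj formula produces'' conceals the only genuine computation in the lemma, and an honest execution of your plan would have to either fix the orientation convention explicitly or read the stated jump with the factor on the other side. Second, for the last item, ``$s_{11}$ carries the same jump as $T$'' is not literally meaningful: $s_{11}$ is analytic only in $\mathbb{C}^+$ and has no jump; the working mechanism is the modulus identity $|s_{11}(x)|=(1-|r(x)|^2)^{-1/2}=|T_+(x)|$ on $(0,\infty)$ from (\ref{scatter2}), combined with the Cuccagna--Jenkins absorption of the generic singularity of $s_{11}$ at $z=1$ by $T$, and the claimed $|s_{11}/T|=1$ holds only on $(0,\infty)$ (on $(-\infty,0)$ one has $|T|=1$ while $|s_{11}|\neq 1$ in general). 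Since you, like the paper, ultimately defer this item to \cite{WF,CJ}, the looseness matches the paper's own treatment, but your maximum-modulus sketch as written would not go through unamended.
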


    To  remove poles  on the unitary  circle $|z|=1$ by converting   their  residues   into   jumps,
    we fix a constant $\rho$  defined by (\ref{definerho}) and introduce  the  interpolation function
       \begin{equation*}
        G(z) = \begin{cases}
            \left(\begin{array}{cc} 1&0\\ -\displaystyle {\frac{ c_j e^{2it\theta(z_j)} }{ z-z_j }} &1\end{array}  \right), \;  |z-z_j|<\rho,  \\[4pt]
      \left(\begin{array}{cc} 1& -\displaystyle { \frac{ \bar{c}_j e^{-2it\theta(\bar{z}_j) }}{ z-\bar{z}_j}} \\ 0&1\end{array}  \right), \;   |z-\bar{z}_j|<\rho,\\
         \left(\begin{array}{cc} 1&0 \\ 0&1\end{array}  \right), \;  \; \; \text{otherwise},
        \end{cases}
    \end{equation*}
  where $z_j \in \mathcal{Z}^+$ and $\bar{z}_j \in \mathcal{Z}^-$.

 Define a contour
\begin{equation}
	\Sigma^{(1)}=\mathbb{R} \cup\left[\bigcup_{j=0 }^{N-1}  \{ z\in \mathbb{C}: |z-z_j|=\rho,  or \   |z-\bar z_j|=\rho\} \right], \nonumber
\end{equation}
and make the following transformation
       \begin{equation}\label{2trans1}
        M^{(1)}(z)=T(\infty)^{-\sigma_3} M(z) G(z)T(z)^{\sigma_3},
       \end{equation}
 then $M^{(1)}(z)$ satisfies the RH problem as follows.

  \begin{prob}\label{m12}
 Find  $M^{(1)}(z)=M^{(1)}(z;x,t)$  satisfying
       \begin{itemize}
        \item $M^{(1)}(z)$ is  analytical  in $\mathbb{C} \backslash \Sigma^{(1)}$.
        \item $M^{(1)}(z)=\sigma_1 \overline{M^{(1)}(\bar{z})}\sigma_1 =z^{-1}M^{(1)}(z^{-1})\sigma_1$.
        \item $M^{(1)}(z)$ satisfies the jump condition
        \begin{equation*}
            M^{(1)}_+(z)=M^{(1)}_-(z)V^{(1)}(z),
        \end{equation*}
        where
\begin{equation}\nonumber
	V^{(1)}(z)=\left\{\begin{array}{ll}
 B_-^{-1}B_+,  \quad  z\in (0,  \xi_1)\cup (\xi_2,\infty),\\[8pt]
 T(z)^{-\sigma_3} V(z)T(z)^{\sigma_3},   \quad  z\in (\xi_1,  \xi_2),\\[8pt]
 b_-^{-1}b_+,  \quad z\in (-\infty,  0),\\[6pt]
\left(\begin{array}{cc}
		1 & 0\\
		-\frac{c_j} {  z-z_j} T^2(z) e^{ 2it\theta( z_j)} & 1
		\end{array}\right),   \quad	|z-z_j|=\rho, \\[12pt]
		\left(\begin{array}{cc}
			1 & \frac{\bar c_j} {  z-\bar z_j} T^{-2} (z) e^{-2it\theta(\bar z_j)}\\
			0 & 1
		\end{array}\right), \quad |z-\bar z_j|=\rho,
\end{array}\right.
\end{equation}
where $b_\pm,$ and $  B_\pm$   are given by (\ref{opep1}) and (\ref{opep2}) respectively.

        \item $M^{(1)}(z)$ has the asymptotic behaviors
        \begin{align*}
                &M^{(1)}(z)=I+\mathcal{O}(z^{-1}),	\quad  z \to  \infty,\\
                &zM^{(1)}(z)=\sigma_1+\mathcal{O}(z), \quad z \to 0.
        \end{align*}

    \end{itemize}
\end{prob}
Noticing that $ V^{(1)}(z) \to I, \ t\to +\infty $ on the circles  $|z-  z_j|=\rho$ and  $|z-\bar z_j|=\rho$,
 so we  get rid of the exponential infinitesimal  term in the jump matrices.
  The    RH problem  \ref{m12}  is asymptotically equivalent to the RH problem \ref{2rhp2m2}
  with the exponential   error $\mathcal{O}(e^{- ct})$.
  \begin{align}\label{2trans2}
  	M^{(1)}(z) = M^{(2)}(z)(I+\mathcal{O}(e^{- ct})),
  \end{align}
where $M^{(2)}(z)$ is the solution of the following RH problem.

  \begin{prob}\label{2rhp2m2}
 Find $M^{(2)}(z)=M^{(2)}(z;x,t)$  with properties
       \begin{itemize}
        \item   $M^{(2)}(z)$ is  analytical  in $\mathbb{C}\backslash \mathbb{R}$.
        \item   $M^{(2)}(z)=\sigma_1 \overline{M^{(2)}(\bar{z})}\sigma_1 =z^{-1}M^{(2)}(z^{-1})\sigma_1$.
        \item $M^{(2)}(z)$ satisfies the  jump condition
        \begin{equation*}
            M^{(2)}_+(z)=M^{(2)}_-(z)V^{(2)}(z),
        \end{equation*}
        where
\begin{equation}\label{2v2}
	V^{(2)}(z)=\left\{\begin{array}{ll}
 B_-^{-1}B_+,  \quad  z\in (0,  \xi_1)\cup (\xi_2,\infty),\\[8pt]
T(z)^{-\sigma_3} V(z)T(z)^{\sigma_3},   \quad  z\in (\xi_1,  \xi_2),\\[8pt]
 b_-^{-1}b_+,  \quad z\in (-\infty,  0).
\end{array}\right.
\end{equation}
        \item $M^{(2)}(z)$ admits the asymptotic behaviors
        \begin{align*}
                &M^{(2)}(z)=I+\mathcal{O}(z^{-1}),	\quad  z \to  \infty,\\
                &zM^{(2)}(z)=\sigma_1+\mathcal{O}(z), \quad z \to 0.
        \end{align*}

    \end{itemize}
\end{prob}

Similarly to the case in Section \ref{remove2}, we remove the singularity by the following transformation
\begin{align}\label{2trans3}
M^{(2)}(z)=\left( I+ \frac{1}{z} \sigma_1 M^{(3)}(0)^{-1} \right )M^{(3)}(z),
\end{align}
where $M^{(3)}(z)$ satisfies the following  RH problem.
\begin{prob} \label{mse3}
 Find  $M^{(3)}(z)=M^{(3)}(z;x,t)$ which satisfies
       \begin{itemize}
        \item $M^{(3)}(z)$ is  analytical  in $\mathbb{C}\backslash \mathbb{R}$.
        \item $M^{(3)}(z)=\sigma_1 \overline{M^{(3)}(\bar{z})}\sigma_1 =\sigma_1 M^{(3)}(0)^{-1}M^{(3)}(z^{-1})\sigma_1$.
        \item $M^{(3)}(z)$ satisfies the  jump condition
        \begin{equation*}
            M^{(3)}_+(z)=M^{(3)}_-(z)V^{(2)}(z),
        \end{equation*}
        where
        $V^{(2)}(z)$ has been given in  (\ref{2v2}).
        \item
        $  M^{(3)}(z)=I+\mathcal{O}(z^{-1}),	\quad  z \to  \infty.$

    \end{itemize}
\end{prob}

\subsection{Transformation to a hybrid $\bar{\partial}$-RH problem}
\hspace*{\parindent}

We  define the contours and sectors. See Figure \ref{sign21}. We especially point out the properties of the real part of the phase function.

\begin{proposition}\label{reprop2}
  Let $\xi>1$ and $(x,t) \in \mathcal{P}_{>+1}(x,t)$. The real part of the phase function $\re \left(2i\theta(z)\right)$ has the following estimations.
 \begin{itemize}
   \item[{\rm   Case I.}](corresponding to $z=0$)
    \begin{align}
       & \re(2i\theta(z))\geq |\sin 2\varphi  | |v|,\quad z\in \Omega_{0}\cup \overline{\Omega}_{3},\\
       & \re(2i\theta(z))\leq - |\sin 2\varphi | |v|,\quad z\in  \overline{\Omega}_{0} \cup \Omega_{3},
    \end{align}
    where $z= u+iv$ and $\varphi =\arg z$.
      \item[{\rm   Case II.}](corresponding to $z=\xi_1$)
    \begin{align}
       & \re(2i\theta(z))\geq \frac{4}{|\xi_1|}u^2 |v|,\quad z\in \Omega_{1},\\
       & \re(2i\theta(z))\leq -\frac{4}{|\xi_1|}u^2 |v|,\quad z\in  \overline{\Omega}_{1},
    \end{align}
    where $z= \xi_1+u+iv$.
    \item[{\rm   Case III.}] (corresponding to $z=\xi_2$)
      \begin{align}
        & \re(2i\theta(z))\geq \begin{cases}
        \frac{1}{8|\xi_2|^3}u^2 |v|,\quad z\in \Omega_{2} \cap\{ |z| \leq 2\},\\[4pt]
       	  2 \sqrt{2}|v|,\quad z\in \Omega_{2} \cap  \{  |z|>2 \},	
       	\end{cases}\\
       & \re(2i\theta(z))\leq
       \begin{cases}
        -\frac{1}{8|\xi_2|^3}u^2 |v|,\quad z\in \overline{\Omega}_{2} \cap\{ |z| \leq 2\},\\[4pt]
       	  -2 \sqrt{2}|v|,\quad z\in \overline{\Omega}_{2} \cap  \{  |z|>2 \},	
       	\end{cases}
       \end{align}
    where $z=\xi_2+ u+iv$.
 \end{itemize}

  \end{proposition}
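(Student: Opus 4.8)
The plan is to carry over the proof of Proposition \ref{reprop1} almost verbatim, because the two transition regions are related by a reflection that I would record first. From the explicit expression (\ref{theta01}) one sees at once that under the substitution $z=u+iv\mapsto -u+iv$ together with $\xi\mapsto-\xi$ one has $\re(2i\theta(z))\mapsto-\re(2i\theta(z))$. Moreover, since the common relation $\xi=\frac{\xi_2^4+1}{\xi_2^3+\xi_2}$ between a stationary point $\xi_2$ and $\xi$ follows from (\ref{23232}) and (\ref{xi1}) exactly as it did from (\ref{2323}) and (\ref{xi2}), this reflection carries the configuration $\xi_2<-1<\xi_1<0$ of $\mathcal{P}_{<-1}(x,t)$ onto the configuration $0<\xi_1<1<\xi_2$ of $\mathcal{P}_{>+1}(x,t)$, mapping each sector $\Omega_j,\overline\Omega_j$ near $z=-1$ to the corresponding sector near $z=+1$. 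Hence every inequality of Proposition \ref{reprop1} turns into the correspondingly labelled inequality of Proposition \ref{reprop2} with its sign reversed, which is precisely the statement to be proved (the decaying and growing sectors swap roles between $z=-1$ and $z=+1$). I would use this symmetry as the conceptual backbone and then verify the three cases directly for $\xi>1$.

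For Case I, I would take the representative sector near $z=0$ and write $z=|z|e^{i\varphi}=u+iv$, recasting (\ref{theta01}) in the form $\re(2i\theta(z))=\sin2\varphi\,(F(|z|)^2-\xi\sec\varphi\,F(|z|)-2)$ with $F(s)=s+s^{-1}$, exactly as in (\ref{eoue}). The only structural change from Proposition \ref{reprop1} is that now $\xi>1$, so the linear term $-\xi\sec\varphi\,F(|z|)$ enters with the opposite sign; within the opened wedge, which by construction lies in a definite-sign region, this reverses the sign of the bracket and produces $\re(2i\theta(z))\le-|\sin2\varphi|\,|v|$ on the sectors $\overline\Omega_0\cup\Omega_3$ and $\re(2i\theta(z))\ge|\sin2\varphi|\,|v|$ on $\Omega_0\cup\overline\Omega_3$, with the remaining sectors following by conjugation symmetry.

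For Cases II and III I would substitute $\xi=\frac{\xi_2^4+1}{\xi_2^3+\xi_2}$ into (\ref{theta01}). Near $\xi_2$ with $|z|\le2$ this yields, after discarding the manifestly signed quartic and cubic terms in $u$, a bound of the shape appearing in (\ref{refg}), involving the same auxiliary polynomials $f(\xi_2)$ and $g(\xi_2)$. The crucial point is to re-examine their behaviour on $\xi_2\in(1,\infty)$: since $f$ depends only on $\xi_2^2$ it is now increasing on $(1,\infty)$ with $f(\xi_2)>f(1)=6-2\tan^2\varphi>4$, while $g$ is odd with $g(1)=0$ and $g(\xi_2)>0$ for $\xi_2>1$; combined with the sign of $u$ in $\Omega_2$ this reproduces the $u^2|v|$ estimate of (\ref{omega2est1}) but with the positive sign. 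For $|z|>2$ I would use $u+\xi_2>2\cos w$ and the auxiliary function $h(\xi)$ as in (\ref{re20})--(\ref{estfxi}), whose limit as $\xi\to1^+$ is controlled on $\tfrac{\sqrt2}{2}<\cos w<1$, to obtain the linear bound $2\sqrt2\,|v|$. Case II near $\xi_1$ is the simplest, handled directly from (\ref{theta01}) and $\xi_1^2+1=\eta_2\xi_1$.

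The main obstacle is not any single inequality but the \emph{sign bookkeeping}: because the analysis now sits near $z=+1$ with $\xi>1$ rather than near $z=-1$ with $\xi<-1$, the prefactor $\frac{2v}{|z|^4(\xi_2^3+\xi_2)}$ changes sign ($\xi_2^3+\xi_2>0$ here), every estimate reverses orientation, and one must confirm that the auxiliary functions $f,g,h$ keep the correct monotonicity and limiting values on $(1,\infty)$ and at $\xi\to1^+$ instead of on $(-\infty,-1)$ and at $\xi\to-1^-$. Invoking the reflection symmetry established in the first paragraph is the safest way to guarantee global consistency of all these signs, after which the direct computations above serve only as an explicit verification.
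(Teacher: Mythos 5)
Your proposal is correct, and it is in fact more substantial than the paper's own proof, which consists of a single sentence deferring to Proposition \ref{reprop1}. The genuine difference lies in your first paragraph: where the paper implicitly asks the reader to redo the computations of Proposition \ref{reprop1} with $\xi>1$ and track every sign change by hand, you make the transfer exact through the reflection $z=u+iv\mapsto -u+iv$ (i.e.\ $z\mapsto -\bar z$) combined with $\xi\mapsto-\xi$. This is sound: since $\theta$ evaluated at $(-\bar z,-\xi)$ equals $\overline{\theta(z;\xi)}$, one gets $\re(2i\theta)\mapsto-\re(2i\theta)$, and the relations $\eta_2(-\xi)=-\eta_1(\xi)$ and $\xi=(\xi_2^4+1)/(\xi_2^3+\xi_2)$ (valid in both regimes) show the stationary points, and hence the sectors $\Omega_j,\overline{\Omega}_j$, are carried onto their counterparts, so each inequality of Proposition \ref{reprop1} transfers with reversed sign and unchanged constants $4/|\xi_1|$, $1/(8|\xi_2|^3)$, $2\sqrt2$ --- exactly the content of Proposition \ref{reprop2}. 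This buys rigor at essentially zero computational cost. Your direct verifications also match what the paper's deferred argument requires: $f$ is even, hence increasing on $(1,\infty)$ with $f(\xi_2)>f(1)=6-2\tan^2\varphi>4$; $g$ is odd with $g(1)=0$, so $g(\xi_2)u>0$ in $\Omega_2$ where $u>0$ and the prefactor $2v/(|z|^4(\xi_2^3+\xi_2))$ is now positive; and $h(\xi)\to 2\cos w+1-\tfrac{1}{4\cos^2 w}\in\bigl(\tfrac{2\sqrt2+1}{2},\tfrac{11}{4}\bigr)$ as $\xi\to1^+$, yielding the linear bound $2\sqrt2\,|v|$ for $|z|>2$.

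One phrase in your Case I sketch is loose: it is not the bracket $F(|z|)^2-\xi\sec\varphi\,F(|z|)-2$ whose sign reverses in every sector. For instance in $\Omega_3$ of the $\mathcal{P}_{>+1}$ configuration, which now abuts the negative real axis, one has $\sec\varphi<0$, so the linear term $-\xi\sec\varphi\,F$ is positive, the bracket stays bounded below (by an argument parallel to $F^2-F\geq v$), and the required negativity comes entirely from the prefactor $\sin2\varphi<0$. Since your reflection argument handles all sectors uniformly and the direct computation is offered only as verification, this imprecision is harmless, but if you write out Case I explicitly you should track the prefactor rather than the bracket.
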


\begin{proof}
The proof here is similar to the proof on the region $\mathcal{P}_{< -1}(x,t)$, see more details in Proposition   \ref{reprop1}.
\end{proof}

  To open the dashed part  of the real axis $\mathbb{R}$ in Figure \ref{sign21},
we introduce the following extension  functions $R_j(z), \ j=0, 1, 2, 3$.

   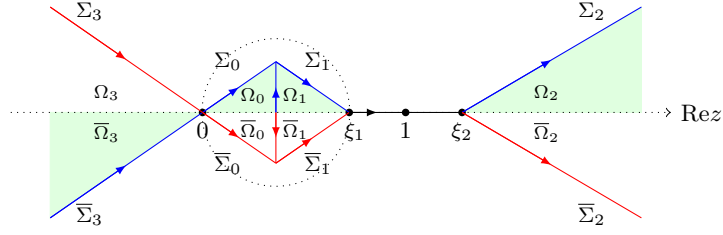
\begin{figure}
	\begin{center}
		\begin{tikzpicture}[scale=0.75]
     \draw[green!12, fill=green!12] (4.3, 0)--(7.479,  1.87)--(7.479,0)--(4.3, 0);
   \draw[green!12, fill=green!12] (-0.3, 0)--(-3,-1.87)--(-3,0)--(-0.3, 0);
    \draw[green!12, fill=green!12] (-0.3, 0)--(2.3,0)--(1, 0.9)--(-0.3, 0);

			\draw[dotted, ->](-3.7,0)--(8,0)node[black,right]{\footnotesize Re$z$};
			\node[shape=circle,fill=black, scale=0.15]  at (3.3,0){0} ;
			\node[shape=circle,fill=black,scale=0.15] at (2.3,0) {0};
			\node[shape=circle,fill=black,scale=0.15] at (4.3,0) {0};
			\node[shape=circle,fill=black,scale=0.15] at (-0.3,0) {0};
			\node[below] at (-0.3,0) {\footnotesize $0$};
			\node[below] at (3.3,0) {\footnotesize $1$};
			\draw [blue] (-0.3, 0)--(1,0.9);
			\draw[blue](2.3,0)--(1,0.9);
			\draw [blue] (-0.3,0 )--(-3,-1.87);
			\draw [-latex,blue] (1,0.9)--(1.65,0.45);
			
			\draw [blue] (4.3, 0)--(7.479, 1.87);
			\draw [-latex, blue] (4.3, 0)--(5.8895, 0.935);
			
			\draw [-latex,blue] (-3,-1.87)--(-1.65,-0.935 );
			\draw[-latex,blue](-0.3,0 )--(0.35,0.45);

			\draw[blue](1, 0.9)--(1,0);
            \draw[red](1,-0.9)--(1,0);
			\draw[blue,-latex](1,0)--(1,0.45);
			\draw[red,-latex](1,0)--(1,-0.45);
			
			\draw [red] (2.3,0)--(1, -0.9);
			\draw [-latex,red] (1, -0.9)--(1.65,-0.45);
			
			\draw [red] (4.3, 0)--(7.479, -1.87);
			\draw [] (4.3, 0)--(2.3, 0);
			\draw [-latex](2.3, 0)--(2.8,0);
			
			\draw [red] (-0.3,0 )--(-3,1.87);
			\draw [red] (-0.3, 0)--(1,-0.9);
			\draw [-latex,red]  (4.3, 0)--(5.8895, -0.935);
			
			\draw [-latex,red] (-3,1.87)--(-1.65,0.935 );
			\draw[-latex,red](-0.3, 0)--(0.35,-0.45);
			
	                           \node  at (6.6,1.8) {\scriptsize  $\Sigma_{2}$};
                                         \node  at (6.6,-1.8) {\scriptsize $\overline{\Sigma}_{2}$};

                                            \node  at (-2.3,1.8) {\scriptsize $\Sigma_{3}$};
                                         \node  at (-2.3,-1.8) {\scriptsize $\overline{\Sigma}_{3}$};

                                           \node  at (1.75,0.9) {\scriptsize $\Sigma_{1}$};
                                             \node  at (0.15,0.9) {\scriptsize $\Sigma_{0}$};

                                             \node  at (1.75,-0.9) {\scriptsize  $\overline{\Sigma}_{1}$};
                                             \node  at (0.15,-0.9) {\scriptsize $\overline{\Sigma}_{0}$};
			
			\node[below] at (4.3,0) {\scriptsize $\xi_2$};
			\node[below] at (2.4,0) {\scriptsize $\xi_1$};
			\node  at (5.8,0.35) {\tiny $\Omega_{2}$};
			\node  at (5.8,-0.35) {\tiny $\overline{\Omega}_{2}$};
			\node  at (-2,0.35) {\tiny $\Omega_{3}$};
			\node  at (-2,-0.35) {\tiny $\overline{\Omega}_{3}$};
			\node  at (1.35,0.3) {\tiny $\Omega_{1}$};
			\node  at (0.6,0.3) {\tiny $\Omega_{0}$};
			\node  at (1.35,-0.3) {\tiny $\overline{\Omega}_{1}$};
			\node  at (0.6,-0.3) {\tiny $\overline{\Omega}_{0}$};
\draw [dotted ](1,0) circle (1.31);

		\end{tikzpicture}
	 \caption{\footnotesize{Open the jump contour $\mathbb{R}\setminus (\xi_1, \xi_2)$ along  red rays  and blue rays.
           The green regions are  continuous extension sectors  with
             $\re \left(2i\theta(z)\right)> 0$, while the white regions
             are continuous extension sectors with $\re \left(2i\theta(z) \right)< 0$. }}
		\label{sign21}
	\end{center}
\end{figure}

 \begin{proposition}
     Let  $q_0 \in \tanh (x)+H^{4,4}(\mathbb{R})$.  Define  functions $R_{j}(z) (j=0,1,2,3) $ with the  boundary values
       \begin{align*}
       &R_{j}(z)= \begin{cases}
          \frac{\overline{r  (z)} }{1-|r(z)|^2} T_+^{-2}(z), \quad z \in (0,\xi_1) \cup (\xi_2, \infty),\\[2pt]
             R(\xi_j),\quad z \in \Sigma_{j},\ j=0,1,2,\\
           \end{cases}\\
             &\overline{R}_{j}(z)= \begin{cases}
               \frac{ {r  (z)} }{1-|r(z)|^2} T_+^{-2}(z), \quad z \in (0,\xi_1) \cup (\xi_2, \infty),\\[2pt]
                \overline{R(\xi_j)}, \quad z \in \overline{\Sigma}_{j},\ j=0,1,2,\\
               \end{cases}\\
        &R_{3}(z)= \begin{cases}
             r(z)T^2(z), \quad  z \in (-\infty,0),\\
             0,\quad z \in \Sigma_{3},
            \end{cases}\\
        &\overline{R}_{3}(z)= \begin{cases}
         \overline{r(z)}T^{-2}(z), \quad z  \in (-\infty,0),\\
                  0,\quad z \in \overline{\Sigma}_{3},
               \end{cases}
       \end{align*}
       where
 \begin{equation}
R(\xi_j)= \frac{ \overline{S_{21}(\xi_j)} }{S_{11}(\xi_j)} \left(\frac{ s_{11}(\xi_j) }{T_+(\xi_j)}\right)^2,
\end{equation}
where $S_{21}(z)$ and $S_{11}(z)$ are given by (\ref{oeeedk2}).
 Then  for $j=0,1,2,3$, there exists a constant $c=c(q_0)$, which only depends on $q_0$, such that
       \begin{equation*}
	|\bar{\partial}R_{j}| \le
c\left( |\varphi(\re(z))|+  | r'\left( \re\left( z \right)\right)|+    |z-\xi_j|^{-1/2} \right), \;\text{for} \;z\in \Omega_{j} \cup \overline{\Omega}_{j},\nonumber
\end{equation*}
where $\varphi \in C_0^\infty \left( \mathbb{R}, \left[0,1\right] \right)$ with small support near $z=1$.
        \end{proposition}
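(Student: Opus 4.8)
The plan is to follow the construction carried out for the region $\mathcal{P}_{<-1}(x,t)$ in Proposition \ref{prop3}, adapting it to the sector geometry near $z=1$ depicted in Figure \ref{sign21}. First I would produce explicit continuous extensions of the prescribed boundary data into each sector $\Omega_j$ and $\overline{\Omega}_j$. Away from $z=1$, for the sectors $\Omega_j$ $(j=0,1,2)$ whose real-axis datum is $\tfrac{\overline{r(z)}}{1-|r(z)|^2}T_+^{-2}(z)$ and whose value on the opposite ray $\Sigma_j$ is the constant $R(\xi_j)$, I would interpolate by a function of the form
\[
R_j(z)=\cos(k\arg z)\,\frac{\overline{r(|z|)}}{1-|r(|z|)|^2}\,T(z)^{-2}+\bigl(1-\cos(k\arg z)\bigr)R(\xi_j),\qquad k=\tfrac{\pi}{2\phi},
\]
exactly as in (\ref{429}); for $R_3$ on $\Omega_3$, whose real-axis datum is $r(z)T^2(z)$ and whose value on $\Sigma_3$ vanishes, I would take $R_3(z)=\cos(k\arg z)\,r(|z|)T^2(z)$ as in (\ref{430}). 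These match the prescribed boundary values by the choice of $k$ and are continuous in the closed sectors.

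Next I would differentiate. Writing $\bar\partial$ in polar coordinates centered at $\xi_j$, the radial part of each extension produces the term $|r'(\re z)|$, while the angular derivative of the factor $\cos(k\arg z)$ multiplies the difference between the real-axis datum and the constant ray value $R(\xi_j)$ by a factor of $|z-\xi_j|^{-1}$. Since the boundary datum lies in $H^1(\mathbb{R})\hookrightarrow C^{1/2}(\mathbb{R})$ and $T$ is bounded on the relevant segments, this difference is $\mathcal{O}(|z-\xi_j|^{1/2})$, so the angular contribution is $\mathcal{O}(|z-\xi_j|^{-1/2})$. Summing the two pieces gives the claimed bound, where for $j=0,3$ the vertex is $\xi_0=0$ and $|z-\xi_j|^{-1/2}$ reduces to $|z|^{-1/2}$; all constants depend only on $\|r\|_{H^1}$, hence only on $q_0$.

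The only genuinely new difficulty relative to Proposition \ref{prop3} is the singularity of $\tfrac{\overline{r}}{1-|r|^2}$ at $z=1$ in the generic case $|r(1)|=1$, which enters $R_0,R_1,R_2$ because $\xi_1,\xi_2\to 1$. The hard part is to show that this apparent blow-up cancels so that the $C^{1/2}$ argument above applies. I would resolve it precisely as in (\ref{oeeedk})--(\ref{oeeedk2}), rewriting
\[
\frac{\overline{r(z)}}{1-|r(z)|^2}\,T_+^{-2}(z)=\frac{\overline{S_{21}(z)}}{S_{11}(z)}\left(\frac{s_{11}(z)}{T_+(z)}\right)^2,
\]
where by Lemma \ref{lemm3.2} the denominator $S_{11}$ is nonzero and analytic near $z=1$ and $s_{11}/T_+$ is bounded with modulus one on the relevant interval, so every factor on the right is regular there. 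Introducing cutoffs $\chi_0,\chi_1\in C_0^\infty$ supported near $z=0$ and $z=1$ and splitting the extension as in (\ref{oedk}) neutralizes the singularity; the contributions of these cutoffs are exactly what the term $|\varphi(\re z)|$ in the bound absorbs. The remaining estimates are identical to those of Proposition \ref{prop3}, and I would omit them, following Cuccagna and Jenkins \cite{CJ}.
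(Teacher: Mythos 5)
Your proposal is correct and takes essentially the same route as the paper, which defines these extensions by the $\cos(k\arg z)$ interpolations of (\ref{429})--(\ref{430}) and neutralizes the generic-case singularity at $z=1$ via the factorization (\ref{oeeedk}) with the cutoff splitting (\ref{oedk}), referring the remaining $\bar\partial$ estimates back to Proposition \ref{prop3} and \cite{CJ}. The only discrepancy is cosmetic: the paper invokes the analogue of (\ref{430}) for $R_0,R_1,R_2$ and (\ref{429}) for $R_3$, whereas you cite the two labels in the opposite roles, but your explicit formulas implement exactly the intended construction.
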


 \begin{proof}
 The extension functions $R_j(z), j=0,1,2$ are defined in  the same way  in (\ref{430}).
 The function $R_3(z)$ can be treated by  (\ref{429}).
  The   proof  is  similar to that in  Proposition \ref{prop3}.

\end{proof}

Define the function
    \begin{align}\label{2R3}
    	R^{(3)}(z)=\begin{cases}
      		\left(\begin{array}{cc}
    	1&  R_{j}e^{-2it\theta(z)}\\0&1
    \end{array} \right), \quad z \in \Omega_{j},\quad  j = 0,1,2,\\
          		\left(\begin{array}{cc}
    	1&  0\\\overline{R}_{j}e^{2it\theta(z)}&1
    \end{array} \right), \quad z \in \overline{\Omega}_{j},\quad  j = 0,1,2,\\
    \left( \begin{array}{cc}
    	1& 0\\-R_{3}e^{2it\theta(z)}&1
    \end{array} \right), z \in \Omega_{3},\\
    \left(\begin{array}{cc}
    	1&  -\overline{R}_{3}e^{-2it\theta(z)}\\0&1
    \end{array} \right), \quad z \in \overline{\Omega}_{3},\\
    I, \quad {\rm other \ regions},
    	\end{cases}
    \end{align}
    and the contour
$$\Sigma^{(4)}=\Gamma\cup \overline{\Gamma}\cup [\xi_1, \xi_2], \ \Gamma= \cup_{j=0}^3 \Sigma_j \cup L.$$
 Then we obtain the following function
    \begin{equation}\label{2trans4}
        M^{(4)}(z)=M^{(3)}(z)R^{(3)}(z),
    \end{equation}
satisfies the following mixed $\bar{\partial}$-RH problem.

\begin{prob2}
    Find   $M^{(4)}(z)=M^{(4)}(z;x,t)$   satisfying
    \begin{itemize}
        \item $M^{(4)}(z)$ is continuous in $\mathbb{C}\setminus  \Sigma^{(4)} $. See Figure \ref{sign21}.
        \item $M^{(4)}(z)$ satisfies the  jump condition
        \begin{equation*}
            M^{(4)}_+(z)=M^{(4)}_-(z)V^{(4)}(z),
        \end{equation*}
        where
       \begin{align}\label{2v4}
       	V^{(4)}(z)=\begin{cases}
           \left( \begin{array}{cc}
    	1& -R(\xi_j) e^{-2it\theta(z)}\\0&1
    \end{array} \right), \quad z \in \Sigma_{j},\\
    \left(\begin{array}{cc}
    	1& 0 \\ \overline{R(\xi_j)} e^{2it\theta(z)}&1
    \end{array} \right), \quad z \in \overline{\Sigma}_{j},\\
     	\left(	\begin{array}{cc}
     	1& -R(\xi_j) e^{-2it\theta(z)}\\
     	0& 1
     \end{array}\right),\quad z\in L,\\
            \left(		\begin{array}{cc}
       	1& 0\\
       	\overline{R(\xi_j)} e^{2it\theta(z)} & 1
       \end{array}\right),\quad z\in \overline{L},\\
      T(z)^{-\sigma_3} V(z)T(z)^{\sigma_3},\quad z \in (\xi_1, \xi_2).
       	\end{cases}
       \end{align}

        \item $M^{(4)}(z)$ admits the asymptotic behavior $$M^{(4)}(z)=I+\mathcal{O}(z^{-1}),	\quad  z \to  \infty.$$

        \item For $z\in \mathbb{C}\setminus \Sigma^{(4)} $, $\bar{\partial}M^{(4)}(z)= M^{(4)}(z) \bar{\partial}R^{(3)}(z),$
        where
        \begin{equation}\label{2r3}
            \bar{\partial}R^{(3)}(z)= \begin{cases}
            		\left(\begin{array}{cc}
    	1&  \bar{\partial}R_{j}e^{-2it\theta(z)}\\0&1
    \end{array} \right), \quad z \in \Omega_{j},\quad  j = 0,1,2,\\
          		\left(\begin{array}{cc}
    	1&  0\\\bar{\partial} \overline{R}_{j}e^{2it\theta(z)}&1
    \end{array} \right), \quad z \in \overline{\Omega}_{j},\quad  j = 0,1,2,\\
    \left( \begin{array}{cc}
    	1& 0\\-\bar{\partial}R_{3}e^{2it\theta(z)}&1
    \end{array} \right), z \in \Omega_{3},\\
    \left(\begin{array}{cc}
    	1&  -\bar{\partial}\overline{R}_{3}e^{-2it\theta(z)}\\0&1
    \end{array} \right), \quad z \in \overline{\Omega}_{3}.
            	\end{cases}
           \end{equation}

\end{itemize}
    \end{prob2}

Then  we continue to decompose $M^{(4)}(z)$ into two parts:
\begin{align} \label{2trans5}
M^{(4)}(z) = \begin{cases}
M^{rhp}(z), \; \bar{\partial} R^{(3)}(z)=0,\\
M^{(5)}(z), \; \bar{\partial} R^{(3)} (z)\neq 0.
\end{cases}
\end{align}

\subsection{Asymptotic analysis on  a pure RH problem}
\hspace*{\parindent}

The pure RH problem is described as follows.

\begin{prob}\label{2mrhp}
    Find   $M^{rhp}(z)=M^{rhp}(z;x,t)$ which satisfies
	  \begin{itemize}
	  	\item   $M^{rhp}(z)$ is analytical in $\mathbb{C}\backslash \Sigma^{(4)}$.
	  	\item $M^{rhp}(z)$  satisfies the jump condition
\begin{equation*}
	  		M^{rhp}_+(z)=M^{rhp}_-(z)V^{(4)}(z),
	  	\end{equation*}
	  	where $V^{(4)}(z)$ is given by (\ref{2v4}).
	  	\item $M^{rhp}(z)$  has the same asymptotic behavior with  $M^{(4)}(z)$.
	
	  \end{itemize}
\end{prob}

\subsubsection{Local paramatrix near $z=1$}
\hspace*{\parindent}

In the region $\mathcal{P}_{>+1}(x,t)$, the phase points $\xi_1$ and $\xi_2$ merge to $1$ as $t \to +\infty$,
and we have for small $k t^{-1/3}$
\begin{align}
	t \theta(z)&= -t \left( \textcolor{red} {  (z-1)^3 }+\frac{1}{2}\sum_{n=4}^\infty (-1)^{n+1}(n-1)(z-1)^n  \right) \nonumber \\
	& +(x-2t) \left( \textcolor{red} { (z-1) }+ \frac{1}{2}\sum_{n=2}^\infty  (-1)^{n+1} (z-1)^n\right) \nonumber\\
	&:= \frac{4}{3} k^3 + s k + S(t;k ),\label{asymn2}
\end{align}
where
\begin{align}
&k  = -\tau^{\frac{1}{3}} (z-1),\ \ \
	s  = -\frac{8}{3}(\xi-1)\tau^{\frac{2}{3}}, \label{case2ks} \\
 & S(t;k )  = \frac{2}{3} \sum_{n=4}^\infty (n-1) \tau^{\frac{3-n}{3}} k^{n}-\frac{4}{3}(\xi-1)\sum_{n=2}^\infty \tau^{\frac{3-n}{3}} k^{n},\label{dssd}
\end{align}
with $ \tau = \frac{3}{4}t$.
The scaled phase points have the following properties.
\begin{proposition} \label{2k1k2}
	In the transition region $ \mathcal{P}_{> +1}(x,t)$,  two phase points $k_j= -\tau^{1/3} (  \xi_j- 1  ), \ j=1,2$
	always lie  in a  fixed interval
	\begin{align}
	k_j \in	[-(3/4)^{1/3} \sqrt{2C}, (3/4)^{1/3} \sqrt{2C} ].
	\end{align}
	
\end{proposition}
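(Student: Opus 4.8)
The plan is to follow the template of Proposition \ref{opow}, with $z=-1$ replaced by $z=1$ and $\eta_1$ replaced by $\eta_2$. The starting point is the defining relation $z^2+1=\eta_2 z$ for the stationary points in the region $\xi>1$; completing the square gives
$$(\xi_j-1)^2=(\eta_2-2)\xi_j,\qquad j=1,2.$$
I would first treat $\xi_1$. Since $\xi>1$ forces $\eta_2=\tfrac12(\xi+\sqrt{\xi^2+8})>2$ (see (\ref{23232})), the factor $\eta_2-2$ is strictly positive, and since $0<\xi_1<1$ by (\ref{xi1}) we get $(1-\xi_1)^2=(\eta_2-2)\xi_1<\eta_2-2$.

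Next I would establish the elementary inequality $\eta_2-2\le 2(\xi-1)$, which is equivalent to $\sqrt{\xi^2+8}\le 3\xi$. Because $\xi>1$ makes both sides positive, squaring is legitimate and reduces the claim to $\xi^2\ge 1$, which holds throughout the region. Combining this with the defining constraint $0<(\xi-1)t^{2/3}<C$ of $\mathcal{P}_{>+1}(x,t)$ yields
$$(1-\xi_1)^2<2(\xi-1)<2Ct^{-2/3}.$$
Recalling $\tau=\tfrac34 t$ and the scaling (\ref{case2ks}), this gives $k_1=-\tau^{1/3}(\xi_1-1)=\tau^{1/3}(1-\xi_1)$, so that $0<k_1<(3/4)^{1/3}\sqrt{2C}$, which places $k_1$ in the required interval.

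For $k_2$ I would invoke the reciprocal symmetry $\xi_1\xi_2=1$ of the stationary points, which gives $\xi_2-1=(1-\xi_1)/\xi_1$ and hence $k_2=-\tau^{1/3}(1-\xi_1)/\xi_1=-k_1/\xi_1<0$. The estimate of the previous step forces $\xi_1\to 1^-$ as $t\to+\infty$, indeed $1-\xi_1=O(t^{-1/3})$, so the factor $1/\xi_1$ remains bounded and tends to $1$. Consequently $|k_2|=k_1/\xi_1$ is controlled by $(3/4)^{1/3}\sqrt{2C}$ for $t$ large, giving $k_2>-(3/4)^{1/3}\sqrt{2C}$; thus both $k_1$ and $k_2$ lie in the fixed interval claimed.

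All computations are elementary, so there is no genuine obstacle; the one point demanding care is the bookkeeping of signs. Unlike Case I the scaling carries a minus sign, $k=-\tau^{1/3}(z-1)$ and $s=-\tfrac83(\xi-1)\tau^{2/3}$, so one must verify that $k_1>0$ while $k_2<0$ and, most delicately, that the symmetry factor $1/\xi_1$ does not destroy the uniform bound as $\xi_1\to 1$. The argument that $1-\xi_1=O(t^{-1/3})$ is exactly what guarantees $1/\xi_1=1+o(1)$ and thereby closes the estimate for $k_2$.
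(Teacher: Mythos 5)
Your proof is correct and takes essentially the same route the paper intends: the paper compresses its argument to ``similar to Proposition \ref{opow}'', and your steps are exactly the mirrored versions of that proof --- the identity $(\xi_j-1)^2=(\eta_2-2)\xi_j$, the bound $\eta_2-2\le 2(\xi-1)$ via $\sqrt{\xi^2+8}\le 3\xi$, the region constraint $0<(\xi-1)t^{2/3}<C$, and the symmetry $\xi_1\xi_2=1$ for $k_2$. If anything you are more careful than the paper on the second phase point, since you correctly write $\xi_2-1=(1-\xi_1)/\xi_1$ and explicitly justify that the factor $1/\xi_1=1+o(1)$ does not spoil the uniform bound for large $t$, a point the paper passes over.
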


 For a fixed constant $\varepsilon \leq \sqrt{2C}$,  define a  neighbourhood
$$\mathcal{U}_z(1)= \{z \in \mathbb{C}: |z-1|< \varepsilon \tau^{-1/3} \},$$
such that $\xi_1, \xi_2 \in \mathcal{U}_z(1)$.
The map $z \mapsto k$ maps $\mathcal{U}_{z}(1)$ onto the disk $\mathcal{U}_k(0) = \{k \in \mathbb{C}: |k|< \varepsilon \}$ in the $k$-plane, where $k_1, k_2 \in\mathcal{U}_k(0)$.

\begin{proposition}\label{order2}
 In the   region $ \mathcal{P}_{> + 1}(x,t)$,
  the series $S(t;k ) $ defined by (\ref{dssd})  converges uniformly   in  $ \mathcal{U}_z(1)$  and decay  concerning   $t$.
\end{proposition}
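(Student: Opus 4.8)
The plan is to follow the same strategy used in the proof of Proposition \ref{order1} for the region $\mathcal{P}_{<-1}(x,t)$, since the series (\ref{dssd}) has exactly the structure of (\ref{sers}) up to the sign of the $(\xi-1)$ coefficient. First I would split the series into two pieces according to its two summands, writing $S(t;k)=S_1+S_2$ with
\begin{align*}
S_1 = \frac{2}{3}\tau^{-1/3} k^4 \sum_{n=4}^\infty (n-1)\left(\tau^{-1/3}k\right)^{n-4}, \quad
S_2 = -\frac{4}{3}(\xi-1)\tau^{1/3} k^2 \sum_{n=2}^\infty \left(\tau^{-1/3}k\right)^{n-2}.
\end{align*}
Each piece is then estimated using that $|k|<\varepsilon$ on $\mathcal{U}_k(0)$ and that, for $t$ large, $\tau^{-1/3}\varepsilon<1$, so that both tails are dominated by a convergent geometric (respectively, derivative-of-geometric) series.

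For $S_1$ I would use $\lim_{n\to\infty}\frac{n-1}{n}=1$, so that $\sum_{n\ge4}(n-1)x^{n-4}$ converges for $|x|<1$; with $|k|<\varepsilon$ this gives
\begin{align*}
|S_1| \leq \frac{2}{3}\tau^{-1/3}\varepsilon^4 \sum_{n=4}^\infty (n-1)(\tau^{-1/3}\varepsilon)^{n-4} \leq c\,t^{-1/3},
\end{align*}
the prefactor $\tau^{-1/3}$ supplying the decay. For $S_2$ the key input is the defining bound $|(\xi-1)t^{2/3}|<C$ of the region $\mathcal{P}_{>+1}(x,t)$, which converts the apparently growing factor $(\xi-1)\tau^{1/3}$ into a decaying one: since $(\xi-1)\tau^{1/3}=(3/4)^{1/3}(\xi-1)t^{2/3}\,t^{-1/3}$, one has $|(\xi-1)\tau^{1/3}|\le (3/4)^{1/3}C\,t^{-1/3}$. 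Combining this with $|k|<\varepsilon$ and the convergent geometric tail yields
\begin{align*}
|S_2| \leq (3/4)^{1/3} C\,t^{-1/3}\varepsilon^2 \sum_{n=2}^\infty (\tau^{-1/3}\varepsilon)^{n-2} \leq c\,t^{-1/3}.
\end{align*}

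The only point requiring care is the uniformity: both geometric sums must be summed under a single threshold, so one fixes $t$ large enough that $\tau^{-1/3}\varepsilon<\tfrac12$, after which all the above bounds hold simultaneously and uniformly in $k\in\mathcal{U}_k(0)$, with $c$ depending only on $\varepsilon$ (equivalently on $C$). Since both $|S_1|$ and $|S_2|$ are $\mathcal{O}(t^{-1/3})$, we conclude $|S(t;k)|\le c\,t^{-1/3}$ uniformly on $\mathcal{U}_k(0)$, which is the assertion. The main obstacle is merely bookkeeping the exponents of $\tau$ correctly and confirming that the region's defining inequality supplies the right size for the $S_2$ prefactor; there is no genuine analytic difficulty beyond what already appears in Proposition \ref{order1}.
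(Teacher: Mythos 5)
Your proposal is correct and takes essentially the same route as the paper: the paper proves Proposition \ref{order2} simply by referring back to Proposition \ref{order1}, whose argument is exactly your decomposition $S=S_1+S_2$, the geometric-series tail bounds with $|k|<\varepsilon$, and the region constraint $|(\xi-1)t^{2/3}|<C$ converting the prefactor $(\xi-1)\tau^{1/3}$ into $\mathcal{O}(t^{-1/3})$. Your additional remark fixing $t$ large enough that $\tau^{-1/3}\varepsilon<\tfrac{1}{2}$ merely makes explicit the uniformity that the paper leaves implicit.
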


From Proposition \ref{order2}, the first two terms  in (\ref{asymn2}) are crucial  to
obtain the   Painlev\'e  asymptotics.  Additionally, we have the following proposition  in the $k$-plane.

\begin{proposition} \label{opow2}
In the transition region $ \mathcal{P}_{> +1}(x,t)$, when $t$ is large enough,
\begin{align}
  &{\rm Re}\left[i  \left(\frac{4}{3} k^3 + sk \right)  \right]  \geq \frac{8}{3}u^2v, \ \ for \ k \in \Omega'_{j}, \label{re211}\\
  &{\rm Re}\left[i  \left(\frac{4}{3} k^3 + sk \right)  \right]  \leq -\frac{8}{3}u^2v,\ \ for \ k \in \overline{\Omega}'_{j}, \label{re222}
\end{align}
where   $\Omega_{j}'$ is the scaled contour of $\Omega_{j}$ and  $k=k_j+u+iv$, $j=1,2$.
\end{proposition}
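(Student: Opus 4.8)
The plan is to follow the proof of Proposition \ref{opow1} essentially verbatim, transplanting it from a neighbourhood of $z=-1$ to a neighbourhood of $z=+1$. First I would observe that the two inequalities (\ref{re211}) and (\ref{re222}) are exchanged under the reflection $v\mapsto -v$ in the scaled variable, so it is enough to prove (\ref{re211}) and read off (\ref{re222}) by symmetry. Next, using the scaling $k=-\tau^{1/3}(z-1)$ from (\ref{case2ks}) together with the fact that $\tfrac{4}{3}k^{3}+sk$ coincides with the leading (red) terms $-t(z-1)^{3}+(x-2t)(z-1)$ in the expansion (\ref{asymn2}), I would rewrite (\ref{re211}) as the unscaled cubic estimate
\[
\re\left[2i\left(-t(z-1)^{3}+(x-2t)(z-1)\right)\right]\ \geq\ 4tu^{2}v,\qquad z=\xi_{j}+u+iv\in\Omega_{j},
\]
which is the exact counterpart of (\ref{re01}). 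As in the $z=-1$ case, opening the rays at angle $\phi<\pi/4$ forces $v<|u|$, and the signs of $u$ and of $\xi_{j}-1$ differ between $j=1$ and $j=2$ (now $0<\xi_{1}<1<\xi_{2}$).

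The algebraic engine is then identical to (\ref{re01})--(\ref{wow}). Expanding the cubic about $\xi_{j}$ splits the left-hand side into the desired leading term $4tu^{2}v$ plus a remainder $2tv\,f(\xi)$, where $f(\xi)$ is the $z=+1$ analogue of (\ref{f0}); it therefore remains to verify that $f(\xi)$ carries the favourable sign for large $t$. For this I would invoke the defining relations of the phase points for $\xi>1$, namely $\eta_{2}^{2}-\xi\eta_{2}-2=0$ coming from (\ref{2theta}) and $\xi_{j}^{2}+1=\eta_{2}\xi_{j}$, which yield the counterparts of (\ref{eta1}) and (\ref{por4}),
\[
\eta_{2}-2=\frac{(\xi-1)\eta_{2}}{\eta_{2}+1},\qquad (\xi_{j}-1)^{2}=(\eta_{2}-2)\xi_{j}.
\]
Substituting these into $f(\xi)$ and factoring out $(\xi_{j}-1)$ reduces it to $(\xi_{j}-1)g(\xi)$ with $g$ a manifestly signed expression, after which the inequalities $0<\xi_{1}<1<\xi_{2}$ and the limit $\xi\to1^{+}$ as $t\to+\infty$ settle the sign for $j=1$ and $j=2$ separately. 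Undoing the scaling then recovers (\ref{re211}).

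The main obstacle I anticipate is the sign bookkeeping rather than the algebra. In contrast to Proposition \ref{opow1}, the scaling $k=-\tau^{1/3}(z-1)$ carries a minus sign, so it reverses the orientation of the sectors and flips the direction of every inequality relative to the $z=-1$ analysis; moreover here $\xi-1>0$ and $\xi_{2}-1>0$, whereas near $z=-1$ one had $\xi+1<0$ and $\xi_{2}+1<0$. These are exactly the changes that turn the bounds into ``$\geq$'' rather than ``$\leq$'', and keeping the half-plane of each scaled sector $\Omega_{j}'$, $\overline{\Omega}_{j}'$ consistent with the signature of $\re(2i\theta)$ recorded in Proposition \ref{reprop2} is where care is required. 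Once the reduction to the signed quantity $f(\xi)=(\xi_{j}-1)g(\xi)$ is in place, the remaining computations are routine and structurally identical to those already carried out in Section \ref{sec3}.
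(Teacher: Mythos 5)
Your proposal matches the paper's own treatment: the paper proves Proposition \ref{opow2} simply by asserting that it is analogous to Proposition \ref{opow1}, and your transplantation of that argument --- with the mirror identities $\eta_2-2=\frac{(\xi-1)\eta_2}{\eta_2+1}$ and $(\xi_j-1)^2=(\eta_2-2)\xi_j$ replacing (\ref{eta1})--(\ref{por4}), the reduction to the unscaled cubic estimate, and the inequality reversals forced by $k=-\tau^{1/3}(z-1)$, $\xi-1>0$, and $0<\xi_1<1<\xi_2$ --- is exactly the intended proof. Your explicit sign bookkeeping, including the consistency check against the signature table in Proposition \ref{reprop2}, is correct and indeed more detailed than what the paper itself records.
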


\begin{proof}
The proofs for Proposition \ref{2k1k2}, \ref{order2}, and \ref{opow2} are similar to the proofs on the region $\mathcal{P}_{< -1}(x,t)$. See more details in Proposition \ref{opow},  \ref{order1}, and \ref{opow1}.
\end{proof}
As $t \to +\infty$, the jump matrix decays to the unit matrix exponentially fast and uniformly outside a small neighborhood of $z=1$.  In this case, we  construct the solution of $M^{rhp}(z)$ as follows:
\begin{align}\label{2trans6}
	M^{rhp}(z) = \begin{cases}
		E(z),\quad z \in \mathbb{C} \setminus \mathcal{U}_{z}(1),\\
		E(z) M^{loc}(z), \quad z \in \mathcal{U}_{z}(1).
	\end{cases}
\end{align}
 where $M^{loc}(z)$ satisfies the following RH problem.

  	\begin{prob}
  		Find  $ M^{loc}(z)=M^{loc}(z;x,t)$ which satisfies
  		\begin{itemize}
  			\item Analytical in $\mathcal{U}_{z}(1)\backslash  {\Sigma}^{loc}$, where $\Sigma^{loc}= \Sigma^{(4)} \cap \mathcal{U}_{z}(1)  $. See Figure \ref{sign21}.
  			\item Jump condition:\begin{equation*}
  				M^{loc}_+(z)=M^{loc}_-(z) {V}^{loc}(z), \ z\in  {\Sigma}^{loc},
  			\end{equation*}
  			where
  			\begin{align*}
  				V^{loc}(z)= \begin{cases}
  					\left( \begin{array}{cc}
  						1& -  R(\xi_j)  e^{-2it\theta(z)}\\
  					0    & 1
  					\end{array}\right),\quad z\in \Sigma_{j}, \ j=1,2,\\
  					\left(	\begin{array}{cc}
  						1&  0 \\
  							\overline{R(\xi_j)} e^{2it\theta(z)} & 1
  					\end{array}\right),\quad z \in \overline{\Sigma}_{j},\ j=1,2,\\
  T(z)^{-\sigma_3} V(z)T(z)^{\sigma_3},\quad z \in (\xi_1, \xi_2).
  				\end{cases}
  			\end{align*}

  			\item   $M^{loc}(z)(M^{\infty}( (z-1)\tau^{1/3}))^{-1} \to I, \ t\to +\infty$ uniformly for $z \in \partial \mathcal{U}_{z}(1).$
  		\end{itemize}
  	\end{prob}
  	
Since    $\xi_1, \xi_2 \to 1$     as $t\to +\infty$,
    it follows   from   (\ref{weie})  that    $ r(\xi_j)\to   -1$   as $t\to \infty$ for the generic case,
which further causes    singularity of the  $ \frac{  r(\xi_j) } {1-|r(\xi_j)|^2} $ as $\xi_j\to 1$.
However $R(\xi_j)$  has no singularity as $\xi_j\to 1$ since this  singularity can be
neutralized by the factor $T_-(\xi_j)^{-2} $.
 We  define a  cutoff function $\chi(z)\in C_0^\infty(\mathbb{R},[0,1])$
satisfying
\begin{equation}
\chi (z)=1, \ z\in  \mathbb{R}\cap \mathcal{U}_z(1),
\end{equation}
and write the function $R(z)$   as
\begin{equation}
 R(z)=(1-\chi (z)) \frac{\overline{r(z)} }{1-|r(z)|^2}  T ( z)^{-2}+\chi (z)  F(z)  G(z)^2,
\label{5oedk}
\end{equation}
where
$$F(z) := \frac{ \overline{S_{21} (z)} }{S_{11} (z)},  \ \ \  G(z):=\frac{ s_{11}(z) }{T_+(z)}. $$
  Using Lemmas  \ref{lemma3.1} and  \ref{lemma21},   each factor in the r.h.s. of (\ref{oedk}) is   analytical in $ \Omega_1\cup \Omega_2$,
 with well defined nonzero limit on $\partial \Omega_1\cup \partial\Omega_2$.

 \begin{lemma} \label{eowe}
  Let $q_0   \in \tanh (x)+ L^{1,2}(\mathbb{R}), q'_0 \in W^{1,\infty }(\mathbb{R})$, then   $F(z)\in H^1( \mathbb{R})$.
 \end{lemma}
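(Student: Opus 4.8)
The plan is to reduce the claim to the two statements $F\in L^2(\mathbb{R})$ and $F'\in L^2(\mathbb{R})$, and to organize the analysis around the factorization
$$F(z)=\frac{\overline{S_{21}(z)}}{S_{11}(z)}=\overline{r(z)}\,\omega(z),\qquad \omega(z):=\frac{\overline{S_{11}(z)}}{S_{11}(z)},$$
which is valid on $\mathbb{R}\setminus\{0\}$ because $S_{21}=s_{21}(1-z^{-2})$ and $S_{11}=s_{11}(1-z^{-2})$ with $1-z^{-2}$ real on the line, so the $(1-z^{-2})$ factors cancel. The two ingredients I would import are $r\in H^1(\mathbb{R})$ (Lemma \ref{lemm3.2}, available under the standing assumption $q_0\in\tanh x+H^{4,4}$) and the regularity of the scattering determinants, namely $S_{11},S_{21}\in W^{1,\infty}(\mathbb{R}\setminus(-\delta,\delta))$ for every $\delta>0$ (Lemma \ref{lemma3.1}); note that the hypotheses $q_0\in\tanh x+L^{1,2}$, $q_0'\in W^{1,\infty}$ are exactly those under which Lemma \ref{lemma3.1} applies.

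First I would establish that $S_{11}$ is bounded away from zero on $\{|z|\ge\delta\}$, so that $1/S_{11}$, and hence $\omega$, inherits the $W^{1,\infty}_{loc}$ regularity of $S_{11}$. Since $|s_{11}(z)|\ge 1$ for $z\in\mathbb{R}\setminus\{0,\pm1\}$ by \eqref{scatter2}, the only candidate zeros of $S_{11}=s_{11}(1-z^{-2})$ on the line are $z=\pm1$; but there the simple pole of $s_{11}$ cancels the simple zero of $1-z^{-2}$, leaving $S_{11}(\pm1)=s_\pm\neq0$ in the generic case, while $S_{11}(z)\to1$ as $z\to\infty$. This pole--zero cancellation is precisely the mechanism that renders $F$ regular at $z=\pm1$, in contrast to $r/(1-|r|^2)$, whose singularity there must instead be neutralized by the factor $G^2$ in \eqref{5oedk}. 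Consequently $|S_{11}|\ge c(\delta)>0$ on $\{|z|\ge\delta\}$, the function $\omega$ is unimodular and lies in $W^{1,\infty}_{loc}(\mathbb{R}\setminus\{0\})$, and $F=\overline r\,\omega$ extends continuously across $\pm1$.

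For the $L^2$ bounds I would argue as follows. Since $|\omega|\equiv1$ one has $|F|=|r|$, which combined with the decay $|s_{21}|=\mathcal{O}(|z|^{-2})$ at infinity and $|s_{21}|=\mathcal{O}(|z|^2)$ near $0$ (Lemma \ref{lemm3.2}) gives $F\in L^\infty\cap L^2(\mathbb{R})$. For the derivative I would split $F'=\overline{r}'\,\omega+\overline r\,\omega'$: the first term is in $L^2$ because $r\in H^1$ and $\omega$ is bounded; for the second term, $\overline r\in L^\infty$ and $\omega'$ is bounded on compact subsets of $\mathbb{R}\setminus\{0\}$, while at infinity the trace formula \eqref{trance} yields $s_{11}'=\mathcal{O}(z^{-2})$, hence $\omega'=\mathcal{O}(z^{-2})\in L^2$ near $z=\infty$. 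The neighbourhood of $z=0$ --- which falls outside the reach of Lemma \ref{lemma3.1} --- I would treat by hand using the $z\to0$ asymptotics of $m^\pm$ in Lemma \ref{lemma3.1}, which give $F=\mathcal{O}(z^2)$ and $F'=\mathcal{O}(z)$ there, so the contribution near $0$ is $W^{1,\infty}$ on a compact set and therefore harmless.

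The step I expect to be the main obstacle is the $L^2$-integrability of $F'$ at infinity. Lemma \ref{lemma3.1} only delivers $L^\infty$ (Lipschitz) control of $S_{11},S_{21}$ and their derivatives, which is insufficient for square-integrability of the tail; the decay must be recovered by combining the $r\in H^1$ information for the $\overline{r}'\,\omega$ piece with the explicit decay $s_{11}'=\mathcal{O}(z^{-2})$ extracted from the trace formula for the $\overline r\,\omega'$ piece. A secondary, more technical point is the careful verification of the pole--zero cancellation and the resulting uniform lower bound on $|S_{11}|$ near $z=\pm1$, together with the \emph{ad hoc} treatment of $z=0$; both are routine once the decomposition $F=\overline r\,\omega$ is in place, and they parallel the corresponding estimates of Cuccagna and Jenkins \cite{CJ}.
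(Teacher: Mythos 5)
Your proposal shares its skeleton with the paper's proof: both split $\mathbb{R}$ into the set $I_\delta$ away from $\{0,\pm1\}$, where the Lipschitz regularity of the determinants $S_{11},S_{21}$ from Lemma \ref{lemma3.1} applies; both handle $z=\pm 1$ by the same pole--zero cancellation mechanism ($S_{11}(\pm1)=s_\pm\neq 0$ in the generic case --- the paper implements this via the representation (\ref{fre}), whose non-vanishing denominator is exactly your lower bound on $|S_{11}|$); and both treat $z=0$ through the involution $z\mapsto z^{-1}$. Where you genuinely diverge is the tail at infinity: the paper takes from \cite{CJ} the pointwise bounds $|\partial_z^j F|\le c\langle z\rangle^{-1}$, $j=0,1$, on $I_\delta$ (its (\ref{5p91})) and concludes $H^1(I_\delta)$ directly, with no factorization and no appeal to $r\in H^1$ --- the opening phrase ``in a similar way to the proof of $r\in H^1$'' signals a re-run of that argument on $F$, not a citation of it. Your factorization $F=\overline{r}\,\omega$ with $\omega=\overline{S_{11}}/S_{11}$ unimodular is correct (the $(1-z^{-2})$ cancellation on $\mathbb{R}$ is fine) and buys a genuinely cleaner $L^2$ bound, $|F|=|r|$, plus the reduction $F'=\overline{r}'\omega+\overline{r}\,\omega'$.

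Three caveats. First, the claimed pointwise decay $s_{11}'=\mathcal{O}(z^{-2})$ ``from the trace formula'' is an overclaim: the Blaschke factors do give $\mathcal{O}(z^{-2})$, but differentiating the Cauchy integral in (\ref{trance}) on the real line produces, after integration by parts, a Hilbert transform of $\nu'$, whose pointwise decay does not follow from $r\in H^1$. You correctly flagged this as the main obstacle; it is repairable because you only need $\overline{r}\,\omega'\in L^2$ near infinity --- split the integral, bound the contribution of $\zeta$ near $\pm1$ (where $\nu$ is only $L^1_{loc}$ in the generic case, since $1-|r|^2\to 0$ there) by $\mathcal{O}(z^{-2})$ using the smooth kernel, and use $L^2$-boundedness of the Hilbert transform on the rest, where $|\nu'|\lesssim |r'|\in L^2$. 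Second, importing $r\in H^1$ from Lemma \ref{lemm3.2} (proved under $q_0\in\tanh x+H^{2,2}$) proves the lemma only under hypotheses stronger than the stated $q_0\in\tanh x+L^{1,2}$, $q_0'\in W^{1,\infty}$; this is harmless where the paper applies the lemma ($H^{4,4}$ data), but it weakens the statement, whereas the paper's direct determinant estimates avoid the issue. Third, the leading-order $z\to 0$ asymptotics of $m^\pm$ in Lemma \ref{lemma3.1} carry no derivative information, so $F'=\mathcal{O}(z)$ near $0$ does not follow from them alone; what is actually needed is the paper's device of transferring derivative bounds at infinity through $F(z^{-1})=\overline{F(z)}$. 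Finally, note that both your lower bound $|S_{11}|\ge c(\delta)$ and the paper's (\ref{fre}) presuppose $s_\pm\neq 0$; the non-generic case $s_\pm=0$ (where $|r(\pm1)|<1$ and $|s_{11}|\ge 1$ makes the ratio harmless) would deserve a separate sentence in either write-up.
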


 \begin{proof}
In a similar to the proof  of  $r(z)\in H^1(\mathbb{R})$ in  Lemma \ref{lemm3.2},
fix a small $\delta>0$, such that   $|z\pm 1|<\delta$ and $ |z|<\delta$ have empty intersection.
 By using Lemma \ref{lemma3.1}, we have
  \begin{align}
& F(z)\in W^{1,\infty}(I_\delta)\cap H^1(I_\delta), \label{5p90}\\
&|\partial_z^j|\leq c \langle z\rangle^{-1}, \ \ j=0, 1. \label{5p91}
 \end{align}
where
$$ I_\delta=\mathbb{R}  \setminus ((-\delta, \delta)\cup (1-\delta,1+\delta)\cup(-1-\delta, -1+\delta)). $$

Next to show
  \begin{align}
& F(z)\in  H^1( \mathbb{R }\setminus I_\delta). \label{5p92}
 \end{align}
For $|z-1|<1$,   we write $ f(z)$ as
 \begin{align}
 F(z) =\frac{ -\overline{s}_+  +\int_1^z \partial_y \overline{S_{21}(y)}dy}{ s_+ +\int_1^z \partial_z S_{11}(y )dy}, \label{fre}
\end{align}
by which,  it follows  that $F'(z)$ is defined and bounded around $z=1$.
The same discussion holds at $z=-1$.   $F'(z)$ is defined and bounded around  $z=0$  by using  symmetry $F(z^{-1})=\bar F(z)$
and (\ref{5p91}).

Finally,  from (\ref{5p90}) and (\ref{5p92}),  we conclude that $F(z)\in H^1(\mathbb{R})$.

 \end{proof}

 Then   we show  the following proposition.
\begin{proposition}   Let $q_0   \in \tanh (x)+ L^{1,2}(\mathbb{R}), q'_0 \in W^{1,\infty }(\mathbb{R})$.
  Then for $k \in \left(k_2,  k_1\right)$,
\begin{align}
  \Big| R(z) e^{ 2it\theta \left(z\right)}- R(1) e^{8ik^3/3+2isk }  \Big|   \lesssim t^{-1/6}. \label{5e1}
\end{align}
And for $k \in  {\Sigma}_{j}'$, $j=1,2$,
\begin{align}
  \Big| R(\xi_j) e^{ 2it\theta \left(z\right)}-  R(1) e^{8ik^3/3+2isk }  \Big|   \lesssim t^{-1/6}, \label{eee2}
\end{align}
where ${\Sigma}_{j}'$ is the scaled contour of ${\Sigma}_{j}$.
\end{proposition}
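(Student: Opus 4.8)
The plan is to follow the template of Proposition~\ref{ppe}, replacing the coalescence point $z=-1$ by $z=1$, the expansion (\ref{asymn1}) by (\ref{asymn2}), and the reference constant $R(-1)$ by $R(1)$. Using (\ref{asymn2}) in the form $2t\theta(z)=\frac{8}{3}k^3+2sk+2S(t;k)$, I would first factor out the model phase,
\[
R(z)\,e^{2it\theta(z)}-R(1)\,e^{8ik^3/3+2isk}
= e^{i(\frac{8}{3}k^3+2sk)}\Big(R(z)\,e^{2iS(t;k)}-R(1)\Big),
\]
so that the whole estimate reduces to controlling the size of the prefactor $e^{i(\frac{8}{3}k^3+2sk)}$ together with the bracketed difference $R(z)e^{2iS(t;k)}-R(1)$.

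On the real segment $k\in(k_2,k_1)$, equivalently $z\in(\xi_1,\xi_2)\subset\mathbb{R}$, both $\theta(z)$ and $S(t;k)$ are real, so the prefactor and $e^{2iS(t;k)}$ have modulus one and the triangle inequality gives
\[
\Big| R(z)\,e^{2it\theta(z)}-R(1)\,e^{8ik^3/3+2isk}\Big|
\le \big|R(z)-R(1)\big| + \big|R(1)\big|\,\big|e^{2iS(t;k)}-1\big|.
\]
For the first term I would use the fundamental theorem of calculus and H\"older's inequality, $|R(z)-R(1)|\le \|R'\|_{L^2}\,|z-1|^{1/2}$; since $|z-1|=|k|\tau^{-1/3}$ with $|k|<\sqrt{2C}$ on $\mathcal{U}_k(0)$, this is $\lesssim t^{-1/6}$. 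For the second term I would use $|e^{2iS(t;k)}-1|\le e^{2|S(t;k)|}-1\lesssim t^{-1/3}$, invoking the uniform decay of $S(t;k)$ from Proposition~\ref{order2}. Adding the two estimates yields (\ref{5e1}).

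On the scaled rays $k\in\Sigma_j'$ the same factorization applies, now with the constant extension value $R(\xi_j)$ in place of $R(z)$. Here $k$ ranges over the bounded disk $\mathcal{U}_k(0)$, so $\frac{8}{3}k^3+2sk$ is bounded and, by Proposition~\ref{opow2}, the prefactor $|e^{i(\frac{8}{3}k^3+2sk)}|$ stays bounded; likewise $e^{2iS(t;k)}$ is bounded by Proposition~\ref{order2}. The difference then satisfies the bound $|R(\xi_j)-R(1)|+|R(1)|\,|e^{2iS(t;k)}-1|$, and by Proposition~\ref{2k1k2} we have $|\xi_j-1|=|k_j|\tau^{-1/3}\lesssim t^{-1/3}$, so $|R(\xi_j)-R(1)|\le\|R'\|_{L^2}\,|\xi_j-1|^{1/2}\lesssim t^{-1/6}$, which gives (\ref{eee2}); the conjugate rays $\overline{\Sigma}_j'$ are handled identically.

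The main obstacle---and the only genuine difference from Proposition~\ref{ppe}---is that in the region $\mathcal{P}_{>+1}(x,t)$ the bare coefficient $\frac{\overline{r}}{1-|r|^2}$ is singular at $z=1$, where $r(1)=-1$, so the H\"older step requires $R$ to be an $H^1$ function in a \emph{full} neighborhood of $z=1$. This is exactly supplied by the decomposition (\ref{5oedk}), in which the singular factor is regularized by $T(z)^{-2}$, together with Lemma~\ref{eowe} guaranteeing $F\in H^1(\mathbb{R})$; consequently $R=FG^2$ is $H^1$ near $z=1$, $\|R'\|_{L^2}$ is finite, and the estimates above are legitimate. Thus the crux is to invoke the regularity already established in Lemma~\ref{eowe} rather than to reprove it.
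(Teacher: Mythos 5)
Your proposal is correct and follows the paper's proof in all essentials: the same factorization $2t\theta(z)=\tfrac{8}{3}k^3+2sk+2S(t;k)$ pulling out the unimodular (on $\mathbb{R}$) model phase, the same triangle inequality splitting into a difference term and an $|e^{2iS(t;k)}-1|\lesssim t^{-1/3}$ term via Proposition \ref{order2}, and on the rays the same boundedness of the prefactor on the compact disk $\mathcal{U}_k(0)$ (cf.\ Proposition \ref{opow2}). The one point where you deviate is the difference estimate itself: the paper never asserts $R\in H^1$ near $z=1$; it instead uses the product structure from (\ref{5oedk}), splitting $R(z)-R(1)=G(z)^2\left(F(z)-F(1)\right)+F(1)\left(G(z)+G(1)\right)\left(G(z)-G(1)\right)$, bounding the first term by $\|F\|_{H^1}|z-1|^{1/2}\lesssim t^{-1/6}$ via Lemma \ref{eowe}, and proving the Lipschitz bound $|G(z)-G(1)|\le c|z-1|$ directly from the explicit representation of $G=s_{11}/T_+$ coming from the trace formula (\ref{trance}) and (\ref{resed}), whose Blaschke factors and Cauchy integral over $(-\infty,0)$ are analytic near $z=1$. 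Your shortcut $|R(z)-R(1)|\le\|R'\|_{L^2}|z-1|^{1/2}$ with $R=FG^2$ silently uses that $G'$ is locally bounded (or at least locally $L^2$) near $z=1$ — Lemma \ref{eowe} alone gives only $F\in H^1$ — so you should add the one-line observation that $G$ extends analytically, hence Lipschitz, in a neighborhood of $z=1$; this is precisely what the paper's separate estimate on $G$ supplies. With that remark added, the two routes are equivalent: the paper's splitting makes the regularity of each factor explicit, while yours packages it into a single Sobolev bound, at the cost of one extra regularity claim. (Incidentally, your $e^{2iS(t;k)}$ carries the correct factor of $2$ demanded by the expansion (\ref{asymn2}); the paper's $e^{iS(t;k)}$ is a harmless slip.)
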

\begin{proof}
For $k \in \left(k_2, k_1\right)$,  then  $z\in (\xi_2, \xi_1)$ and $z$ is real,
$$\left|  e^{2i t   \theta (z) } \right|=1, \ \ \left| e^{  i(\frac{8}{3} k^3+ 2s k)  } \right|=1.$$
  so  we have
\begin{align}
&\left| R (z)    e^{2i t   \theta  (z ) } - R(1)e^{  i(\frac{8}{3} k^3+ 2s k)  }    \right| \leq |G(z)|^2  | F (z) - F( 1 )  | \nonumber\\
&+ |F(1)| |G(z)+G(1)| |G(z)-G(1)|
+ \left|  R ( 1 )   \right|       \left|   e^{iS(t;k)}-   1  \right|. \label{5poe1}
 \end{align}

By Lemma  \ref{eowe} and H\"older inequality,
  \begin{align}
& \left| F (z) - F ( 1)  \right|=     \left|\int_{-1}^{z} F'(s) ds
    \right| \leq \|F \|_{H^1 } |z-1|^{1/2} \leq c t^{-1/6}. \label{5poe2}
 \end{align}

From  (\ref{trance}) and (\ref{resed}), we have
\begin{align}
G(z)=	  \prod_{j=1}^{N} \frac{z-z_j}{z-\bar{z}_j} \exp\left(-i \int_{-\infty}^0 \frac{\nu(\zeta)}{\zeta-z} \, \mathrm{d}\zeta\right)
\exp \left(  - i \int_{0}^{\infty}   \frac{\nu(\zeta) }{2\zeta}  \, \mathrm{d}\zeta \right),\nonumber
\end{align}
with which,
\begin{align}
&G(z) -G(1) = e^{   -i \int_{-\infty}^0 \frac{\nu(\zeta)}{\zeta-z} \, \mathrm{d}\zeta}
e^{  - i \int_{0}^{\infty}   \frac{\nu(\zeta) }{2\zeta}  \, \mathrm{d}\zeta }
\left(   \prod_{j=1}^{N} \frac{z-z_j}{z-\bar{z}_j} -\prod_{j=1}^{N} \frac{1-z_j}{1-\bar{z}_j}\right) \nonumber\\
& + \prod_{j=1}^{N} \frac{1-z_j}{1-\bar{z}_j}  e^{   - i \int_{0}^{\infty}   \frac{\nu(\zeta) }{2\zeta}  \, \mathrm{d}\zeta }  \left(  e^{   -i \int_{-\infty}^0 \frac{\nu(\zeta)}{\zeta-z} \, \mathrm{d}\zeta} - e^{   -i \int_{-\infty}^0 \frac{\nu(\zeta)}{\zeta-1} \, \mathrm{d}\zeta}    \right). \nonumber
\end{align}
In the r.h.s. all factor before two brackets have the absolute value $\leq 1$ for $z\in \mathbb{R}$,
while the  terms in two  brackets  are controlled by $|z-1|$, therefore,
\begin{align}
&| G(z) -G(1) | \leq c |z-1|. \label{addw}
\end{align}

 Substituting (\ref{5poe2}) and (\ref{5poe3}) into (\ref{5poe1}) gives
  \begin{align}
& \left| R (z)    e^{2i t   \theta  (z ) } - R (1)   e^{  i(\frac{8}{3} k^3+ 2s k)  }    \right|   \leq c t^{-1/6}. \label{5poe3}
 \end{align}

For $k \in  {\Sigma}_{1}$,  denote $k = k_1 +u+iv$. By (\ref{re211}), $ \left| e^{i \left( \frac{8}{3} k^3 + 2 s k \right)}\right|$ is bounded.
Similarly to the case on the real axis, we can obtain the estimate   (\ref{eee2}). The estimate   on the other jump contours can be given in the same way.
\end{proof}

Therefore, as $t \to +\infty$,  the solution of the RH problem  $ {M}^{loc}(z)$
can be approximated by the solution of a  limit model
\begin{align}
M^{loc}(z) = M^{\infty}( k) + \mathcal{O} \left(t^{-1/6} \right),
\end{align}
where $M^{\infty}(k)$ is given by (\ref{eegs6})  with the  argument
\begin{align}
& \varphi_0  =\arg \frac{\overline{S_{21} (1)}}{S_{11}(1)}+2\sum_{j=0}^{N-1} \arg (1-z_j)
-  \int_{0}^{\infty} \frac{ \nu(\zeta)}{ 2\zeta } \mathrm{d}\zeta -  \int_{-\infty}^0  \frac{ \nu(\zeta)}{  \zeta-1 } \mathrm{d}\zeta, \label{trans857}
\end{align}
 where the functions  $S_{21}(z)$ and $S_{11}(z)$ are defined by  (\ref{oeeedk2}),  and $\nu(z)$ is given by (\ref{nu}).

\subsubsection{Small norm RH problem}
\hspace*{\parindent}

We  consider the error function  $E(z)$ defined by  (\ref{2trans6}).
 Denote $\Sigma^E = \partial \mathcal{U}_{z}(1) \cup \left(\Sigma^{(4)} \backslash \mathcal{U}_{z}(1)\right).$

\begin{prob}\label{2iew}
  Find   $E(z)$ with the properties as follows
          \begin{itemize}
          	\item  $E(z)$ is analytical in $\mathbb{C}\backslash  \Sigma^{E}$.
          	\item $E(z)$ satisfies the jump condition
          	\begin{align*}
          		E_{+}(z)=E_-(z)V^{E}(z), \quad z\in \Sigma^{E},
          	\end{align*}
          	where the jump matrix is given by
   \begin{align*}
                V^{E}(z)= \begin{cases}
                   V^{(4)}(z),\quad z \in \Sigma^{(4)} / \mathcal{U}_{z}(1),\\
                  M^{loc}(z) , \quad z \in \partial\mathcal{U}_{z}(1).
                \end{cases}
            \end{align*}

          	\item As $z\to \infty$, $E(z)=I+\mathcal{O}(z^{-1}).$	

          \end{itemize}
\end{prob}

 Similarly to the analysis in the previous sector,  we obtain
             \begin{equation*}
 	|V^{E}(z)-I|=\begin{cases}
 		\mathcal{O}(e^{-c t}),\quad z\in \Sigma^{E} / \mathcal{U}_{z}(1), \\
 		\mathcal{O}(t^{-1/3}), \quad z\in \partial \mathcal{U}_{z}(1),
 	\end{cases}
 \end{equation*}
 and  the RH problem \ref{2iew} exists an unique solution
         \begin{equation}\label{2trans7}
            E(z)=I +\frac{1}{2\pi i} \int_{\Sigma^{E}} \frac{\mu_E(\zeta) \left( V^{E}(\zeta) -I\right)}{\zeta-z}\, \mathrm{d}\zeta,
         \end{equation}
        where $\mu_E \in L^2\left( \Sigma^{E}\right)$   satisfies $\left(I-C_{w^{E}}\right)\mu_E=I$.
As $z \to \infty$,
        \begin{equation*}
            E(z)=I +\frac{E_1}{z} +\mathcal{O}\left(z^{-1}\right),
        \end{equation*}
      where $ E_1=-\frac{1}{2\pi i} \int_{\Sigma^{E}} \mu_E(\zeta)\left( V^{E}(\zeta)-I \right)\, \mathrm{d}\zeta.$
       In particular, $E_1$ and $E(0)$ have the following estimations.
        \begin{proposition}
            \begin{align}
          &  E_1=   \tau^{-1/3} M_1^\infty(s)+    \mathcal{O}(t^{-2/3}),\\[4pt]
       & E(0) = I+\tau^{-1/3} M_1^\infty(s) + \mathcal{O}(t^{-2/3}).\label{2e0}
 \end{align}
        \end{proposition}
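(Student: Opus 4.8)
The plan is to prove this proposition by repeating verbatim the argument given for Proposition~\ref{error1} in the previous sector, the only structural change being that the scaling relation (\ref{case2ks}) carries the opposite sign to (\ref{scaled}); this sign is precisely what flips the leading coefficient from the $-\tau^{-1/3}M_1^{\infty}(s)$ of Case~I to the form asserted here. The starting point is the integral representation (\ref{2trans7}) for the solution of RH problem~\ref{2iew}, together with the small-norm estimates already established there, namely $\|\mu_E-I\|_{L^2(\Sigma^E)}=\mathcal{O}(t^{-1/3})$ and the bound $|V^E-I|=\mathcal{O}(e^{-ct})$ on $\Sigma^{(4)}\setminus\mathcal{U}_z(1)$ while $V^E=M^{loc}$ on $\partial\mathcal{U}_z(1)$. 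These legitimize the expansion $E(z)=I+E_1 z^{-1}+\cdots$ with $E_1=-\tfrac{1}{2\pi i}\int_{\Sigma^E}\mu_E(\zeta)\bigl(V^E(\zeta)-I\bigr)\,\mathrm{d}\zeta$.

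First I would split the contour as $\Sigma^E=\partial\mathcal{U}_z(1)\cup\bigl(\Sigma^{(4)}\setminus\mathcal{U}_z(1)\bigr)$ and discard the inessential parts. On $\Sigma^{(4)}\setminus\mathcal{U}_z(1)$ the integrand is exponentially small by Proposition~\ref{reprop2}, and the piece carrying $\mu_E-I$ is a product of two factors each of size $\mathcal{O}(t^{-1/3})$, hence $\mathcal{O}(t^{-2/3})$. This reduces the two quantities to single contour integrals over the circle,
\[
E_1=-\frac{1}{2\pi i}\oint_{\partial\mathcal{U}_z(1)}\bigl(V^E(\zeta)-I\bigr)\,\mathrm{d}\zeta+\mathcal{O}(t^{-2/3}),
\]
and likewise $E(0)=I+\tfrac{1}{2\pi i}\oint_{\partial\mathcal{U}_z(1)}\zeta^{-1}\bigl(V^E(\zeta)-I\bigr)\,\mathrm{d}\zeta+\mathcal{O}(t^{-2/3})$. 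On $\partial\mathcal{U}_z(1)$ one has $V^E=M^{loc}$, and the matching estimate $M^{loc}(z)=M^{\infty}(k)+\mathcal{O}(t^{-1/6})$ with $k=-\tau^{1/3}(z-1)$, combined with the large-$k$ expansion $M^{\infty}(k)=I+k^{-1}M_1^{\infty}(s)+\mathcal{O}(k^{-2})$, gives $V^E(\zeta)-I=-\tau^{-1/3}(\zeta-1)^{-1}M_1^{\infty}(s)+\cdots$ near the circle. Since $\mathcal{U}_z(1)$ is a small disk centred at $\zeta=1$ (radius $\varepsilon\tau^{-1/3}$) containing no other pole of the integrands, evaluating both contour integrals by the residue theorem at $\zeta=1$ produces the stated expansions (\ref{2e0}).

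The computation is routine once these ingredients are in place; the step I expect to require the most care is the bookkeeping of signs and orders rather than any conceptual difficulty. Specifically, one must track the minus sign in the scaling $k=-\tau^{1/3}(z-1)$ through the residue evaluation, since this is exactly what distinguishes the coefficient here from its Case~I analogue, and one must confirm that the $\mathcal{O}(t^{-1/6})$ matching error, after being integrated over the shrinking circle $\partial\mathcal{U}_z(1)$ and combined with the higher-order tail of the $M^{\infty}$ expansion, is genuinely absorbed into the claimed remainder $\mathcal{O}(t^{-2/3})$ and does not degrade it. This order-tracking, rather than the residue identity itself, is the one point of the argument that has to be carried out with attention.
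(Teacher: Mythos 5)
Your route is the same as the paper's: the paper gives no separate proof of this proposition, deferring to the Case~I computation (Proposition~\ref{error1}), and your plan reproduces that computation with the sign-flipped scaling (\ref{case2ks}) --- splitting $\Sigma^E$, discarding the exponentially small contours and the $\mathcal{O}(t^{-2/3})$ cross term carrying $\mu_E-I$, and reducing both quantities to a residue at $\zeta=1$. That skeleton is correct and is exactly what the paper intends.

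However, the step you explicitly defer --- the sign bookkeeping --- is where the argument as proposed fails, and it fails in a way that cannot be repaired for both displayed formulas simultaneously. On the shrinking circle $\partial\mathcal{U}_z(1)$ one has $\zeta^{-1}=1+\mathcal{O}(\tau^{-1/3})$, so the representation (\ref{2trans7}) forces
\[
E(0)-I=\frac{1}{2\pi i}\oint_{\partial\mathcal{U}_z(1)}\frac{V^E(\zeta)-I}{\zeta}\,\mathrm{d}\zeta+\mathcal{O}(t^{-2/3})=-E_1+\mathcal{O}(t^{-2/3}),
\]
\emph{independently of the orientation of the circle}: in this sector the leading coefficients of $E_1$ and $E(0)-I$ must be opposite. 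Concretely, with $k=-\tau^{1/3}(\zeta-1)$ one has $V^E-I=-\tau^{-1/3}(\zeta-1)^{-1}M_1^\infty(s)+\cdots$, and the counterclockwise orientation inherited from Section~\ref{sec3} yields $E_1=+\tau^{-1/3}M_1^\infty(s)+\mathcal{O}(t^{-2/3})$ together with $E(0)=I-\tau^{-1/3}M_1^\infty(s)+\mathcal{O}(t^{-2/3})$; the second line of (\ref{2e0}) cannot also come out with a plus sign. The mechanism is invisible if one copies Case~I verbatim: there the disk is centered at $\zeta=-1$, where $\zeta^{-1}\approx-1$, which is why (\ref{e001}) and (\ref{e00}) legitimately share the coefficient $-\tau^{-1/3}M_1^\infty(s)$. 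Moving the evaluation point to $\zeta=+1$ flips the $E(0)$ residue a second time, while the scaling sign in (\ref{case2ks}) flips both integrals once --- so your guiding intuition that the minus in the scaling "is precisely what flips the leading coefficient" is true for $E_1$ but not for $E(0)$. By asserting that the residue theorem "produces the stated expansions," your proposal papers over exactly this discrepancy; a careful execution of your own plan would instead derive $E(0)-I=-E_1$ and thereby expose a sign error in one of the two printed formulas rather than prove them both.
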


\subsection{Asymptotic analysis on  a pure $\bar{\partial}$-problem}
\hspace*{\parindent}

In the subsection, we consider the pure $\bar{\partial}$-problem. By (\ref{2trans5}), we have
            \begin{equation}\label{2m3m2m2rhp}
                M^{(5)}(z)=M^{(4)}(z)\left(M^{rhp}(z)\right)^{-1},
            \end{equation}
            which  satisfies the following $\bar{\partial}$-problem.

\begin{prob3}\label{2m5}
 Find a matrix-valued function  $M^{(5)}(z)$ which satisfies
            \begin{itemize}
                \item  $M^{(5)}(z)$ is continuous and has sectionally continuous first partial derivatives in $\mathbb{C}  \backslash \left(\mathbb{R} \cup \Sigma^{(4)}\right)$.
                \item As $z\to  \infty$, $M^{(5)}(z)=I+\mathcal{O}(z^{-1})$.

                \item For $z\in \mathbb{C}$,  we have
                \begin{equation*}
                    \bar{\partial} M^{(5)}(z) = M^{(5)}(z) W^{(5)}(z),
                \end{equation*}
                where $W^{(5)}(z):=M^{rhp}(z)  \bar{\partial} R^{(3)}(z) \left(M^{rhp}(z)\right)^{-1}$ with $\bar{\partial} R^{(3)}(z)$ in (\ref{2r3})  and $M^{rhp}(z)$ in (\ref{2trans6}) .
            \end{itemize}

\end{prob3}

We can prove that the solution of the $\bar{\partial}$-RH problem \ref{2m5} exists and that the asymptotic unfolding at infinity satisfies
     \begin{equation}\label{2m5infty}
            M^{(5)}(z)=I + \frac{M^{(5)}_1(x,t) }{z} +\mathcal{O}(z^{-2}), \quad z\to \infty,
        \end{equation}
where
$$M^{(5)}_1(x,t)=\frac{1}{\pi} \iint_{\mathbb{C}} M^{(5)}(\zeta)W^{(5)}(\zeta)\, \mathrm{d}\zeta\wedge \mathrm{d} \bar \zeta.$$
with
\begin{equation}\label{2m51infty}
	|M^{(5)}_1(x,t)| \lesssim t^{-1/2}, \ \ |M^{(5)}(0)-I| \lesssim t^{-1/2}.
\end{equation}

\begin{proposition}
  $M^{(3)}(0)$ admits the estimate
	\begin{align}\label{2m30}
		M^{(3)}(0)
		=E(0) + \mathcal{O}(t^{-1/2}),
	\end{align}
	where $E(0)$ is given by (\ref{2e0}).
\end{proposition}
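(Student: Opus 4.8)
The plan is to follow the strategy that established the companion estimate (\ref{m30}) in the region $\mathcal{P}_{-1}(x,t)$, reading off the corresponding objects of Section \ref{sec4}. The starting point is to unwind the chain of transformations: equation (\ref{2trans4}) gives $M^{(4)}(z) = M^{(3)}(z) R^{(3)}(z)$, while the decomposition (\ref{2trans5}) (equivalently (\ref{2m3m2m2rhp})) gives $M^{(4)}(z) = M^{(5)}(z) M^{rhp}(z)$. Combining the two yields the identity $M^{(3)}(z) = M^{(5)}(z) M^{rhp}(z) \big(R^{(3)}(z)\big)^{-1}$, which I will exploit at a point where $R^{(3)}$ degenerates to the identity.

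First I would localize to the region where $R^{(3)}(z) = I$, i.e. outside the opened sectors $\Omega_j, \overline{\Omega}_j$ in (\ref{2R3}); there the identity collapses to $M^{(3)}(z) = M^{(5)}(z) M^{rhp}(z)$. Since the disk $\mathcal{U}_z(1)$ is centered at $z = 1$ with radius $\varepsilon\tau^{-1/3} \to 0$, the origin lies outside $\mathcal{U}_z(1)$ for all large $t$, so the construction (\ref{2trans6}) gives $M^{rhp}(z) = E(z)$ in a full neighborhood of $z = 0$ reachable from the $R^{(3)} = I$ region. Hence $M^{(3)}(z) = M^{(5)}(z) E(z)$ there.

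Next I would bring in the $\bar{\partial}$-control. By the expansion (\ref{2m5infty}) together with the estimate (\ref{2m51infty}), one has $M^{(5)}(z) = I + \mathcal{O}(t^{-1/2})$ uniformly near the origin, and in particular $M^{(5)}(0) = I + \mathcal{O}(t^{-1/2})$. Since (\ref{2e0}) shows $E(0) = I + \tau^{-1/3} M_1^\infty(s) + \mathcal{O}(t^{-2/3})$ is bounded, multiplying the two factors gives $M^{(3)}(z) = E(z) + \mathcal{O}(t^{-1/2})$; evaluating at $z = 0$ produces (\ref{2m30}). The estimate (\ref{2m51infty}) itself I would obtain exactly as in the $\mathcal{P}_{-1}$ case, splitting the area integral defining $M^{(5)}_1$ over $\Omega_2 \cap \{|z|\le 2\}$ and $\Omega_2 \cap \{|z|>2\}$ (and similarly on the remaining sectors), and bounding each piece by the Cauchy--Schwarz and H\"older inequalities using the phase estimates of Proposition \ref{reprop2}.

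The delicate point is the legitimacy of evaluating the identity $M^{(3)}(z) = M^{(5)}(z) E(z)$ precisely at $z = 0$: the origin is a common vertex of $\Omega_0, \Omega_3$ and their conjugates, so $R^{(3)}$ is a priori only prescribed on their boundaries. I would resolve this by checking that the relevant boundary values reduce $R^{(3)}$ to the identity at $z = 0$ --- here the normalization $r(\xi_0) = r(0) = 0$ and the vanishing prescriptions for $R_3, \overline{R}_3$ on $\Sigma_3, \overline{\Sigma}_3$ are what make this work --- and then passing to $z = 0$ along a ray lying in the $R^{(3)} = I$ region, invoking the continuity of $M^{(3)}$ across the origin. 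This last step is guaranteed because the spectral singularity at $z = 0$ was already removed in the transformation (\ref{2trans3}) leading to $M^{(3)}$, so $M^{(3)}(0)$ is a genuine finite boundary value and the limit is unambiguous.
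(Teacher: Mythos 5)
Your proposal is correct and follows essentially the same route as the paper: unwinding $M^{(4)}=M^{(3)}R^{(3)}=M^{(5)}M^{rhp}$ where $R^{(3)}(z)=I$ and $M^{rhp}(z)=E(z)$ outside $\mathcal{U}_z(1)$, invoking (\ref{2m5infty})--(\ref{2m51infty}) to replace $M^{(5)}$ by $I+\mathcal{O}(t^{-1/2})$, and evaluating at $z=0$. Your additional justification for the evaluation at the sector vertex $z=0$ (the extensions vanish there, and $M^{(3)}$ is continuous at the origin since the singularity was removed in (\ref{2trans3})) is a point the paper's proof passes over silently, but it does not change the argument.
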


\subsection{Proof  of Theorem  \ref{th}---Case II  }\label{thm-result2}
\hspace*{\parindent}

\begin{proof}
		Inverting  the sequence  of transformations (\ref{2trans1}), (\ref{2trans2}), (\ref{2trans3}) , (\ref{2trans4}), (\ref{2trans5}), (\ref{2trans6}),
	and especially taking $z\to \infty$ vertically   such that  $R^{(3)}(z)= G(z)=I$, the    solution of RH problem \ref{RHP0} is given by
	\begin{align}
		& 	M(z) = T(\infty)^{\sigma_3} (  I + z^{-1}\sigma_1 M^{(3)}(0)^{-1} )  M^{(5)}(z) E(z)T(z)^{-\sigma_3} + \mathcal{O}(e^{-ct}). \nonumber
	\end{align}
	Noticing that (\ref{2Tinfty}) and  (\ref{2m30}), further substituting asymptotic expansions  (\ref{2Tzinfty}),  (\ref{2trans7}) and (\ref{2m5infty}) into the above formula,
	then the reconstruction formula (\ref{sol}) yields
	\begin{align*}
		&q(x,t) 
		=T(\infty )^2 \left(  1+ \frac{1}{2} i \tau^{-\frac{1}{3}} \left( u(s)e^{i\varphi_0} + \int_s^\infty u^2(\zeta)  \mathrm{d}\zeta \right)  \right) + \mathcal{O}\left( t^{-\frac{1}{2}}\right),
	\end{align*}
	which gives the result (\ref{q2}) in Theorem \ref{th}.

\end{proof}

\appendix

\section{A model  for $\mathcal{P}_{-1}(x,t)$ and $\mathcal{P}_{+1}(x,t)$} \label{appx}
\hspace*{\parindent}

 The long-time asymptotics in the region $\mathcal{P}_{\mp 1}(x,t)$ are related to the solution $M^{\infty}( k)$ of the following RH problem.

Let $k\in \mathbb{C}$ be a complex variable,  $k_1, k_2\in \mathbb{R}$ and $r_0 \in \mathbb{C}$  be   fixed constants,  $s$ be a parameter,
and  $\Sigma_j, \ j=1,\cdots,6$,  be rays  passing through points $k_1, k_2$ at a fixed angle  $\varphi$  with the real axis, see   Figure \ref{FIG635}.
Denote $\Sigma^{\infty}=\cup_{j=1}^6\Sigma_j$.
We consider   the  following   model RH problem.
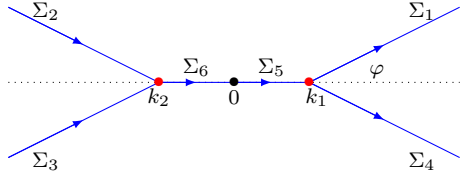
\begin{figure}
	\begin{center}
		\begin{tikzpicture}[scale=1]
		\draw [dotted ](-6.5,0)--(-0.5,0);
			\draw [blue ](-4.5,0)--(-2.5,0);
			\draw [blue, -latex](-4.5,0)--(-4,0);
			\draw [blue, -latex](-3.5,0)--(-3,0);
			\draw [ blue, ](-4.5,0)--(-6.5,1);
			\draw [blue,-latex ] (-6.5,1)--(-11/2,1/2);
			\draw [ blue, ](-4.5,0)--(-6.5,-1);
			\draw [blue,-latex ] (-6.5,-1)--(-11/2,-1/2);
			\draw [blue,  ](-2.5,0)--(-0.5, 1 );
			\draw [blue,-latex ] (-2.5,0)--(-1.5, 1/2);
			\draw[ blue, ](-2.5,0)--(-0.5,-1);
			\draw [blue,-latex ] (-2.5,0)--(-1.5,-1/2);
			\node  [below]  at (-1,1.2) {\scriptsize $\Sigma_1$};
			\node  [below]  at (-6,1.2) {\scriptsize $\Sigma_2$};
			\node  [below]  at (-6,-0.8) {\scriptsize $\Sigma_3$};
			\node  [below]  at (-1,-0.8) {\scriptsize $\Sigma_4$};
			\node[red]    at (-4.5,0)  {\scriptsize $\bullet$};
			\node[red]    at (-2.5,0)  {\scriptsize $\bullet$};
			\node    at (-3.5,0)  {\scriptsize $\bullet$};
			\node    at (-4.5,-0.2)  {\scriptsize $k_2$};
			\node    at (-2.4,-0.2)  {\scriptsize $k_1$};
				\node    at (-3.5,-0.2)  {\scriptsize $0$};
			\node    at (-3, 0.2)  {\scriptsize $\Sigma_5$};
             \node    at (-4, 0.2)  {\scriptsize $\Sigma_6$};
		\node    at (- 1.6, 0.15)  {\scriptsize $\varphi$};
		\end{tikzpicture}
	\end{center}
	\caption{\footnotesize The jump contours  $\Sigma^{\infty}$.}
	\label{FIG635}
\end{figure}

\begin{prob}\label{2minfty}
	Find    $M^{\infty}( k)=M^{\infty}(k;s )$   with properties
	\begin{itemize}
		\item  $M^{\infty}(k)$ is analytical in $ \mathbb{C} \setminus \Sigma^{\infty}$.
		\item  $M^{\infty}(k)$  satisfies the jump condition\begin{equation*}
			M^{\infty}_+( k)=M^{\infty}_-(k)V^{\infty}(k),
		\end{equation*}
		where
		\begin{align}\nonumber
			V^{\infty}(k)= \begin{cases}
				 \left( \begin{array}{cc}
					1& 0\\
					r_0   e^{ 2i \left(\frac{4}{3}  k^3 +  sk \right) }   & 1
				\end{array}\right):=B_+,\quad k\in  {\Sigma}_{1}\cup  {\Sigma}_{2},\\
				 \left(	\begin{array}{cc}
					1&   -\overline{r_0} e^{- 2i \left(\frac{4}{3}  k^3 +  sk \right) }  \\
					0 & 1
				\end{array}\right) :=B_-^{-1},\quad k \in  {\Sigma}_{3} \cup  {\Sigma}_{4},\\
				B_-^{-1} B_+,\quad k \in {\Sigma}_{5}\cup {\Sigma}_{6}.
			\end{cases}
		\end{align}
		\item   $M^{\infty}( k)=I+\mathcal{O}(k ^{-1}),	\quad k \to  \infty.$

	\end{itemize}
\end{prob}

The above RH problem can be transformed into a standard Painlev\'{e} \uppercase\expandafter{\romannumeral2} model
via an  appropriate equivalent  deformation.
For this purpose, we add four new auxiliary paths $L_j, j=1,2,3,4$,  passing through the point $k=0$
at  the angle  $\pi/3$  with real axis,
which company with the original paths $\Sigma^\infty$
 divide the complex plane into eight regions $\Omega_j, j=1,\cdots,8$. See Figure \ref{desc45}.

We further define
\begin{align}
&P(k) =\left\{\begin{matrix}
 B_+  ^{-1},\  \ & k\in \Omega_2\cup\Omega_4,\cr
 B_-^{-1},\  \ & k\in \Omega_6\cup\Omega_8,\cr
I,\  \ & k \in \Omega_1\cup\Omega_3\cup\Omega_5\cup\Omega_7,
\end{matrix}\right. \nonumber
\end{align}
and  make a transformation
\begin{align}
&\widehat M^{P} (k)=M^{\infty} (k)P(k),\label{pre1}
\end{align}
then we obtain a  Painlev\'e model.

\begin{prob}\label{mp2}
    Find   $ \widehat  M^{P} ( k)=\widehat M^{P} (k;s )$ with properties
    \begin{itemize}
        \item  $\widehat M^{P}_+( k)$ is analytical in $\mathbb{C}\setminus  \widehat \Sigma^{P} $, where $\widehat \Sigma^{P}=\cup_{j=1}^4 \left\{L_j  = e^{i (j-1)\pi/3 } \mathbb{R}_+ \right\}. $ See  Figure  \ref{desc45}.
        \item  $\widehat M^{P}_+( k)$ satisfies the  jump condition \begin{equation*}
           \widehat M^{P}_+( k)=\widehat M^{P}_-(k) \widehat V^{P}(k), \ \ k\in \widehat \Sigma^{P},
        \end{equation*}
        where
       \begin{align}\label{vp}
       \widehat V^{P}(k)= \begin{cases}
        \left( \begin{array}{cc}
       		1& 0\\
       	r_0 e^{2i \left(\frac{4}{3}  k^3 +  sk \right) }	& 1
       	\end{array}\right),\quad k\in L_{1}\cup L_{2},\\
         \left(	\begin{array}{cc}
       			1&  -\overline{r_0} e^{-2i \left(\frac{4}{3}  k^3 +  sk \right)  }   \\
       			0 & 1
       		\end{array}\right),\quad k \in L_{3} \cup L_{4}.
       	\end{cases}
       \end{align}
        \item   $\widehat M^{P}( k)=I+\mathcal{O}(k ^{-1}),	\quad k \to  \infty. $

\end{itemize}
    \end{prob}

\begin{figure}
\begin{center}
\begin{tikzpicture}\label{Fig3}
		\draw [dotted,-latex ](-6.8,0)--(-0.3,0);
			\draw [blue ](-4.5,0)--(-2.5,0);
			\draw [blue, -latex](-4.5,0)--(-4,0);
			\draw [blue, -latex](-3.5,0)--(-3,0);
			\draw [ blue, ](-4.5,0)--(-6.5,1);
			\draw [blue,-latex ] (-6.5,1)--(-11/2,1/2);
			\draw [ blue, ](-4.5,0)--(-6.5,-1);
			\draw [blue,-latex ] (-6.5,-1)--(-11/2,-1/2);
			\draw [blue,  ](-2.5,0)--(-0.5, 1 );
			\draw [blue,-latex ] (-2.5,0)--(-1.5, 1/2);
			\draw[ blue, ](-2.5,0)--(-0.5,-1);
			\draw [blue,-latex ] (-2.5,0)--(-1.5,-1/2);
			\node[red]    at (-4.5,0)  {\scriptsize $\bullet$};
			\node[red]    at (-2.5,0)  {\scriptsize $\bullet$};
			\node    at (-3.5,0)  {\scriptsize $\bullet$};
			\node    at (-4.5,-0.2)  {\scriptsize $k_2$};
			\node    at (-2.4,-0.2)  {\scriptsize $k_1$};
				\node    at (-3.5,-0.2)  {\scriptsize $0$};
			\node    at (-3, 0.2)  {\scriptsize $\pi/3$};
		\node    at (- 1.6, 0.15)  {\scriptsize $\varphi$};

 \draw[dotted,thick](-4.5,-1.5)--(-2.5, 1.5 );
 \draw[dotted,thick, -latex  ](-4.5,-1.5)--(-4, -0.75 );
 \draw[dotted,thick, -latex  ](-3.5,0)--(-3, 0.75 );
\draw[dotted,thick](-4.5,1.5)--(-2.5,-1.5 );
  \draw[dotted,thick, -latex  ](-4.5, 1.5)--(-4, 0.75 );
   \draw[dotted,thick, -latex   ](-3.5,0)--(-3, -0.75  );

  \node    at (-1, -0.2)  {\scriptsize $\Omega_1$};
    \node    at (-2.1, 0.7)  {\scriptsize $\Omega_2$};
 \node    at (-3.5,1)  {\scriptsize $\Omega_3$};
 \node    at (-5, 0.7)  {\scriptsize $\Omega_4$};
 \node    at (-6, -0.2)  {\scriptsize $\Omega_5$};
  \node    at (-5,-0.7)  {\scriptsize $\Omega_6$};
  \node    at (-3.5, -1)  {\scriptsize $\Omega_7$};
    \node    at (-2.1, -0.7)  {\scriptsize $\Omega_8$};

\end{tikzpicture}
\end{center}
\caption {\footnotesize  Add four new auxiliary lines on  the jump contour  for  $M^{\infty} (k)$, which then can be deformed
into the  Painlev\'e model $\widehat  M^{P}(k)$  with  the jump contour  in   four dotted rays.}
\label{desc45}
\end{figure}
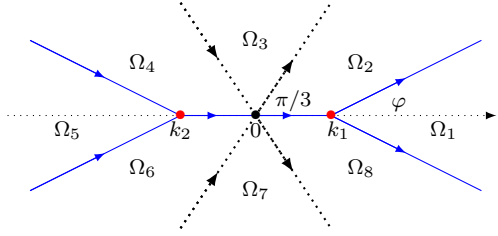

If  the parameter $r_0$  is  not real-valued, the solution to the RH problem \ref{mp2} is related to the Painlev\'{e} \uppercase\expandafter{\romannumeral34} equation. However, looking closely at the jump matrix, we find that the RH problem   \ref{mp2} can be reduced into that usually associated with  the Painlev\'{e} \uppercase\expandafter{\romannumeral2} equation by a gauge transformation.
Let $\varphi_0=\arg r_0$, so  $r_0 = |r_0 | e^{i\varphi_0}$. Following the idea  \cite{Fokas1}, we make the following transformation
\begin{equation}
    {M}^{P}(k) =e^{i \left( \frac{ \varphi_0}{2} -\frac{\pi}{4}\right)\widehat\sigma_3} \widehat {M}^{P}( k),\label{pre2}
\end{equation}
then ${M}^{P}(k)$ satisfies the RH problem.
\begin{prob}\label{mpain2}
    Find   $    M^{P} ( k)=  M^{P} (k;s )$ with properties
    \begin{itemize}
        \item  $M^{P} ( k)$ is analytical  in $\mathbb{C}\setminus   \Sigma^{P} $, where $\Sigma^P = \bigcup_{j=1}^4 L_j$, see Figure \ref{Sixrays}.
        \item  $M^{P} ( k)$ satisfies the  jump condition \begin{equation*}
             M^{P}_+( k)=  M^{P}_-(k)  V^{P}(k), \ \ k\in   \Sigma^{P},
        \end{equation*}
        where
            \begin{align}\label{vp2}
       	 {V}^{P}( k)= \begin{cases}
        e^{ -i \left(\frac{4}{3}  k^3 + sk\right)\widehat{\sigma}_3  } \left( \begin{array}{cc}
       		1& 0\\
       		i |r_0| & 1
       	\end{array}\right),\quad k \in L_{1},\\[8pt]
       e^{ -i \left(\frac{4}{3}  k^3 + sk\right)\widehat{\sigma}_3  } \left( \begin{array}{cc}
       		1& 0\\
       		-i |r_0| & 1
       	\end{array}\right),\quad k \in   L_{2},\\[8pt]
       	 e^{ -i \left(\frac{4}{3}  k^3 + sk\right)\widehat{\sigma}_3  } \left(	\begin{array}{cc}
       			1&  i |\overline{ r_0} | \\
       			0 & 1
       		\end{array}\right) ,\quad k \in L_{3},\\
       e^{ -i \left(\frac{4}{3}  k^3 + sk\right)\widehat{\sigma}_3  } \left(	\begin{array}{cc}
       			1&  -i |\overline{ r_0} | \\
       			0 & 1
       		\end{array}\right) ,\quad k \in  L_{4}.	
       	\end{cases}
       \end{align}
        \item   $ M^{P}( k)=I+\mathcal{O}(k ^{-1}),	\quad k \to  \infty. $

\end{itemize}
\end{prob}
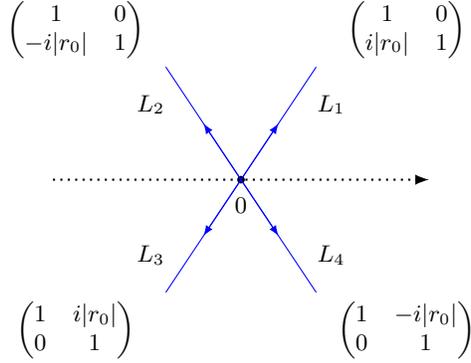
\begin{figure}
	\begin{center}
		\begin{tikzpicture}[scale=1]

			\node[shape=circle,fill=black,scale=0.15] at (0,0) {0};
			\node[below] at (0,-0.1) {\footnotesize $0$};
			\draw[dotted,thick,-latex] (-2.5,0 )--(2.5,0);
			\draw [blue] (-1,-1.5 )--(1,1.5);
			\draw [blue,-latex] (0,0)--(0.5,0.75);
			\draw [blue,-latex] (0,0)--(0.5,-0.75);
			\draw [blue] (-1,1.5 )--(1,-1.5);
			\draw [blue,-latex] (0,0)--(-0.5,0.75);
			\draw [blue, -latex] (0,0)--(-0.5,-0.75);

			\node at (1.2,1) {\footnotesize$L_1$};
			\node at (-1.2,-1) {\footnotesize$L_3$};
			\node at (-1.2,1) {\footnotesize$L_2$};
			\node at (1.2,-1) {\footnotesize$L_4$};
			
			\node  at (2.2,2) {\footnotesize $\begin{pmatrix} 1 & 0 \\ i|r_0| & 1 \end{pmatrix}$};
			\node  at (-2.2,2) {\footnotesize $\begin{pmatrix} 1 & 0\\ -i|r_0|& 1 \end{pmatrix}$};
			\node  at (2.2,-2) {\footnotesize $\begin{pmatrix} 1 &  -i|r_0| \\ 0& 1 \end{pmatrix}$};
			\node  at (-2.2,-2) {\footnotesize $\begin{pmatrix} 1 &  i|r_0| \\ 0 & 1 \end{pmatrix}$};
			
		\end{tikzpicture}
		\caption{ \footnotesize { The jump contours of $M^P(k)$.}}
		\label{Sixrays}
	\end{center}
\end{figure}
This  RH problem model   \ref{mpain2} is exactly a special
case of the Painlev\'{e} \uppercase\expandafter{\romannumeral2} model.
 Therefore the solution of the RH problem model   \ref{mpain2}
is given by
\begin{equation}
   {M}^{P}( k) = I + \frac{  {M}_1^{P}(s)}{k}+\mathcal{O}\left(k^{-2}\right), \label{eue2}
\end{equation}
where $ {M}_1^{P}(s)$ is given by
\begin{align}\label{posee}
M_1^P(s) = \frac{1}{2} \begin{pmatrix} -i\int_s^\infty u(\zeta)^2\mathrm{d}\zeta & u(s) \\ u(s) & i\int_s^\infty u(\zeta)^2\mathrm{d}\zeta \end{pmatrix},
\end{align}
and for each $C_1 > 0$,
\begin{align}\label{mPbounded}
	\sup_{z \in \mathbb{C}\setminus \Sigma^P} \sup_{s \geq -C_1} |M^P(k,s)|  < \infty.
\end{align}
And  $u(s)$ is a solution of the Painlev\'{e} \uppercase\expandafter{\romannumeral2} equation
\begin{align}
	u_{ss} = 2u^3 +su, \quad s \in \mathbb{R},
\end{align}
With  transformations  (\ref{pre1}) and (\ref{pre2}),  expanding $M^\infty (k)$ along the  region $\Omega_3$ or $\Omega_7$ yields
\begin{align}
M^{\infty}(k) = I + \frac{ M_1^{\infty} ( s)}{k}+ \mathcal{O} \left(k^{-2 }\right),\label{eegs6}
\end{align}
where
 \begin{align}
M_1^\infty(s)
= - \frac{i}{2} \begin{pmatrix}
 \int_s^\infty u(\zeta)^2d\zeta & -e^{-i\varphi_0 }u(s) \\ e^{i\varphi_0 }u(s) & -\int_s^\infty u(\zeta)^2d\zeta
 \end{pmatrix}. \label{eegs}
\end{align}

\vspace{2mm}

\noindent\textbf{Acknowledgements}

This work is supported by  the National Science
Foundation of China (Grant No. 12271104,  51879045).\vspace{2mm}

    \noindent\textbf{Data Availability Statements}

    The data which supports the findings of this study is available within the article.\vspace{2mm}

    \noindent{\bf Conflict of Interest}

    The authors have no conflicts to disclose.

\hspace*{\parindent}
\\

\end{document}